\newtheorem{theorem}{Theorem}
\newtheorem{lemma}{Lemma}
\newtheorem{definition}{Definition}
\newtheorem{proposition}{Proposition}
\newcommand{\commentout}[1]{}
\newcommand{\longvec}{\overrightarrow}
\newcommand{\VSS}{\mathrm{VSS}}
\newcommand{\medProtocol}{\vec{\tau}^f + \tau^f_d}
\def\Sch{\mathit{sch}}
\def\Snd{\mathit{snd}}
\def\Rec{\mathit{rec}}
\def\Done{\mathit{done}}
\def\Comp{\mathit{comp}}
\def\Out{\mathit{out}}
\def\BCGsch{xxyyzz }
\newcommand{\proceed}{\mathit{proceed}}
\renewcommand{\next}{\mathit{next}}
\begin{document}


\title{Security in Asynchronous Interactive Systems}
\author{Ivan Geffner\thanks{Supported in part by NSF grant IIS-1703846.}\\ ieg8@cornell.edu\\ Cornell University\and 
Joseph Y. Halpern\thanks{Supported in part by NSF grants IIS-1703846 and IIS-1718108,
ARO grant W911NF-17-1-0592, and a grant 
from Open Philanthropy.}\\  halpern@cs.cornell.edu \\ Cornell University}
\maketitle

\begin{abstract}
Secure function computation has been thoroughly studied and optimized
in the past decades. We extend techniques used for
secure computation to simulate arbitrary protocols involving a
mediator. The key feature of our notion of simulation is that it is 
bidirectional: not only does the simulation produce only outputs that
could happen in the original protocol, but the simulation produces
all such outputs. In a synchronous system, it can be shown
that this requirement can already be achieved by the standard notion
of secure computation. However, in an
asynchronous system, new subtleties arise because the scheduler can
influence the output.  We provide a construction that
is secure if  $n > 4t$, where $t$ is the number malicious agents,
which is provably the best possible.  We also 
show that our construction satisfies 
additional security properties even if $3t <  n \le 4t$.
\end{abstract}

\section{Introduction}\label{sec:intro}
In a distributed system, agents often want to be able to carry out
a computation without revealing any private information.  There has
been a great deal of work showing how and to what extent this can be
done.  We briefly review the most relevant work here.

Ben-Or, Goldwasser and Widgerson 
\citeyear{bgw} 
(BGW from
now on) 
showed that, if $n > 3t$, then every function $f$ of $n$ inputs can be
securely computed by $n$ 
agents in a synchronous system with private communication channels, 
where ``securely computed'' means that no coalition of at 
most $t$ malicious agents can either
(a) prevent the honest agents from correctly computing the output of $f$ given
their inputs (assuming some fixed inputs for malicious agents who do
not provide inputs) or (b) learn anything
about the inputs of the honest agents (beyond what can be concluded
from the output of $f$).
The notion of an agent  ``not learning anything'' is formalized by
comparing what happens in the actual computation to
what could have happened had there been a trusted third party (which
we here call a \emph{mediator}) who will calculate $f(x_1, \ldots,
x_n)$ after being given the input $x_i$ by agent $i$, for $i = 1,
\ldots, n$.  Then, roughly speaking, the malicious agents do not learn
anything if the distribution of outputs in the actual computation
could have also resulted in the computation with a mediator if the
malicious agents had given the appropriate input to the mediator.  


Ben-Or, Canetti and Goldreich 
\citeyear{BCG93}
 (BCG from now on) proved
analogous results 
in the asynchronous case.
Asynchrony raises new subtleties.  For example, agent $i$ cannot
tell if the fact that he has received no messages
from another agent $j$ (which means that $i$ cannot use $j$'s
input in computing $f$) is due to the fact that $j$ is malicious or
that its messages have not yet arrived.
Roughly speaking, when defining secure function computation in an
asynchronous setting,
BCG require that for every scheduler $\sigma_e$ and set $T$ of
malicious agents, no matter what the agents in $T$ do, the resulting
distribution over outputs could have also resulted in the computation
with a mediator if the malicious agents had given the appropriate
input to the mediator.   

BCG show that, in asynchronous systems, if $n > 4t$, the
malicious agents cannot prevent the honest agents from correctly
computing the output 
of $f$ given their inputs, nor can the malicious agents learn anything
about the inputs of the honest agents.
Ben-Or, Kelmer and Rabin 
\citeyear {BKR94}
 (BKR from now on) then showed 
 if we are willing to tolerate a small probability $\epsilon
> 0$
that the agents do not correctly compute $f$ or that the malicious
agents learn something, then  we can achieve this if $n > 3t$.  
BCG and BKR also prove matching lower bounds for their  results,
showing that we really need to have $n > 4t$ (resp., $n  > 3t$).

We can view secure function computation as a one-round interaction
with a trusted mediator: each agent sends its input to the mediator,
the mediator waits until it receives enough inputs, applies $f$ to
these inputs (again, replacing missing inputs with a default
value), and sends the output back to the agents, who then output it.
We generalize BCG and BKR's results for
function computation to a more general setting. Specifically, we want to 
simulate arbitrary interactions with a mediator, not just function
computation.   
Also, unlike previous approaches, we want the simulation to be
``bidirectional'': the set of possible output distributions that arise
with the mediator must be the same as those that arise without 
the mediator, even in the presence of malicious
parties. 
More precisely,
we show that, given a protocol
$\vec{\pi}$ for $n$ agents and a protocol $\pi_d$ for a mediator,
we can construct a protocol $\vec{\pi}'$ such that
for all sets $T$ of fewer than $n/4$ malicious agents,
the following properties hold:
\begin{itemize}
\item [(a)]
   For all protocols $\vec{\tau}_T'$ for the 
malicious agents and all schedulers
$\sigma_e'$ in the setting without the mediator, there
exists a protocol $\vec{\tau}_T$ for the agents in $T$ and a
scheduler $\sigma_e$ in the setting with the mediator 
such that, for all input profiles $\vec{x}$, the output distribution in the
computation with 
$\vec{\pi}'$, $\vec{\tau}'$, and $\sigma_e'$ with input $\vec{x}$ is
the same as the output 
distribution with $\vec{\pi} + \pi_d$, $\vec{\tau}$, and $\sigma_e$
with input $\vec{x}$.
\item [(b)]
  For all protocols $\vec{\tau}_T$ for the 
  malicious agents and all schedulers
$\sigma_e$ in the setting with the mediator, there
  exists a protocol $\vec{\tau}'_T$ for the agents in $T$ and a
scheduler $\sigma_e'$ in the setting without a mediator
such that, for all input profiles $\vec{x}$, the output distribution in the
computation with 
$\vec{\pi}'$, $\vec{\tau}'$, and $\sigma_e'$ with input $\vec{x}$ is
the same as the output 
distribution with $\vec{\pi} + \pi_d$, $\vec{\tau}$, and $\sigma_e$
with input $\vec{x}$.
\end{itemize}
\commentout{
 a
all strategy profiles $\vec{\tau}_T'$ for the 
malicious agents, all input profiles $\vec{x}$, and all schedulers
$\sigma_e'$, there
exists a strategy profile $\vec{\tau}_T$ for the agents in $T$ and a
scheduler $\sigma_e$ in the game with the mediator 
such that the output distribution in the computation with
$\vec{\pi}'$, $\vec{\tau}'$, and $\sigma_e'$ and input $\vec{x}$ is
the same as the output 
distribution with $\vec{\pi}$, $\pi_d$, $\vec{\tau}$, and $\sigma_e$
and input $\vec{x}$ in the game with the mediator.
}
This result implies that arbitrary distributed
protocols that work in the presence of a trusted mediator can be
compiled to protocols that work without a mediator, as 
long as there are less than $n/4$ malicious agents.
And, just as BKR, if we allow a probability $\epsilon$ of error, we
can get this result while tolerating up to $n/3$ malicious agents.
BCG proved the analogue of 
(a) for secure function computation,
 which is enough for security purposes:
if there is any bad behavior in the protocol without the mediator,
this bad behavior must already exist in the protocol with the mediator.
However, 
(b)
 also seems like a natural requirement; if a protocol
satisfies this property, then all behaviors in the protocol with the
mediator also  occur in the protocol without the mediator.

Clearly, the results of BCG and BKR are special cases of our result.
\commentout{
In the synchronous case, the results do follow from those of 
BGW, 
 since
our problem can be reduced to secure computation as follows. 
As usual, we assume that the input, message, and output space are
finite; without loss of generality, they can be taken to be a subset
of $\mathbb{F}_q$ (the field of
integers mod $q$) for a sufficiently large
prime number $q$. 
Let $f$
be the function that
takes as input a protocol profile $\vec{\pi}$ and an input profile
$\vec{x}$ and returns  
the output profile $\vec{y}$ that results 
if the agents and mediator use $\vec{\pi} + \pi_d$ with input profile $\vec{x}$.
First suppose that $\vec{\pi} + \pi_d$ is deterministic.
Given input $x_i$, agent $i$
generates uniformly at random a one-time pad $p_i$ and performs a
secure computation of $f'$ with input 
$(\pi_i, x_i, p_i)$, where 
$f'((\pi_1, x_1, p_1), \ldots, (\pi_n, x_n, p_n)) = 
(f_1(\vec{\pi}, \vec{x}) + p_1, \ldots, f_n(\vec{\pi}, \vec{x}) + p_n)$.
If the output of agent $i$ for the secure
computation instance is $y_i$, then $i$ outputs 
$y-p_i$.
If $\vec{\pi} + \pi_d$ uses randomization, then an analogous argument
works; we simply provide $f'$ with a random string as input, sampled according
to the distribution used by $\vec{\pi} + \pi_d$.
$y_i - p_i$. 
A similar argument shows that this result would follow from BCG in the
asynchronous case if we required only property 
(a).
}
However, 
in general,
our results
do not follow from those of BCG/BKR,
as is shown in Section~\ref{sec:secure-computation-and-mediators}.
Specifically, the results of BCG/BKR do not give us property 
(b),
since the outcome can depend on the behavior of the scheduler.
For example, consider a protocol for two agents and a mediator $m$ in
which each 
agent sends its input to the mediator, the mediator $m$ sends to
each agent the first message it receives, and each agent outputs
whatever they receive from the mediator. 
Let $\sigma_e^i$ be the scheduler that delivers the message from
agent $i$ first, for $i=1,2$.
It is easy to check
that if the agents have inputs $0$ and $1$, respectively, and play with
mediator $\sigma_e^1$, then they both output $0$, while if they play
with $\sigma_e^2$, then they both output $1$. This means that, unlike
secure function computation, even if all the
agents are honest,
the distribution over the
agents' outputs can depend on the scheduler's protocol, not just the
agents' inputs.  
\commentout{
Let $\sigma_e^p$ be a
scheduler that delivers agent $1$'s message first with probability $p$.
It is easy to check that if two agents have inputs 0 and 1,
respectively, and play with mediator $m$ and schedule $\sigma_e^p$,
they output $0$ with probability $p$ and $1$ with probability
$q$. This means that there are an uncountable number of possible
output distributions, depending on the scheduler's protocol. However,
combining a finite number of instances of secure function computation
can generate at most a finite number of distributions.
It follow that we cannot view this particular protocol as an instance
of secure function computation, even if there are no malicious
agents.
}

\commentout{
If we only require one way simulation, which is that the set of
outputs of $\vec{\pi}'$ is a subset of the set of outputs of
$\vec{\pi} + \pi_d$, the problem can be also reduced to $n$ instances
of secure function computation since when fixing a scheduler
$\sigma_e$, the output of each agent $i$ given the agents' input
profile $\vec{x}$ with scheduler $\sigma_e$ is also a well-defined
function. 
}

\commentout{
In this work, we generalize BCG's result and show that all protocols
$\vec{\pi} + \pi_d$ that involve a trusted mediator and $n > 4t$ agents
with private communication channels can be \emph{simulated} by a
protocol $\vec{\pi}'$ with only $n$ agents, even in the presence of
coalitions of at most $t$ malicious agents. Simulation means
that, for every input, the output of the agents is identically
distributed in $\vec{\pi} + \pi_d$ and $\vec{\pi}'$. The main
difference with previous approaches is that simulation goes in both
ways: we not only require that for each coalition of malicious agents
in $\vec{\pi}'$ the output could be realized also in $\vec{\pi} +
\pi_d$ (possibly with a different protocol for the malicious agents),
but also that for each coalition in $\vec{\pi} + \pi_d$ the output
could be realized in $\vec{\pi}'$. 

This property turns out to be critical. In fact, if we only require one way simulation (that all outputs in $\vec{\pi}'$ are also realizable in $\vec{\pi} + \pi_d$), the problem can be reduced to $n$ instances of secure computation, for example by securely computing functions $f_i$ that given a set of inputs $x_1, \ldots, x_n$ computes the output of agent $i$ when playing $\vec{\pi} + \pi_d$ with some fixed scheduler $\sigma_e$ and input profile $\vec{x}$.  

Even though the notion of simulation does not carry any intrinsic
security properties, it guarantees that the simulating protocol
inherits all these properties from the simulated one. For example, if
in some protocol $\vec{\pi}$ the inputs of the agents remain secret,
so are they in $\vec{\pi}$'s simulation. In particular, since securely
computing a function $f$ can be easily done with a trusted mediator
(each agent sends its input to the mediator, then the mediator
computes the image of those inputs and sends the output to each
agent), BCG's main theorem is a direct consequence of this result. 

Note, however, that mediator simulation is strictly stronger than
secure computation. For instance consider a protocol in which each
agent sends its input to the mediator, and the mediator sends the
first message it receives to each agent, which they later set as their
output. Since the mediator's message depends not only in the inputs of
the agents but also on the scheduler's strategy, it can be shown that
this protocol cannot be reduced to any number of instances of secure
computation 
}

Even though
our results do not follow from those of BCG/BKR, our
proofs very much follow the lines of those of BCG/BKR. However, there
are some new subtleties that arise in our setting. 
\commentout{
In particular, to ensure that in their simulation all possible
schedulers in the mediator game are being taken into consideration,
agents must use the information received from the actual scheduler
(for example, when are they being scheduled and the order in which
messages are being delivered) as feedback for how the scheduler would
act in the mediator game. Thus, besides their own inputs, agents
might use additional data about the current scheduler's behavior in
their computations. 
}
In particular, as the example above shows, when we try to implement
the setting with the mediator, the agents must somehow keep track
of the scheduler's possible behaviors.  Doing this adds nontrivial
complexity to our argument.


\commentout{
Despite the fact that 
this notion of 
 simulation can only be guaranteed for coalitions of at most $t < n/4$ 
malicious agents, our implementation satisfies additional security
properties for $t < n/3$. Namely it can be shown that the only way a
coalition of size $t < n/3$ Byzantine agents can disrupt the
computation is by making honest agents not to terminate. This means
that if an honest agent terminates, its output is guaranteed to be
correct. 
Also, in the process of proving our results, we isolate an additional property of interest called \emph{cotermination}, which characterizes how many honest agents may terminate if $t < n/3$.
\commentout{
We also provide results about how many honest agents might not terminate under these circumstances. 
}
These properties were also satisfied by BCG's implementation, although
it is not proven in their work. 
}

\commentout{
In order to prove our main result, we need to show that our
protocol without the mediator has some additional
properties, which are of independent interest. 
}
Besides the main result, we also show that our protocol without the
mediator has two additional security properties, which may be of
independent interest. 
 Specifically, we show that the
 following two properties hold for coalitions of malicious agents of size at
 most $t < n/3$.  
\begin{itemize}
\item [(P1)]
 The only way malicious agents can disrupt the computation
  is by preventing honest agents from terminating; if an
  honest agent terminates, then its output is correct. 
\item [(P2)]
 If $2t+1$ or more honest agents terminate, then all honest
  agents terminate.
  That is, either all the honest agents terminate or a nontrivial
  number of honest agents (more than $n-2t$) do not terminate.
\end{itemize}
If we allow an $\epsilon$ probability of error, we get analogous
results if we have $n > 2t$ rather than $n > 3t$.
We remark that these two properties 
 are in fact also satisfied by BCG's and BKR's
 implementations, but they do not prove this (or even state the
 properties explicitly).


 Our interest in these properties stems in part from 
a  game-theoretic variant of the problem that we consider 
a companion paper 
\cite{ADGH19}
 where agents get utility for
various outcomes, and, in addition to honest and malicious agents,
there are \emph{rational agents}, who will deviate from a protocol if
(and only if) it is to their benefit to do so.  We also assume that 
honest agents can leave ``wills'', so that if sufficiently many
honest agents do not terminate, the remaining agents will be
punished.  The second property above guarantees that either all the honest
agents terminate, or sufficiently many of them do not terminate to
guarantee that rational agents will not try to prevent honest agents
from terminating (due to the threat of punishment).   The first property
above guarantees that if all   the honest agents terminate, their
output will be correct.  Thus, using these results allows us to obtain
results stronger than those of this paper in the game-theoretic setting. 

The focus of this paper is on upper bounds.  Since our algorithms have
the same upper bounds as those of BCG and BKR, despite the results of
BCG and BKR being special cases of our results, and BCG and BKR prove 
lower bounds that match their upper bonds on the number of malicious
agents that can be tolerated, we immediately get lower bounds that
match our upper bounds from the results   of BCG and BKR.

\section{The Model}\label{sec:model}

The model used throughout this paper is that of an asynchronous
network in which every pair of agents can communicate through a
private and reliable communication channel.
For most of our results, we assume that all messages sent through any
of these channels are eventually received, but they can be delayed
arbitrarily. The order in which these messages are 
received is determined by the 
\emph{environment} (also called the \emph{scheduler}). The scheduler also
chooses the order in which the agents are scheduled.
For some of the results of this paper, we drop the condition that all
messages must be eventually delivered.  We call these more general schedulers
\emph{relaxed schedulers}.

Whenever a
agent is scheduled, it reads all the messages that it has received since
the last time it was scheduled,
sends a (possibly empty) sequence of messages, and then performs some
internal actions. 
We assume that the scheduler does not deliver any
message or schedule other agents 
during an agent's turn.
Thus, although agent $i$ does not send all its messages simultaneously
when it is scheduled, they are sent atomically, in the sense that
no other agent is scheduled while $i$ is scheduled, nor are any
messages delivered while $i$ is scheduled.
Note that the atomicity assumption is really a constraint on the
scheduler's protocol.

More precisely,
consider the following types of \emph{events}: 
\begin{itemize}
\item $\Sch(i)$: Agent $i$ gets scheduled.
  \item $\Snd(\mu, j,i)$: Agent $i$ sends a message $\mu$ to agent $j$.

\item $\Rec(\mu,j,i)$: Message $\mu$ sent by $j$ is received by $i$. The
message $\mu$ must be one sent at an earlier time to $i$ that 
was not already received.
\item $\Comp(v,i)$: Agent $i$ locally computes value $v$.
  \item $\Out(s,i)$: Agent $i$ outputs string $s$.
  \item $\Done(i)$: $i$ is done sending messages and performing
    computations (for now).
\end{itemize}
For simplicity, we assume that agents can output only strings in
$\{0,1\}^*$. Note that all countable sets can be encoded by such
strings, and thus we can freely talk about players being able to
output any element of any countable set (for instance, elements of a
finite field $\mathbb{F}_q$) by assuming that they are actually
outputting an encoding of these elements.
We
also 
 assume that at most one event occurs at each time step.
Let 
$h(m)$ 
denote a \emph{global history} up to time $m$: a sequence that starts
with an input profile $\vec{x}$, followed by
 the ordered sequence of events that have occurred
up to and including time $m$.  We assume that the only events between
events of the form $\Sch(i)$ and $\Done(i)$ are ones of the form
$\Snd(\mu,j,i)$ and $\Comp(v,i)$.  This captures our atomicity
assumption.  We do not 
include explicit events that correspond to reading messages. 
(Nothing would change if we included
them; they would simply clutter the notation.)  Message delivery
(which is assumed to be under the control of the scheduler)
occurs at times between when agents are scheduled.
We can also consider the subsequence
involving 
agent $i$, namely, $i$'s initial state, followed by events of the form
$\Sch(i)$, $\Snd(\cdot, \cdot, 
i)$, $\Comp(\cdot, i)$, $\Rec(\cdot, \cdot, i)$, and $\Done(i)$. This
subsequence is called $i$'s \emph{local history}. 
We drop the argument $m$ if can be deduced from context or if it is
not relevant (for instance, when we consider the local history of
an agent after a particular event). 

Agent $i$ moves only after a $\Sch(i)$ event.  What it does (in
particular, the order in which $i$ sends messages) is determined by
$i$'s protocol, which is a function of $i$'s local history.
The scheduler moves after an action of the form $\Done(i)$ or
$\Rec(\cdot,\cdot,i)$.  It is convenient to assume that the
scheduler is also running a protocol, which is also a function of its local
history.   Since the scheduler does not see the contents of messages,
we can take its history to be identical to 
$h(m)$,
 except that the messages are
removed, although we do track the index of the messages delivered;
that is, we replace events of the form $\Snd(\mu,i,j)$ and 
$\Rec(\mu,i,j)$ by 
$\Snd(i,j)$ and $\Rec(i,j,\ell)$, where $\ell$ is the index of the
message sent by $i$ to $j$ in 
$h(m)$.
 For instance, $\Rec(i,j,2)$
means that the second message sent by $i$ to $j$ was delivered to
$j$.
Note that the scheduler does see events of the form $\Done(i)$;
indeed, these are signals to the scheduler that it can move, since
$i$'s turn is over.
%
Since we view the agents (and the mediator) as sending messages atomically,
\commentout{
we can split a  history into \emph{blocks}, each consisting of a
subsequence of consecutive events. We consider two kinds of blocks: 
\begin{itemize}
\item \emph{Scheduler blocks}, which consist of a sequence of $\Rec$
  events followed by a single $\Sch$ event. 
  \item \emph{Agent $i$ blocks}, which consist of a sequence of
events of the form $\Snd(\cdot,\cdot,i)$, followed by a single
$\Done(i)$ event.   
\end{itemize} 
We can view an agent $i$ block as consisting of all the actions
performed by $i$ when it is scheduled once; similarly, 
a scheduler block consists of all the actions performed by the scheduler
between when agents are scheduled.
We treat the mediator just like the other agents as far as
scheduling is concerned.
In the  sequel,}
in the sequel, we talk about an agent's (or the mediator's) \emph{turn}.
An agent's $k$th turn takes place the $k$th time it is scheduled.  During
its turn, the agent sends a block of messages and performs some local
computation.

It is more standard in the literature to assume that agents perform
at most one action when they are scheduled.  We can view this a
constraint on agents' protocols.  A \emph{single-action} protocol for
agent $i$ is one where agent $i$ sends at most one message before
performing the $\Done(i)$ action.  As we show in
Section~\ref{sec:other-models}, we could have restricted to
single-action protocols with no loss of generality as far as our
results go; allowing agents to perform a sequence of actions
atomically just makes the exposition easier.


Even though it might appear that malicious agents and the
scheduler act independently, we show in our companion paper
\cite[Section A.1]{ADGH19} that
we can assume without loss of generality that they coordinate their
actions (i.e., that they are all under the control of a single entity,
which we take here to be the scheduler).

\begin{definition}\label{dfn:adversary} An \emph{adversary} is a
  triple  $(T,\vec{\sigma}_T, 
\sigma_e)$, consisting
of a set $T$ of malicious agents,
the protocol $\vec{\tau}_T$ used by the agents in $T$,
and a protocol $\sigma_e$ for the scheduler.
An adversary where the scheduler is relaxed is a \emph{relaxed adversary}.
\end{definition}


In this paper, we consider protocols that
involve 
 a \emph{mediator}, typically denoted $d$, using a protocol denoted
$\pi_d$. In protocols that involve a mediator,
we assume that 
honest agents' strategies are always such that the honest agents
communicate only with the mediator, not with each other.   
However, since malicious agents can deviate, they can communicate
with each other. 
As far as the scheduler is concerned, the mediator is like any other
agent, so the scheduler (and the mediator's protocol) determine when
the mediator sends and receives messages.
However, the mediator is never malicious, and thus never deviates from
its  announced protocol.

\commentout{
For some of our results, we must assume that, in the interaction with
the mediator,
there comes a point when all honest agents know that they have
terminated the protocol; they will not get further messages from the
mediator and can stop sending messages to the mediator.
}
We deal only with bounded protocols, where there is a bound $N$ on the
number of messages that an honest agent sends.  Of course, there is
nothing to prevent malicious agents from spamming the mediator and
sending an arbitrary number of messages.   We assume that the mediator
reads at most $N$ messages from each agent $i$, ignoring any further
messages sent by $i$.

For our results involving termination, specifically, (P2), it is
critical that players know when the mediator stops sending messages.  
For these results, we restrict the honest agents and
the mediator  to using protocols that have the following \emph{canonical form}:
\commentout{
Using a canonical protocol, agent $i$ sends a message to the mediator
in response to a message from the mediator that does not include
``STOP'' if it has not halted, and these are the only messages that
$i$ sends, in addition to an initial message to the mediator.
If agent $i$ gets a message from the mediator that includes ``STOP'',
then $i$ halts.
}
Using a canonical protocol, each honest agent tags its $\ell$th
message with label $\ell$ and all honest agents are guaranteed to
send at most $N$ messages regardless of their inputs or the random
bits they use.  Whenever the mediator receives a message from an agent
$i$, it checks its tag $\ell$; if $\ell > N$ or if the mediator has
already received a message from $i$ with tag $\ell$, it ignores the
message. The mediator is guaranteed to eventually terminate. Whenever
this happens, it sends a special ``STOP'' message to all agents and
halts. Whenever an honest agent receives a ``STOP'' message, it
terminates.  


\commentout{
However, we do require that 
if an
agent sends several messages when it is scheduled, then either all or
none of them are delivered.
}

Even though canonical protocols have a bound $N$ on the number of
messages that honest agents and the mediator can send, the mediator's local
history in a canonical protocol can be arbitrarily long, since it can
be scheduled an arbitrary number of times.
We 
conjecture 
that, in general, since the
message space is finite,   
the expected number of messages required to simulate the mediator is
unbounded. However,
we can
do better if the
mediator's protocol satisfies 
two
additional 
properties. Roughly speaking,
the first property says
that the mediator can
send messages only either at its first turn or in response to an
agent's message;
the second property says that the mediator ignores \emph{empty turns},
that is, turns where it does
not receive or send messages.  Thus, the second property implies that
the mediator cannot send a message after receiving a message that
describes how many empty turns there have been since the last time the
mediator sent a message.
More precisely, 
the first property says that whenever the mediator $\pi_d$ is scheduled
 with history $h_d$,
then if $h_d \ne (\, )$ (i.e., if $h_d$ is not the initial history) or
if the mediator has not received any messages in $h_d$ since the last
time it was scheduled, then $\pi_d(h_d) = \Done(d)$. 
The second property says that $\pi_d(h_d) = \pi_d(h_d')$, where
$h_d'$ is the result of 
removing consecutive 
($\Done(d)$, $\Sch(d)$)
 pairs in $h_d$ 
(e.g., if $h_d = (\Sch(d), \Snd(\mu,j,d),\Done(d), \Sch(d), \Done(d),
  \Rec(\mu',i,d), \Sch(d),\Done(d),\Sch(d))$, then $h_d' = (\Sch(d),
    \Snd(\mu,j,d),\Done(d), \Rec(\mu',i,d), \Sch(d))$). 
\commentout{
replacing consecutive $\Done(d)$ events in
$h_d$ by a single $\Done(d)$ event (e.g., if $h_d =
(\Snd(\mu,j,d),\Done(d), \Done(d), \Done(d),
\Rec(\mu',i,d),\Done(d),\Done(d),Done(d), \Snd(\mu',j',d))$, then
$h_d' = (\Snd(\mu,j,d), \Done(d),
\Rec(\mu',i,d),\Done(d),\Snd(\mu',j'd))$.
}
\commentout{
In Section~\ref{sec:message-bound}, we show that if the mediator uses a
responsive protocol $\pi_d$, then  we can simulate any protocol
$\vec{\pi} + \pi_d$ in such a way 
that the expected number of messages sent by honest players during the
simulation is polynomial in $n$ and $N$.
}
%
In Section~\ref{sec:message-bound}, we show that if the mediator uses a
responsive protocol $\pi_d$ that can be represented using a curcuit
with $c$ gates, then  we can simulate any protocol
$\vec{\pi} + \pi_d$ in such a way 
that the expected number of messages sent by honest players during the
simulation is polynomial in $n$ and $N$
and linear in $c$.

\section{Secure Computation in Interactive Settings}\label{sec:secure-computation} 

\subsection{The BGW/BCG notion of secure computation}


\emph{Secure computation} is concerned with jointly
computing a function $f$ on $n$ variables,  where the $i$th input
is known only to agent $i$. 
%
For instance, if we want to compute the average salary of the people
from the state of New York, then $n$ would be New York's population,
the input $x_i$ is $i$'s salary, and $f(x_1, \ldots, x_n) =
\frac{\sum_{i = 1}^n x_i}{\sum_{x_i \not = 0} 1}$.
(For the denominator we count only people who are actually working.)
Ideally, a secure
computation protocol that computes $f$ would be a protocol in which
each agent $i$ outputs $f(x_1, \ldots, x_n)$ and gains no information
about the inputs $x_j$ for $j \ne i$. In our example, this amounts to
not learning other people's salaries.

Typically, we are interested in performing secure computation in a
setting where some of the agents might be malicious and
not follow the protocol. In particular, they might not give any
information about their input or might just pretend that they have a
different input
(for instance, they can lie about
their salary). What output do we want the secure computation of $f$ to
produce in this case?  
To make precise what we want, we use notation
introduced by 
BGW and
BCG.

Let $\vec{x}$ be a vector of $n$ components; let $C$ be a subset
of $[n]$ (where we use the notation $[n]$ to denote the set
$\{1,\ldots, n\}$, as is standard); let $\vec{x}_C$ denote the vector
obtained by projecting 
$\vec{x}$ onto the indices of $C$; and if $\vec{z}$ is a vector
of length $|C|$, let $\vec{x}/_{(C, \vec{z})}$ denote the vector
obtained by replacing the entries of $x$ indexed by $C$ with $\vec{z}$. 
Given a set $C$ of indices, a default value, which we take here to
be 0, and a function $f$, we take $f_C$ to be the function 
results from applying $f$, but taking the inputs of the agents 
not in $C$ to be 0; that is,
$f_C(\vec{x}) =
f(\vec{x}/_{(\overline{C}, \vec{0})})$.
Roughly speaking, if only the agents in $C$ provide inputs, we want
the output of the secure computation to be $f_c(\vec{x})$.  

What about agents who lie about their inputs?  A malicious agent $i$
who lies about his
input $x_i$ and pretends to have some other input $y_i$ is
indistinguishable from an honest agent who has $y_i$ as his actual
input. We can capture this lie using a function 
$L:D^{|T|}
\rightarrow D^{|T|}$, where $D$ is the domain of the inputs and $T$
is the set of malicious agents. The function $L$ encodes the
inputs malicious agents pretend to have given their actual inputs.
BCG require that all the honest agent 
output the same value and that the output  has the form $(C,
f_C(\vec{y}))$, where 
$\vec{y} = \vec{x}/_{(T, L(\vec{x}_T))}$.
They allow $C$ to depend on $\vec{x}_T$, since malicious agents can
influence the choice of $C$.
They also allow the choice of $C$ and the function $L$ to be randomized.
Since the choice of $L$ and $C$
can be correlated, $L$ and $C$ are assumed to take as input a common random
value $r \in \mathcal{R}$, where $\mathcal{R}$ denotes the domain of
random inputs.
That is, $C = c(\vec{x}_T,r)$ for some function $c$, and the malicious
agents with actual input $\vec{x}_T$ pretend that their input is
$L(\vec{x}_T,r)$.

BCG place no requirements on the output of malicious agents, but they do
want the inputs of
honest agents to remain as secret as possible. Hence, in an
ideal scenario, the outputs of malicious agents can depend only on
$\vec{x}_T$, $f_C(\vec{y})$, and possibly some randomization. Taking
$O_i$ to denote the output function of a malicious agent $i$, we can
now give BCG's  
definitions.


\begin{definition}
An \emph{ideal $t$-adversary} $A$ 
is a tuple 
$(T,c,L,O)$
 consisting
of a set $T \subseteq [n]$
of malicious agents with $|T| \le t$ and three randomized
functions $c: D^{|T|} \times \mathcal{R} \rightarrow \mathcal{P}([n])$
with $|c(\vec{z},r)| 
\ge n-t$ for all input profiles $\vec{z}$ and $r$, 
$L: D^{|T|} \times \mathcal{R} \rightarrow D^{|T|}$
and $\vec{O}: D^{|T|} \times D \times \mathcal{R} \rightarrow
(\{0,1\}^*)^{|T|}$.
The \emph{ideal} output $\vec{\rho}$ of $A$
given function $f$, input profile $\vec{x}$, and a value $r \in
\mathcal{R}$ is  
\commentout{
$$\rho_i(\vec{x}, A, r; f) 
= \left
\{ \begin{array}{ll} (c(\vec{x}_T, r),f_{c(\vec{x}_T,
      r)}(\vec{x}/_{(T, h(\vec{x}_T, r))},r)) & \mbox{if } i \not \in T 
  \\ O_i(\vec{x}_T, f_{c(\vec{x}_T, r)}(\vec{x}/_{(T, h(\vec{x}_T,
        r))}), r) & \mbox{if } i \in T. \end{array}\right.$$ 
}
$$\rho_i(\vec{x}, A, r; f) 
= \left
\{ \begin{array}{ll} (c(\vec{x}_T, r),f_{c(\vec{x}_T,
    r)}(\vec{x}/_{(T, L(\vec{x}_T, r))})) & \mbox{if } i \not \in T 
    \\ O_i(\vec{x}_T, f_{c(\vec{x}_T, r)}(\vec{x}/_{(T, L(\vec{x}_T,
   r))}), r) & \mbox{if } i \in T. \end{array}\right.$$
\end{definition}
%
Let $\vec{\rho}(\vec{x}, A; f)$ denote the
distribution induced over outputs by the protocol profile
$\vec{\rho}$  on input $x$ given the ideal $t$-adversary $A$.
Note that an ideal $t$-adversary is somewhat different from the
adversary as defined in Definition~\ref{dfn:adversary}, although they
are related, 
as we show in Section~\ref{sec:secure-computation-and-mediators}.
We use variants of $A$ to denote both types of adversary.

We can now give the BCG definition of secure
computation.
Let $\vec{\pi}(\vec{x}, A)$ be the
distribution of outputs when running protocol $\vec{\pi}$ on input $\vec{x}$
with adversary $A=(T,\vec{\tau}_T,\sigma_e)$.

\begin{definition}[Secure computation]
Let $f: \mathcal{D}^n \rightarrow \mathcal{D}$ be a function on $n$
variables and $\vec{\pi}$ a protocol for $n$ agents.
Protocol
$\vec{\pi}$ $t$-securely computes $f$ if, for every
adversary $A' = (T,\vec{\tau}_T, \sigma_e)$,
the following properties hold: 
%
\begin{itemize}
\item[SC1.] For all input profiles $\vec{x}$, all honest agents terminate
  with probability 1. 
  \item[SC2.] There exists 
  an ideal 
   $t$-adversary 
    $A' = (T,c,L, \vec{O})$ 
   such that, for all input profiles $\vec{x}$,
         $\vec{\rho}(\vec{x}, A'; f)$ and $\vec{\pi}(\vec{x},A)$
are identically distributed. 
\end{itemize}
\end{definition}

Note that BCG just require that \emph{some} 
ideal 
$t$-adversary $A$ 
gives the same distribution over the the outputs of $\pi$. This
captures the idea that all 
ways that malicious agents can deviate are modeled by adversaries.
Also note that SC1 follows from SC2 if we view non-termination as
a special kind of output.

BCG prove the following 
result:

\begin{theorem}\label{thm:secure-computation}
Given $n$ and $t$ such that $n > 4t$ and a function $f:D^n \rightarrow
D$, there exists a protocol $\vec{\pi}^{f}$ that $t$-securely computes $f$. 
\end{theorem}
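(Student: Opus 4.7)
The plan is to build $\vec{\pi}^f$ by compiling $f$ into an arithmetic circuit over a large enough finite field $\mathbb{F}_q$ (containing the input domain $D$) and then evaluating the circuit gate-by-gate on Shamir shares of the inputs, using techniques tailored to the asynchronous setting. Concretely, the protocol has four phases: (i) an input sharing phase in which each agent $i$ distributes its input $x_i$ using asynchronous verifiable secret sharing (AVSS) with a threshold-$t$ Shamir polynomial; (ii) an agreement phase in which the agents run an asynchronous common subset protocol $\ACS$ to agree on a set $C \subseteq [n]$ with $|C| \geq n-t$ of agents whose sharings have completed; (iii) a computation phase in which addition gates are handled locally and multiplication gates are handled via a degree-reduction subprotocol on the shares; and (iv) a reconstruction phase where the shares of the output are opened and each honest agent decodes using Reed--Solomon error correction. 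The hypothesis $n > 4t$ is what makes each ingredient available: AVSS with perfect security is known to require $n > 4t$, and error-correction on the output shares needs $n - t > 3t$, which is equivalent.

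First I would fix the circuit and the field, then plug in the known AVSS primitive (which already satisfies: if the dealer is honest, the secret stays hidden from the malicious agents, and every honest agent eventually terminates holding a correct share; if the dealer is malicious, either no honest agent terminates or all honest agents terminate holding shares of a well-defined degree-$t$ polynomial). Invoking $\ACS$ on the ``$i$'s AVSS has completed'' bits gives all honest agents a common set $C$ with $|C|\ge n-t$, and by the AVSS guarantee each agent in $C$ has a well-defined effective input $y_i$ in $\mathbb{F}_q$; agents in $\overline{C}$ are treated as inputting $0$, so the shared value being computed is $f_C(\vec{x}/_{(T,L(\vec{x}_T))})$ for some (possibly randomized) $L$. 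The gate-by-gate evaluation then proceeds by the standard BGW trick using bivariate polynomial re-sharings for multiplication, which in the asynchronous setting is again implemented using AVSS as a subroutine. Termination (property SC1) follows because each phase is guaranteed to terminate for honest agents: AVSS terminates, $\ACS$ terminates, each gate evaluation terminates, and the final reconstruction terminates.

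The main obstacle is establishing SC2, i.e., exhibiting an ideal $t$-adversary $A' = (T, c, L, \vec{O})$ whose induced distribution $\vec{\rho}(\vec{x}, A'; f)$ matches $\vec{\pi}^f(\vec{x}, A)$ for every real adversary $A = (T, \vec{\tau}_T, \sigma_e)$. I would build the simulator in the standard ``ideal-world'' style: the simulator internally runs a copy of $\vec{\pi}^f$ in which the honest agents are faked with dummy inputs (say all zeros) while the malicious agents behave according to $\vec{\tau}_T$ and scheduling is driven by $\sigma_e$. The extraction of $L$ and $c$ uses the AVSS guarantee: from the internal transcript of the sharing phase the simulator reads off the effective shared values $y_i$ for $i \in T$ (this defines $L(\vec{x}_T, r)$) and the set $C$ produced by the simulated $\ACS$ (this defines $c(\vec{x}_T, r)$). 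The simulator then submits $L(\vec{x}_T,r)$ to the ideal mediator with the set $C$, obtains $v := f_C(\vec{x}/_{(T, L(\vec{x}_T, r))})$, and uses the perfect privacy of AVSS together with the fact that any $t$ Shamir shares are uniformly random to patch $v$ into the dummy transcript: at reconstruction time the simulator chooses the honest parties' nominal output shares so that the recovered value is exactly $v$. The output function $\vec{O}$ is defined to be whatever the simulated malicious agents output on this patched transcript. The distributional equivalence then reduces to two standard arguments: (a) the joint view of the malicious agents in the simulation is identically distributed to their view in the real protocol, which follows from the perfect secrecy of Shamir sharing with threshold $t$ (the ``$t$ shares reveal nothing'' property), applied phase by phase and gate by gate; and (b) given that view, the honest agents in the real protocol output $v$ (with probability one, by the correctness of AVSS and Reed--Solomon decoding under $n > 4t$). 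Combining (a) and (b) gives the required equality of distributions, which is exactly SC2.
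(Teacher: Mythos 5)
Your proposal is correct and follows essentially the same route as the construction the paper sketches (which is BCG's): VSS-based input sharing, ACS agreement on a core set $C$ with $|C|\ge n-t$, gate-by-gate circuit computation on shares, and reconstruction, with SC2 established by the standard dummy-input simulator that extracts $L$ and $c$ from the sharing/ACS transcript and patches the output value at reconstruction. The paper itself does not prove this theorem but cites BCG for it; your sketch matches both the paper's outline of $\vec{\pi}^f$ and the simulator argument (\`a la Canetti) that the paper later reuses.
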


The construction of $\vec{\pi}^f$, which is sketched in the next
section, is of critical importance for this paper, since most of the
primitives used in this construction are also used in ours.

\subsection{The BCG construction}\label{sec:tools}
To explain the BCG construction, we must first review the tools used
by BCG, specifically,
\emph{broadcast}, \emph{consensus}, \emph{verifiable secret
sharing (VSS)}, \emph{circuit computation},  \emph{accumulative sets},
\emph{agreement on a core set},  and \emph{random polynomial generation}. 
(Accumulative sets and agreement on a core set were introduced by BCG;
the other tools are older.)

 \subsubsection{Broadcast}
 A broadcast protocol involves a \emph{sender} who sends a message $\mu$ to
all agents in such a way that all honest
agents receive the same message.
(Although we talk about ``a broadcast protocol'', this is
really a joint protocol, that is, a protocol for each agent.
Given a joint protocol $\vec{P}$, we use $P_i$ to denote $i$'s part of
the protocol.  The
  sender's protocol is different from those of the other agents.  The
  sender has input $\mu$, the message to be shared; the other agents
  have no input.)
Moreover, if the sender is honest,
the message received by an agent $i$ must be the message $\mu$ that
the sender sent.
More precisely, a \emph{broadcast} protocol invoked by a
sender with input $\mu$ must satisfy the following properties in all
histories:  
\begin{itemize}
  \item If an honest agent terminates \emph{broadcast} with output
        $\mu'$, then all honest agents  
    eventually terminate \emph{broadcast} with output $\mu'$.
      \item If the sender is honest, then all honest agents eventually
                terminate \emph{broadcast} and output $\mu$.
\end{itemize}
Bracha \citeyear{Br84} provides a broadcast protocol that 
tolerates up to $t$ malicious agents
in asynchronous systems
if $n > 3t$. 

\subsubsection{Consensus}
In a consensus protocol, each agent $i$ starts with an
initial preference $y_i \in \{0,1\}$ and must output a value $x \in
\{0,1\}$ such that the following properties are satisfied in all histories:
\begin{itemize}
\item All honest agents terminate with probability 1.
  \item If one honest agent terminates and outputs $x$, then all
        honest agents terminate and output $x$.
  \item If all honest agents have the same initial value $y$, then if
    an honest agent $i$ terminates the protocol, $i$ outputs $y$.
\end{itemize}

Abraham, Dolev and Halpern \citeyear{ADH08} provide a consensus
protocol that is $t$-resilient in asynchronous systems if $n > 3t$. 

\subsubsection{Verifiable secret sharing}\label{sec:VSS}
In a verifiable secret sharing  protocol, a sender starts out with
some secret $s$ that it wants to share.
VSS consists of a pair of
protocols $(\longvec{\VSS}^{sh},\longvec{\VSS}^{rec})$, commonly
referred to as the \emph{sharing protocol} 
and the \emph{reconstruction protocol},
and a designated agent, the \emph{sender},
such that the following properties hold:
\begin{itemize}
  \item If the sender is honest, then every honest agent $i$ will
    eventually complete $\VSS_i^{sh}$.
    \item If an honest agent  $i$ completes $\VSS_i^{sh}$, then all
  honest agents $j$  eventually complete $\VSS_j^{sh}$ and
  $\VSS_j^{rec}$. 
\item The output of $\VSS^{sh}_i$ is called $i$'s \emph{share}
  of the secret.  There is a unique value $s'$ such that if each honest
  agent $i$ runs $\VSS^{rec}_i$ with input $i$'s share of the secret,
  then all the honest agents $j$ will complete $\VSS_j^{rec}$, and 
  will output the same value $s'$, no matter what the malicious agents do.
  \item If the sender is honest, then $s' = s$ (the sender's secret).
  \item If the sender is honest and no honest agent
    $i$ has begun executing $\VSS^{rec}_i$, then the malicious agents
    cannot guess $s$ with probability $> 1/M$ (where $M$ is the
cardinality of the space of possible secrets).
\end{itemize}

With VSS, just as with the broadcast protocol, the
sender's protocol is 
different from that of the other agents; only the sender has the
secret $s$.
Whenever a recipient $i$ receives a message $\mu$ from the sender, it
invokes $\VSS_i^{sh}$ with input $\mu$
and outputs its share of the secret,
which becomes the input to $\VSS_i^{rec}$.
Even though we require each agent $i$ to output the same value $s'$ 
after runing $\VSS_i^{rec}$, a simple modification of
$\longvec{\VSS}^{rec}$ allows 
a single agent to learn the secret, without any other agent getting
any additional information:  If we want only $i$ to learn the secret,
all the agents send their shares to $i$, and $i$ simulates the
computation of $\longvec{\VSS}^{rec}$ locally.  (This depends on the
assumption that the only input to $\VSS^{rec}_j$ is $j$'s share of
the secret, and that it suffices for $i$ to learn the shares of the
honest agents in order to recover the secret.)
However, no other agents learn anything about the
secret (since all they have is their share of the secret). 

BCG provide a
VSS protocol in an asynchronous setting that is $t$ resilient as long  
as $n > 4t$.  BKR showed that if
$n > 3t$, then for all $\epsilon > 0$, there exists 
a $t$-resilient protocol that achieves the VSS properties in
asynchronous systems with probability at least $1-\epsilon$.
More precisely, their protocol has
the property that if some honest agent terminates, then all honest
agents terminate and all the properties above hold, and some honest
agent terminates with probability at least $1-\epsilon$.


\subsubsection{Accumulative sets}\label{ssection:accumulative-sets}

Suppose that we have a global clock, initialized to 0.  We do
not assume that agents have access to the global clock.
An \emph{accumulative set} is a function $U(h,m)$ from histories and global time
to sets such that $U(h,m) \supseteq U(h,m')$ if $m \ge m'$.
(Intuitively, $U(h,m)$ consists of the elements of $U$ at time $m$ in
history $h$.)

\begin{definition}
    Given  $M_1,M_2 \in \mathbb{N}$ with $M_1 \le M_2$, a tuple $(U_1,
  \ldots, U_n)$ of 
accumulative subsets of $\mathbb{N}$ (one for each agent) is
\emph{$(M_1,M_2)$-uniform} in history $h$ if, for 
every agent $i$ that is honest in $h$,
\begin{itemize}
  \item $U_i(h,m) \subseteq \{1,\ldots, M_2\}$ for all times $m \ge 0$;
      \item there exists a time $m_i^h$ such that  $|U_i(h,m_i^h)| \ge M_1$;
\item for all agents $j$ that are honest in $h$, there exists a time 
  $m_{i,j}^h$
    such that $U_i(h,m) = U_j(h,m)$ for all $m \ge m_{i,j}^h$. 
\end{itemize}
\end{definition}

To see how $(M_1,M_2)$-uniform accumulative sets are used,
suppose that each agent $i$ in a system
of $n$ agents has a secret 
$s_i$. The $n$ agents each invoke $t$-resilient VSS concurrently
in 
a system with $t$ malicious agents and $n > 3t$, with agent $i$
acting as the sender with secret $s_i$ in its invocation of VSS.  Let
$U_i(h,m)$ consist of those 
agents $j$ for which $i$ has terminated the sharing phase of the VSS
initiated by $j$ by time $m$ in history $h$.  Clearly $U_i$ is an
accumulative set. 
We claim that $(U_1, 
\ldots, U_n)$ is $(n-t,n)$-uniform.  
Clearly, $U_i(h,m) \subseteq \{1,\ldots, n\}$ for
all times $m$ by 
construction.  Since there at most $t$ malicious agents in each history
and the VSS scheme is $t$-resilient, the properties of VSS guarantee that 
each honest agent $i$ will eventually complete the VSS initiated by each 
honest agent $j$, which means $j$ is included in $U_i(h,m)$ for some $m$, 
and thus there must exist a time $m_i^h$ such that
$|U_i(h,m_i^h)| \ge n-t$. 
Since $U_i(h,m)$ is finite, there must come a time $(m_i^h)^*$ such
that $U_i(h,m')  = U_i(h,(m_i^h)^*)$ for all $m' \ge (m_i^h)^*$.
Let $m_{i,j}^h = \max((m_i^h)^*,(m_j^h)^*)$.  The properties of VSS
guarantee that $j' \in U_i(h,m_{i,j}^h)$ iff $j' \in U_j(h,m_{i,j}^h)$.

\subsubsection{Agreement on a core set}

An agreement on a core set (ACS) protocol is given as
input natural numbers $M_1$ and $M_2$.  Each agent $i$ is also
assumed to have access to an accumulative set $U_i$.  If the tuple
$(U_1, \ldots,U_n)$ is  $(M_1,M_2)$-uniform with respect to the
histories of the ACS protocol, then the following properties must hold:
\begin{itemize}
 \item All honest agents must eventually complete the ACS protocol.
  \item If an honest agent $i$ completes the protocol at time $m$, then it
    output a set $C_i \subseteq U_i(m)$ such that $|C_i| \ge M_1$.
  \item If $i$ and $j$ are honest, then $C_i = C_j$.
\end{itemize}
Thus, all honest agents running an ACS protocol must output the same
set; this set is called the \emph{core set}.  
We denote by $ACS_i(U_i, M_1, M_2)$ agent $i$'s invocation of the ACS
protocol with inputs $M_1$ and $M_2$ relative to accumulative set
$U_i$. Note that although the notation suggests that $U_i$ is the
input to $ACS_i$, the protocol may actually 
check $U_i$ several times while it is running, and $U_i$ may be
different each time it is checked, since $U_i$ may updated in parallel
with $ACS_i$.

BCG provide an ACS protocol
that is $t$-resilient in asynchronous systems if $n > 3t$.

\subsubsection{Circuit computation}\label{ssection:circuit-computation}
Another key primitive that we use is circuit computation. Let
$(\longvec{\VSS}^{sh},\longvec{\VSS}^{rec})$ 
be a VSS scheme,
and let $f: \mathbf{F}_p^N \rightarrow \mathbf{F}_p$ be a circuit with
$N$ inputs consisting only of addition and
multiplication gates. Suppose that each agent $i$ has shares $x^i_1,
x^i_2, \ldots, x^i_N$ of secrets $x_1, \ldots, x_N \in \mathbf{F}_p$
respectively (where the secrets are computed using $(\longvec{\VSS}^{sh},\longvec{\VSS}^{rec})$).
A \emph{circuit computation of $f$ (relative to $(\longvec{\VSS}^{sh},\longvec{\VSS}^{rec})$}), denoted
  $CC(f)$ (we suppress the dependence on
$(\longvec{\VSS}^{sh},\longvec{\VSS}^{rec})$ from now on) has 
  the following properties.
  We assume that there is an input $x_1, \ldots, x_N$ such that 
  each agent $i$ has shares $x_1^i, \ldots, x_N^i$ of
$x_1, \ldots, x_N$.
  Agent $i$'s component of the protocol, denoted $CC_i(f)$,
is given the inputs $x^i_1, x^i_2, \ldots, x^N_i$ and computes a
single output $y_i$, such that the following properties hold:
\begin{itemize}
\item $y_i$ is $i$'s share of $f(x_1,\ldots, x_N)$ (relative to
    $(\longvec{\VSS}^{sh},\longvec{\VSS}^{rec})$). 
  \item After running $CC_j(f)$ with inputs $x_1^j,\ldots,x_N^j$ (but
    before running the reconstruction protocol $\longvec{\VSS}^{rec}$), no malicious
    agent $j$ has any information about the shares $x^i_l$ of an honest agent
  $i$, the values $x_1, \ldots, x_N$, or $f(X_1,\ldots,x_N)$ beyond
  what it had before running $CC_j(f)$,
    even if all the malicious   agents pool their information. 
\item Even after honest agents run $\longvec{\VSS}^{rec}$, no malicious agent $j$ can guess
    the values of the shares $x^i_l$ of an honest agents $i$ or the
  the secrets $x_1, \ldots, x_N$ any better than it could before
  running $CC_j(f)$ if it were given $f(x_1, \ldots, x_N)$. 
\end{itemize}
Simply put, a circuit computation protocol $CC$ allows agents
to compute their share of the output of an arithmetic circuit given
their shares of the circuit's inputs, without revealing any information.

Since it is well known that every function $f: D^N \rightarrow D$ can
be represented by a circuit $f': \mathbf{F}_p^N \rightarrow \mathbf{F}_p$ for a prime
$p \ge |D|$ (viewing the elements of $D$ as the first $|D|$ elements
of $\mathbf{F}_p$), if we can define a protocol $CC(f)$ for all
arithmetic circuits, then we can define a protocol $CC(f)$ for all
functions $f: D^N \rightarrow D$.
This is especially important in the next section, where we use CC to
compute functions whose inputs and outputs are local histories. 

BCG provide
an implementation of $CC(f)$ for all arithmetic circuits $f$ relative to
the VSS protocol that they provide 
that is $t$-resilient in asynchronous systems 
as long as $n > 4t$; 
given $\epsilon > 0$, 
BKR provide an implementation
of $CC(f)$ for all arithmetic circuits relative to the VSS protocol
that they provide that is $t$-resilient in asynchronous systems and has at most
an $\epsilon$ probability of error (i.e., there is a probability
$\epsilon$ that agents remain in deadlock or the output of the
computation will not be the 
appropriate share of the circuit's output) as long as $n > 3t$. 

We can assume without loss of generality that
CC can handle randomized functions. 
That is, if there is a protocol
$CC(f)$ to securely compute every deterministic function $f: D^N
\rightarrow D$, then there is a protocol $CC(f)$ to securely compute
every randomized function $f: D^N \rightarrow D$.  A randomized
function $f: D^N \rightarrow D$ can be viewed as a deterministic
function once it is given sufficiently many random bits,
that is, it can be identified with a deterministic function $f: D^N \times
\{0,1\}^{N'} \rightarrow D$ for $N'$ sufficiently large.  Using 
ACS, VSS, and (deterministic) CC,
 the agents can easily compute shares for $N'$ random bits as
follows. 
\begin{enumerate}
\item Each agent $i$ chooses a random bit $b_i$ and shares it using
    VSS.
\item Using ACS, the agents agree on a common set $C$ consisting of
    at least $t+1$ agents who correctly shared a bit $b_i$ at step
  1. Set $b:= \oplus_{i \in C} b_i$ (where $\oplus$ denotes sum mod 2).
  \item Each agent $i$ computes its share of $b$ using CC. 
\end{enumerate}
If $n > 2t$, then there are at least $t+1$ honest agents, so each
honest agent will get shares from at least $t+1$ agents.  Since the
set of $t+1$ agents agreed on using ACS contains at least one honest
agent, the bit $b$ must be truly random.

We can also assume without loss of generality that whenever an honest
agent terminates a CC computation of some function $f(x_1, \ldots,
x_N)$, even in the presence of at most $t$ malicious agents, at least
$n-2t$ other honest agents $i$ have computed their share $y_i$ of
$f(x_1, \ldots, x_N)$. This can be ensured by having an honest agent
$i$ send a \emph{Ready} message to all agents when it finishes the
conputation of $y_i$, and terminating the CC
procedure when it receives $n-t$ \emph{Ready} messages. If
there are at most $t$ malicious agents, if an agent receives $n-t$
\emph{Ready} messages, at least $n-2t$ are from honest agents who
genuinely computed their own share. This property will be critical
later, since it guarantees that sufficiently many honest
agents are running the protocol at roughly the same pace. 

\subsubsection{Construction of
  $\vec{\pi}^{f}$}\label{sec:construction} 

Using the primitives sketched above, BCG gave a construction of
$\vec{\pi}^f$.  At the high level, the construction proceeds as follows:
for $\vec{\pi}^{f}$:
\begin{enumerate}
  \item Each agent $i$ shares its input using VSS.
  \item Agents agree on a core set $C \subseteq [n]$ with $|C| \ge n-t$
    using an ACS procedure with parameters $M_1 = n-t$ and $M_2 = n$,
where the accumulative   
set $U_i$ of agent $i$ is the set of agents $j$ 
such that $i$ has terminated the VSS invoked by $j$ at
step 1. 
\item Each agent $i$ computes its share of $f_C(\vec{x})$ using CC, where 
$i$'s input for the $j$th input gate is $i$'s share of $x_j$ if $j \in
    C$;  otherwise it is 0. 

\item Each agent $i$ sends its share of $f_C(\vec{x})$ to each other
  agent $j$, then uses the shares received from other agents to
    reconstruct $f_C(\vec{x})$ using VSS.
\item Each agent $i$ outputs $(C, f_C(\vec{x}))$.
\end{enumerate}

\subsection{Secure computation and mediators}\label{sec:secure-computation-and-mediators}

Even though it is not explicitly proven by BCG, their construction of
$\vec{\pi}^{f}$ satisfies an additional property that we call SC3, which is
essentially a converse of SC2. 

SC3. For all
ideal 
$t$-adversaries 
$A = (T,c,L,\vec{O})$, 
there exists an adversary $A' = (T,\vec{\tau}_T, \sigma_e)$ such
that, for all input profiles $\vec{x}$, $\vec{\rho}(\vec{x}, A;f)$ and
$\vec{\pi}(\vec{x}, A')$ are identically
distributed. 

\begin{lemma}
Given a function $f: D^n \rightarrow D$, protocol $\vec{\pi}^f$ satisfies SC3.
\end{lemma}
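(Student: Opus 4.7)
The plan is to construct, for any ideal $t$-adversary $A = (T, c, L, \vec{O})$, a real adversary $A' = (T, \vec{\tau}_T, \sigma_e)$ against which $\vec{\pi}^f$ produces exactly the ideal distribution $\vec{\rho}(\vec{x}, A; f)$. The real adversary runs $\vec{\pi}^f$ honestly with two explicit deviations drawn from $A$, and exploits its control of the scheduler to steer the core set chosen by ACS to the set that $A$ prescribes.

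Concretely, I would have the scheduler sample an auxiliary string $r \in \mathcal{R}$ at the outset and pass it to the malicious coalition (which is justified because the scheduler and the coalition coordinate, as argued in \cite{ADGH19}). The malicious protocol $\vec{\tau}_T$ then follows $\vec{\pi}^f$ honestly throughout, except that (i) in the VSS-sharing step each $i \in T$ shares $L(\vec{x}_T, r)_i$ in place of its true input $x_i$, and (ii) in the final output step $i$ outputs $O_i(\vec{x}_T, f_C(\vec{y}), r)$ in place of $(C, f_C(\vec{y}))$, where $\vec{y} = \vec{x}/_{(T, L(\vec{x}_T, r))}$ and $C$ is the set returned by ACS. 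Every other message $i$ sends is generated by the honest protocol, so VSS, ACS, CC, and reconstruction behave exactly as if $i$'s genuine input were $L(\vec{x}_T, r)_i$.

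The scheduler $\sigma_e$ is then tailored to force the core set output by ACS to equal $c(\vec{x}_T, r)$. It delivers promptly the VSS-sharing traffic for every agent in $c(\vec{x}_T, r)$ (honest and malicious alike), and withholds every message that would cause an honest agent's accumulative set $U_i$ to include an index outside $c(\vec{x}_T, r)$ before ACS halts. Since $|c(\vec{x}_T, r)| \ge n - t = M_1$, each honest $U_i$ equals $c(\vec{x}_T, r)$ at every moment ACS consults it, so the ACS output is a subset of $c(\vec{x}_T, r)$ of size at least $M_1$, which can then be driven to equal $c(\vec{x}_T, r)$ itself by further scheduling within the ACS subprotocol.

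With $C = c(\vec{x}_T, r)$ thus pinned down, the correctness of VSS, CC, and reconstruction force every honest agent to output $(c(\vec{x}_T, r), f_{c(\vec{x}_T, r)}(\vec{y}))$, matching $\rho_i$ for $i \notin T$; the malicious outputs match $\rho_i$ for $i \in T$ by construction of $\vec{\tau}_T$. Since $r$ is uniform in $\mathcal{R}$ and independent of the other internal randomness of $\vec{\pi}^f$, the two distributions agree on every input profile $\vec{x}$. The main obstacle I expect is the scheduling claim for ACS: I need to verify that the specific BCG ACS protocol can, by message delivery alone, be steered to output any prescribed $U_i$ of size at least $n-t$, and in particular that its halting rule does not commit to a strict subset of $U_i$ before the scheduler has had a chance to align every honest agent on the same target set.
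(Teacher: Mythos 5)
Your proposal matches the paper's proof essentially step for step: the malicious agents run $\pi^f$ honestly but with input $L(\vec{x}_T,r)$ and post-process their output via $O_i$, the scheduler delays all traffic involving agents outside $c(\vec{x}_T,r)$ so that the ACS core set comes out to exactly $c(\vec{x}_T,r)$, and randomized $A$ is handled by sampling $r$ up front and proceeding deterministically. The one point you flag as needing verification (that ACS can be steered to output all of $c(\vec{x}_T,r)$ rather than a strict subset) is exactly the step the paper also leans on, asserting it follows from the properties of the ACS protocol under that scheduling.
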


\begin{proof}
    Suppose that $A$ is deterministic (i.e., $c$, $L$, and $\vec{O}$ do not
  depend on the random string $r$). Given
    $\vec{x}_T$, let $C = c(\vec{x}_T)$ and 
$\vec{y} = L(\vec{x}_T)$
  (note that we have dropped the $r$ input to both functions since both
  are independent of $r$). Consider the protocol $\vec{\tau}_T$ such
  that, if the agents in $T$ have input $\vec{x}_T$,  $\tau_i$ consists
of $i$ running $\pi_i$ with input $y_i$, where 
$\vec{y}_T = L(\vec{x}_T)$, except  
that 
if $i$ was supposed to output $(S,z)$ (note that all outputs of honest
agents are of this form, since the ideal output has this form, and
$\pi^f$ securely computes $f$) it outputs $O_i(\vec{x}_T, z)$ instead.
\commentout{
Finally, consider a scheduler $\sigma_e$ that delays all messages to
and from the agents in 
$\overline{C}$
and that schedules the messages from the remaining players in such a
way that all players in $C$ terminate the VSS invocations initiated by all
other players in $C$ before they agree on the core set. 
\commentout{
By the properties of
Section~\ref{sec:tools},}
Clearly, in this case, the core set computed by honest players will
be $C$;
}
Suppose that the scheduler delays messages to and from players in $\bar{C}$
until all others players terminate, and that it delivers messages in
such a way that all players in $C$ terminate all instances of VSS
invoked by a player in $C$ before they receive any other messages.
The properties of the ACS protocol guarantee that, in
this case, the core set computed by honest players will be $C$.
It follows from 
the construction of $\vec{\tau}_T$
 that 
  $\vec{\rho}(\vec{x},
A)$ and $\vec{\pi}(\vec{x}, T, \vec{\tau}_T, \sigma_e)$ are
identically distributed. If $A$ is randomized, $\tau_T$ works the
same way except that it chooses $C$, $\vec{y}_T$, and $\vec{O}_T$ by
sampling from the same distribution that $r$ is sampled from.
\end{proof}

We next show how secure computation relates to simulating a
mediator. Consider the following 
protocol 
$\medProtocol$ for $n$ players and a mediator:
Agents send their inputs to the 
mediator the first time that they are scheduled. The mediator waits until
it has received a valid input 
from all agents in a subset $C$ of agents with $|C|
\ge n-t$. The mediator then computes $y = f_C(\vec{x})$ and sends 
each agent the 
pair $(C, y)$.
When the agents receive a message from the mediator,
they output that message and terminate.

Clearly 
$\medProtocol$ 
 satisfies SC1.  It is easy to see that it also
satisfies SC2:  
Given
a set $T$ of malicious agents, a deterministic protocol profile
$\vec{\tau}_T$ for the 
malicious agents, and a deterministic scheduler $\sigma_e$, 
define 
$L(\vec{x}, r)$
 to be
whatever the malicious agents send to the mediator with
input $\vec{x}$, 
let $c(\vec{x})$ be the set of agents from whom
the mediator has received a message the first time it is scheduled after
having received a message from a least $n-t$ agents (given
$\sigma_e$, $\vec{\tau}_T$, and input $\vec{x}$), and
let $O(\vec{x})$ be the output
function that malicious agents use in 
$\medProtocol$ 
 (note that they receive
a single message with the output of the computation, so their output
depends only on $\vec{x}$, 
$\vec{\tau}_T$,
 and $\sigma_e$).
Clearly SC2 holds with this choice of $t$-ideal adversary.
Randomized functions $\vec{\tau}_T$ and $\sigma_e$ 
can be viewed as resulting from sampling random bits $r$ according to
some distribution and then running deterministically; the protocols
$c$, $h$, and $O$ can sample $r$ from the same distribution and then
proceed as above with respect to the deterministic 
$\vec{\tau}_T(r)$
 and
$\sigma_e(r)$.  

The 
 protocol 
$\medProtocol$ 
 satisfies SC3 as well.
Given 
$A = (T,c,L,O)$,
 the definition of $\vec{\tau}_T$ and $\sigma_e$
is straightforward: 
the agents in $T$ choose a
random input $r \in \mathcal{R}$ and then each agent $i \in T$ sends
$L(x_i, r)$
 to the mediator. The scheduler $\sigma_e$ delivers all
messages from the agents in $c(\vec{x}_T, r)$ first, and then
schedules the mediator.
It then delivers all the other messages.

Since both 
$\medProtocol$ and $\vec{\pi}^f$ 
 satisfy SC2 and SC3, for
  all adversaries $A$, there exists an adversary $A'$  (resp., for all
adversaries $A'$ there exists an adversary $A$) such that
$(\medProtocol)(\vec{x},A)$ and $\vec{\pi}^f(\vec{x},A')$
 are
identically distributed.

Unfortunately, given a protocol $\vec{\pi}_d$ for the mediator, there
might not exist a function $f$ such that SC2 and SC3 hold, as the
example given in the introduction (where the mediator sends to the
agents the first message it receives) shows.
Note that, in this example, the output of the
agents is not a function of their input profile,
there is no function $f$ for which SC2 and SC3 hold.
Nevertheless, we are still interested in securely computing the output
of the protocol with the mediator.
That is, 
we are interested in getting analogues to SC2 and SC3
for arbitrary 
interactive protocols. 

\begin{definition}\label{def:bisimulation}
  Protocol $\vec{\pi}'$ 
  \emph{$t$-bisimulates} 
$\vec{\pi}$ 
   if the following
two properties hold:  
\begin{itemize}
\item[(a)] 
For all adversaries $A = (T, \vec{\tau}_T, \sigma_e)$ with $|T| \le
t$, there exists an adversary $A' = (T, \vec{\tau}'_T, \sigma_e')$
such that for all input profiles $\vec{x}$, 
$\vec{\pi}(\vec{x},A)$
and $\vec{\pi}'(\vec{x},A')$ are identically distributed.
\item[(b)]
For all adversaries $A' = (T, \vec{\tau}_T', \sigma_e')$ with $|T| \le
t$, there exists an adversary $A = (T, \vec{\tau}_T, \sigma_e)$
such that all input profiles $\vec{x}$, 
$\vec{\pi}(\vec{x},A)$
and $\vec{\pi}'(\vec{x},A')$ are identically distributed.
\end{itemize}
\end{definition}

Note that the first clause is analogous to SC2, while the second
clause is analogous to SC3. 
There is no clause analogous to SC1 since we allow agents not
to terminate. In any case, since we can 
view non-termination as a special type of output (i.e., we can view an
agent that does not terminate as outputting $\bot$), so SC2 already guarantees
that non-termination happens 
with the same probability in $\vec{\pi}'$ and 
$\vec{\pi}$
(In the setting of BGW, since all functions terminate, with this viewpoint,
SC2 implies SC1, a point already made by Canetti \citeyear{canetti96studies}.)

Our earlier discussion proves the following proposition:

\begin{proposition}\label{prop:secure-comp-bisimulation}
    $\vec{\pi}^f$ $t$-bisimulates 
    $\medProtocol$ 
     if $n > 4t$.
\end{proposition}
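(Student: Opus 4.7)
The plan is to assemble the proposition directly from the four facts already established in the preceding discussion, using the ideal $t$-adversary as a bridge between the two protocols. Both directions (a) and (b) of bisimulation are proved by the same transitivity pattern: pass from an adversary for one protocol to an ideal $t$-adversary via SC2, then from the ideal $t$-adversary to an adversary for the other protocol via SC3.

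More precisely, for direction (a), I would start with an arbitrary adversary $A = (T,\vec{\tau}_T,\sigma_e)$ with $|T|\le t$ for $\medProtocol$. Because $\medProtocol$ satisfies SC2 (as shown in the discussion preceding the proposition), there is an ideal $t$-adversary $A^* = (T,c,L,\vec{O})$ such that for every input profile $\vec{x}$, the distribution $(\medProtocol)(\vec{x},A)$ equals $\vec{\rho}(\vec{x},A^*;f)$. Then, since $n > 4t$, Theorem~\ref{thm:secure-computation} gives us $\vec{\pi}^f$, and the preceding lemma shows that $\vec{\pi}^f$ satisfies SC3; applied to $A^*$, this yields an adversary $A' = (T,\vec{\tau}'_T,\sigma_e')$ for $\vec{\pi}^f$ such that $\vec{\rho}(\vec{x},A^*;f)$ and $\vec{\pi}^f(\vec{x},A')$ are identically distributed. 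Chaining the two equalities gives $(\medProtocol)(\vec{x},A) = \vec{\pi}^f(\vec{x},A')$ for every $\vec{x}$, which is exactly clause (a) of Definition~\ref{def:bisimulation}.

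For direction (b), I would run the same argument in the opposite order. Start with an adversary $A' = (T,\vec{\tau}'_T,\sigma_e')$ for $\vec{\pi}^f$ with $|T|\le t$. Apply SC2 for $\vec{\pi}^f$ (guaranteed by Theorem~\ref{thm:secure-computation} using $n > 4t$) to obtain an ideal $t$-adversary $A^*$ with $\vec{\pi}^f(\vec{x},A') = \vec{\rho}(\vec{x},A^*;f)$ for all $\vec{x}$. Then apply SC3 for $\medProtocol$ (established in the discussion before the proposition) to $A^*$ to get an adversary $A$ for $\medProtocol$ with $\vec{\rho}(\vec{x},A^*;f) = (\medProtocol)(\vec{x},A)$ for all $\vec{x}$. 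Composing these identities proves clause (b).

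There is no real obstacle here: the proposition is essentially a corollary of the fact that both $\medProtocol$ and $\vec{\pi}^f$ satisfy both SC2 and SC3, which are exactly the two conditions needed to factor each protocol through the common ideal object $\vec{\rho}(\cdot,\cdot;f)$. The only mild care point is bookkeeping — making sure the set $T$ of malicious agents is preserved when chaining the two reductions (it is, since both SC2 and SC3 produce adversaries with the same $T$), and noting that SC1 is automatically absorbed into the equality of distributions once non-termination is treated as a distinguished output value $\bot$, exactly as the discussion after Definition~\ref{def:bisimulation} points out.
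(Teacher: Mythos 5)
Your proposal is correct and is essentially the paper's own argument: the paper derives the proposition directly from the observation that both $\medProtocol$ and $\vec{\pi}^f$ satisfy SC2 and SC3, chaining through the ideal $t$-adversary in exactly the two directions you describe. Your extra bookkeeping remarks (preservation of $T$ and absorbing SC1 into the output distribution via $\bot$) match the paper's surrounding discussion.
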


\subsection{Beyond secure computation}\label{sec:additional}

We view $t$-bisumulation as capturing the essence of secure
computation, as defined by BCG (and others).  However, there
are two additional properties that we need for
the results of our companion paper \cite{ADGH19}, which we
believe are of independent interest.   
Both of them 
are in fact satisfied by $\pi^f$, although BCG do not discuss them.

\commentout{
First, note  that $t$-bisimulation guarantees only that for all
adversaries $A = (\vec{\tau}_T, \sigma_e)$ in the cheap talk game,
there exists a corresponding adversary $A' = (\vec{\tau}'_T,
\sigma'_e)$ in the mediator game that produces the same distribution
of outcomes. However, $t$-bisimulation does not give any insight about
how $A'$ is related to $A$. For our applications, we will be
particularly interested in cases 
where $\tau_i$ for $i \in T$ depends only on $\tau_i'$, and not on the
strategies of the other agents in $T$ nor on the strategy of the
scheduler.
By having $\tau_i'$ depend only on $\tau_i$, we can compare directly
compare the effects of agent $i$'s deviation in the game with the
mediator and in the cheap-talk game. 

This type of dependence is captured in the following definition.
\begin{definition}\label{def:emulation}
A protocol $\vec{\pi}'$
\emph{$t$-emulates}
$\vec{\pi}$ if,
for every scheduler $\sigma'_e$, there exists a function $H$ from
strategies to strategies
such that $H(\pi'_i) = \pi_i$ for all agents $i$ and, 
for all sets $T$ of agents with $|T| \le t$ 
and all adversaries $A' = (\vec{\tau}'_T, \sigma'_e)$, there exists an
adversary $A = (H(\vec{\tau}_T), \sigma_e)$ such that, for all
input profiles $\vec{x}$, $O(\vec{\pi},A, \vec{x})$ and
$O(\pi',A', \vec{x})$ are identically distributed, where 
(where
$H(\tau'_1,\ldots,\tau'_m) = (H(\tau_1), \ldots, H(\tau'_m))$).
\end{definition}

Note that this definition is not symmetric: $\pi'$ may
$t$-emulate $\pi$ while $\pi$ does not $t$-emulate $\pi'$. In particular, $t$-emulation does not imply $t$-bisimulation.

The second and third properties have to do with
termination. 
}

To understand the first property, note that 
Proposition~\ref{prop:secure-comp-bisimulation}
guarantees that $\pi^f$ $t$-bisimulates 
$\medProtocol$ 
 if $n > 4t$. What
happens if $t$ 
is larger than this threshold?
Although BCG make claims for their protocol only if $n > 4t$,
variants of
some of
the properties that they are interested in continue to hold even if
$n/4 \le t < n/3$.
Specifically, for each adversary $A = (T, \vec{\tau}_T,\sigma_e)$ ,
there exists a  
relaxed adversary $A' = (T, \vec{\tau}'_T, \sigma'_e)$ 
such that, for all input profiles $\vec{x}$,
$\vec{\pi}^f(\vec{x},A)$ and 
$(\medProtocol)(\vec{x},A')$
are identically
distributed. 
This means that if $n > 3t$, then
the only way that the adversary can affect $\vec{\pi}^f$ is by
preventing some agents from terminating. This motivates the following
definition: 


\commentout{
A protocol
$\vec{\pi}'$ \emph{$(t, t')$-emulates} $\vec{\pi}$ if it $t$-emulates
$\vec{\pi}$ but the scheduler $\sigma_e$
of Definition~\ref{def:bisimulation} 
may be relaxed for 
$t \ge |T| > t'$.
}
\begin{definition}
A protocol
$\vec{\pi}'$ \emph{$(t, t')$-bisimulates} $\vec{\pi}$ if it $t$-bisimulates
$\vec{\pi}$ but the schedulers $\sigma'_e$ and $\sigma_e$ of the first and second clause 
of Definition~\ref{def:bisimulation} respectively 
may be relaxed for 
$t \ge |T| > t'$.
\end{definition}


\begin{proposition}\label{pro:emulate}
$\vec{\pi}^f$ $(t, t')$-bisimulates $\vec{\pi}_d^f$ if $3t + t' < n$
  and $t \ge t'$. 
\end{proposition}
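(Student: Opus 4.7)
The plan is to split on the coalition size $|T|$. For $|T| \le t'$, the hypothesis $3t + t' < n$ together with $t \ge t'$ gives $n > 3t' + t' = 4t'$, so Proposition~\ref{prop:secure-comp-bisimulation} applied at threshold $t'$ directly yields (non-relaxed) $t'$-bisimulation between $\vec{\pi}^f$ and $\medProtocol$. This is exactly what $(t,t')$-bisimulation requires of coalitions of this size, so no relaxing is needed here.

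The real content is the case $t' < |T| \le t$, where only $n > 3t$ is guaranteed and we are allowed to use relaxed schedulers. The plan here is to leverage property (P1) for $\vec{\pi}^f$: each BCG primitive (broadcast, VSS, ACS, CC) retains, in the $n > 3t$ regime, the property that whenever an honest agent terminates a phase its output equals the unique value that the primitive is committed to given the inputs and shares in play; the only disruption available to the adversary is to prevent some honest agents from ever terminating a phase. In particular, once ACS fixes a core set $C$, every honest agent that terminates $\vec{\pi}^f$ outputs $(C, f_C(\vec{y}))$ for the effective input profile $\vec{y}$ committed to during the VSS sharing phase.

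Given this, clause (a) is handled as in the proof of SC3 for $\vec{\pi}^f$: given $A = (T,\vec{\tau}_T,\sigma_e)$ in the mediator setting, the agents in $T$ internally simulate $\medProtocol$ under $A$ to determine both the inputs $\vec{y}_T$ they should commit to in the VSS phase and the output-transformation functions $O_T$ to apply at the end, then play $\vec{\pi}^f$ accordingly. The (possibly relaxed) $\sigma_e'$ delivers messages so that the ACS-computed core set equals the set $C$ of agents whose inputs the mediator actually reads under $\sigma_e$, and freezes messages to any honest agent who would not receive the mediator's reply in the mediator setting. Clause (b) is symmetric: given $A' = (T,\vec{\tau}'_T,\sigma_e')$ in the no-mediator setting, the agents in $T$ in the mediator setting internally simulate $\vec{\pi}^f$ under $A'$ to determine the value they have effectively committed to by the end of the VSS phase, send that to the mediator as their input, and the relaxed $\sigma_e$ delivers the mediator's reply only to those honest agents who would have terminated in the simulated run.

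The main obstacle will be making precise the ``correct-when-terminating'' property for each primitive in the range $3t < n \le 4t$ and lifting it to $\vec{\pi}^f$. BCG prove full $t$-resilience only for $n > 4t$, so one has to revisit their VSS, ACS, and CC analyses to verify that the binding and consistency guarantees (uniqueness of the shared secret, agreement on the core set, correctness of the shared output of the arithmetic circuit) continue to hold whenever an honest agent completes the relevant phase; this is essentially the content of property (P1) for $\vec{\pi}^f$, which the paper mentions is satisfied (but not explicitly proven) by BCG. Once this is in place, the two constructions above match the output distributions in the two settings exactly.
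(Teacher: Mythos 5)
Your proposal diverges from the paper's argument in a way that leaves the essential content unproved. The paper does not prove this proposition by re-examining the BCG primitives in the $3t < n \le 4t$ regime at all; it proves it (as a special case of Theorem~\ref{thm:main}(a)) by a padding reduction to the already-established $n > 4t$ case. Setting $t'' = t - t'$, one adjoins $t''$ virtual agents running the null protocol, so that the enlarged system has $n + t'' > 4t$ agents (since $n > 3t + t' = 4t - t''$); one then applies the $n>4t$ bisimulation verbatim to the enlarged system, with a relaxed scheduler that never delivers the virtual agents' messages, and projects back to the original $n$ agents. When $|T| > t'$ the virtual agents must be treated as honest-but-silent, which is exactly why the resulting scheduler is only relaxed; when $|T| \le t'$ the virtual agents can instead be absorbed into the coalition, giving $|T| + t'' \le t$ in a system with $n+t'' > 4t$ agents, so the non-relaxed $n>4t$ construction applies and no relaxation is needed. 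This reduction is the idea your proposal is missing, and it is what lets the paper avoid re-opening the VSS/ACS/CC analyses.

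The two concrete gaps in your write-up are these. First, in the case $|T| \le t'$, you cannot simply ``apply Proposition~\ref{prop:secure-comp-bisimulation} at threshold $t'$'': both $\vec{\pi}^f$ and $\medProtocol$ have the parameter $t$ hard-wired (degree-$t$ sharing polynomials, ACS parameter $M_1 = n-t$, the mediator waiting for inputs from $n-t$ agents), so the proposition instantiated at $t'$ is a statement about different protocols, not the ones appearing in Proposition~\ref{pro:emulate}. Second, and more seriously, your case $t' < |T| \le t$ rests on the ``correct-when-terminating'' property of the primitives for $n > 3t$, which you justify by appeal to property (P1), ``which the paper mentions is satisfied (but not explicitly proven) by BCG.'' But Proposition~\ref{pro:emulate} \emph{is} the paper's formalization of (P1) for $\vec{\pi}^f$, so citing (P1) here is assuming the conclusion. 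You correctly flag that carrying out your plan would require reproving the binding, agreement, and correctness guarantees of VSS, ACS, and CC for $3t < n \le 4t$; on your route that is not a loose end but the entire content of the claim, and it is precisely the work the padding reduction is designed to avoid.
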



As we just observed, if $4t \le n < 3t$, then  some honest agents might
not terminate.
However, we can show that the BCG protocol has the property that if at
least $2t+1$ honest agents terminate, then all the remaining honest
agents terminate.  
This observation motivates the following definition: 

\begin{definition}\label{def:cotermination}
  A protocol $\vec{\pi}$ \emph{$(t,k)$-coterminates} if,
all adversaries 
$A =  (T,\vec{\tau}_T, \sigma_e)$
 with $|T| \le t$ 
  and all input profiles $\vec{x}$, in 
  all histories of  
  $\vec{\pi}$ with adversary $A$ and input $\vec{x}$,
either all the agents 
not in $T$
terminate or strictly fewer than $k$ agents not in $T$ do.
\end{definition}


\begin{proposition}\label{pro:cotermination}
$\vec{\pi}^f$ $(t,2t+1)$-coterminates.
\end{proposition}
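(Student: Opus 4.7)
The plan is to show that the BCG construction $\vec{\pi}^f$ recalled in Section~\ref{sec:construction} satisfies cotermination at threshold $2t+1$. Fix an adversary $A = (T,\vec{\tau}_T,\sigma_e)$ with $|T|\le t$ and a history of $\vec{\pi}^f$; I want to verify that if at least $2t+1$ honest agents terminate, then every other honest agent $j$ also terminates. The approach is to walk through the five stages of the construction and check, stage by stage, that $j$ cannot be permanently blocked.

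For stages~1 (parallel VSS sharing of the inputs) and~2 (ACS), propagation is immediate from the primitives. The VSS property in Section~\ref{sec:VSS}---that completion of $\VSS^{sh}$ by any honest agent for a given sender forces every honest agent to complete both $\VSS^{sh}$ and $\VSS^{rec}$ for that sender---applies instance-by-instance to the accumulative sets of the $2t+1$ terminated honest agents. Consequently $j$'s accumulative set $U_j$ eventually contains the accumulative sets of those agents, preserving $(n-t,n)$-uniformity of $(U_1,\ldots,U_n)$, and the ACS primitive then forces $j$ to terminate ACS with the same core set $C$.

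For stage~3 (CC) and stage~4 (VSS reconstruction), I would combine the Ready-message mechanism of Section~\ref{ssection:circuit-computation} with the degree-$t$ polynomial structure of the underlying VSS. Each of the $2t+1$ terminated honest agents has computed its share $y_k$ of $f_C(\vec{x})$, broadcast its Ready message, and, in step~4, broadcast $y_k$ itself; by asynchronous delivery $j$ eventually receives all $2t+1$ Ready messages together with all $2t+1$ correct shares. Since $2t+1$ correct evaluations of a degree-$t$ polynomial together with the $\le t$ faulty evaluations contributed by the malicious agents form at least $3t+1$ received values, Berlekamp--Welch decoding recovers $f_C(\vec{x})$, so $j$ can output $(C, f_C(\vec{x}))$ in step~5.

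The main obstacle is verifying that $j$'s own CC computation either completes or can be bypassed, since the Ready threshold $n-t$ may exceed the $2t+1$ Ready messages provided by the terminated honest agents. This is handled by a short case analysis on $n$. For $n > 4t$, BCG's CC is $t$-resilient, so every honest agent eventually computes its share, $j$ included, and the Ready count reaches $n-t$. For $3t < n \le 4t$, the only way the adversary can prevent $j$ from eventually acquiring a share is to force the underlying CC into a degenerate regime in which at most $n-2t$ honest agents progress past CC (this being the tight lower bound guaranteed by the Ready mechanism of Section~\ref{ssection:circuit-computation}); but $n-2t \le 2t < 2t+1$ in this range, so strictly fewer than $2t+1$ honest agents would terminate $\vec{\pi}^f$ in such a regime, contradicting our standing hypothesis. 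Hence $j$ necessarily passes stage~3, reconstructs in stage~4, and terminates $\vec{\pi}^f$, establishing the proposition.
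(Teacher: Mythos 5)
Your overall route---the $2t+1$ terminated honest agents have already computed and sent their shares of $f_C(\vec{x})$, and $2t+1$ correct degree-$t$ shares together with at most $t$ corrupted ones still uniquely determine the secret---is essentially the argument the paper gives (it proves the more general Theorem~\ref{thm:main}(b) in Section~\ref{sec:cotermination-proof}, phrased via $t$-determinacy of the secret-sharing scheme rather than explicit Berlekamp--Welch decoding, and obtains the proposition as a special case). The problem is your resolution of what you call the main obstacle. The Ready mechanism of Section~\ref{ssection:circuit-computation} gives a \emph{lower} bound: when some honest agent terminates a CC instance, at least $n-2t$ honest agents have computed their shares. It does not give an \emph{upper} bound on how many honest agents can progress when one particular agent $j$ is blocked. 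If the adversary stalls $j$ inside CC, the remaining $n-t-1$ honest agents can each still collect $n-t$ Ready messages (from one another plus the malicious agents) and terminate; nothing confines the execution to a ``degenerate regime'' in which at most $n-2t$ honest agents pass CC. So the contradiction you derive for $3t < n \le 4t$ does not go through, and the conclusion that ``$j$ necessarily passes stage~3'' is unjustified---in that range the adversary genuinely may prevent $j$ from ever completing its own circuit computation.

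The fix is to drop that claim entirely: $j$ does not need to finish its own CC instance. As your stage-4 paragraph already shows, the $2t+1$ shares that the terminated agents sent before halting eventually reach $j$, and any realizable set of $2t+1$ shares contains at least $t+1$ honest ones, which by $t$-determinacy pin down $f_C(\vec{x})$ (and hence the unique full extension) regardless of what the malicious agents contribute. Reconstruction from received shares is a purely local step, so $j$ can output $(C,f_C(\vec{x}))$ once those shares arrive, whether or not its own CC computation ever terminates. That is exactly how the paper's proof avoids any case analysis on $n$.
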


We do not prove Proposition~\ref{pro:emulate}
or~\ref{pro:cotermination} here, since we prove a generalization of
them below  
(see Theorem~\ref{thm:main}).

\subsection{Simulating arbitrary protocols}

The goal of this paper is to show that we can securely implement any
interaction with a mediator, and do so in a way that ensure the two 
properties discussed in Section~\ref{sec:additional}.
This is summarized in the
following theorem: 

\begin{theorem}\label{thm:main}
\commentout{
Given a mediator game $\Gamma_d$ with $n$ agents, for every protocol
$\vec{\pi} + \pi_d$ in $\Gamma_d$, there exists a protocol $\vec{\pi}'$
in the cheap-talk such that $\vec{\pi}'$
}
For every protocol $\vec{\pi} + \pi_d$ for $n$ agents and a mediator, there exists a protocol $\vec{\pi}'$ for $n$ agents such that $\vec{\pi}'$
\begin{itemize}
\commentout{
  \item [(a)] $t$-bisimulates $\vec{\pi}$ if $n > 4t$,
  \item [(b)] $(t,t')$-emulates $\vec{\pi}$ if  $3t + t' < n$ and $t \ge t'$, 
\item [(c)] $(t,2t+1)$-coterminates if $t < n/3$,
}
\item[(a)] $(t,t')$-bisimulates $\vec{\pi}$ if $n> 3t+t'$ and $t \ge t'$,
  \item[(b)] $(t,2t+1)$-coterminates if $n > 3t$
  and $\vec{\pi} + \pi_d$ is in canonical form.
\end{itemize}
\commentout{
The expected number of messages sent in  a history of
$\vec{\pi}'$ is 
polynomial in $n$ and $N$, and linear in $c$, 
where $N$ is the expected number of messages sent when running $\vec{\pi} + \pi_d$.
}
Moreover, if $\pi_d$ is responsive, 
the expected number of messages sent a history of
$\vec{\pi}'$ is polynomial in $n$ and $N$, and linear in $c$,  
where $N$ is the expected number of messages sent when running
$\vec{\pi} + \pi_d$
and $c$ is the number of gates in an arithmetic circuit that
implements the mediator's protocol.
\end{theorem}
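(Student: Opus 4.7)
The plan is to build $\vec{\pi}'$ by having the $n$ agents collectively simulate the mediator, using the primitives assembled in Section~\ref{sec:tools}. At every point during the simulation, the agents jointly hold a VSS-sharing of the mediator's current local history, and whenever $\pi_d$ would act, they execute a circuit computation $CC$ on a circuit encoding $\pi_d$. The construction proceeds in rounds. In round $0$, the agents use the $ACS+VSS+CC$ trick at the end of Section~\ref{ssection:circuit-computation} to produce shares of a fresh random string, feed it together with a trivial initial history into the circuit for $\pi_d$, and reconstruct each outgoing message $\mu_{d\to i}$ privately toward $i$ via the single-recipient variant of $\longvec{\VSS}^{rec}$. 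In subsequent rounds, every agent $i$ that wants to deliver its next message to the simulated mediator VSS-shares that message; the agents then use $ACS$ with $M_1=n-t$ over the accumulative set of successfully completed sharings to agree on a core set $C$ of messages that are ready to be ``delivered'' this round; they feed the shared history, the shares of the selected incoming messages, and fresh shared randomness into $CC$ for $\pi_d$; and they privately reconstruct each resulting outgoing message to its recipient. Under canonical form, a simulated STOP message triggers termination.

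For part (a), I need to exhibit the two adversary translations of Definition~\ref{def:bisimulation}. Given $A'=(T,\vec{\tau}_T',\sigma_e')$ in $\vec{\pi}'$, I construct $A=(T,\vec{\tau}_T,\sigma_e)$ by extracting from each round the message that $T$'s shares commit $T$ to delivering, and letting $\sigma_e$ deliver messages to $\pi_d$ in the same order in which the corresponding core sets $C$ are formed under $\sigma_e'$; privacy of $VSS$ and $CC$ guarantees that the distribution of outputs agrees with that of $\vec{\pi}+\pi_d$. The reverse direction, which is the one that genuinely needs bidirectionality, is the real obstacle: given $A=(T,\vec{\tau}_T,\sigma_e)$ in the mediator game, I must choose $\sigma_e'$ so that the $ACS$ output on each round selects exactly the set of agents whose next messages $\sigma_e$ delivers next to $\pi_d$, and have $\vec{\tau}_T'$ share the same messages that $\vec{\tau}_T$ would send. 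This requires that the asynchronous scheduler has enough flexibility to force $ACS$ onto any particular core set realizable in the mediator game (which is why the example in the introduction cannot be subsumed by one-way reductions), and is where the relaxed-scheduler hypothesis of $(t,t')$-bisimulation gets used: for $|T|>t'$, correctness of $VSS$/$ACS$/$CC$ degrades to ``terminates or stalls honestly,'' matching a relaxed mediator-side scheduler.

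For part (b), cotermination follows from the modification at the end of Section~\ref{ssection:circuit-computation}: whenever an honest agent completes a $CC$ invocation, at least $n-2t$ other honest agents have also completed it. By induction on the simulated rounds, whenever an honest agent $i$ terminates (i.e., reconstructs a STOP message), at least $n-2t\ge 2t+1$ other honest agents have completed the same final round and, by Bracha broadcast of STOP together with $t$-resilient reconstruction, will themselves terminate; so either all honest agents terminate or strictly fewer than $2t+1$ do. The message-complexity bound uses responsiveness: the first property of a responsive $\pi_d$ caps non-trivial mediator turns at one per incoming agent message, and the second collapses runs of empty turns, so the number of simulated rounds is at most $N+1$. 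Each round invokes $O(n)$ instances of $VSS$ and $ACS$ plus one $CC$ on a circuit of size $O(c)$, so the expected total message count is polynomial in $n$ and $N$ and linear in $c$, as required.
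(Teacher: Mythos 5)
Your construction is essentially the ``naive construction'' of Section~\ref{sec:implementation}, and it inherits its fatal flaw: you never specify \emph{when} an agent's own simulated turns in $\vec{\pi}+\pi_d$ occur relative to the simulated mediator turns. If (as your round structure implies) agent $i$ advances its simulated local history whenever it is actually scheduled in $\vec{\pi}'$, then some mediator-game schedules are unrealizable: consider the protocol where the mediator sends STOP on its first turn and each agent outputs $1$ iff it receives STOP before its own first turn. Since reconstructing any mediator message requires many actual turns of every agent, every agent's first simulated turn happens before the simulated STOP arrives, so the all-ones output never occurs in $\vec{\pi}'$, and clause (a) of Definition~\ref{def:bisimulation} fails. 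The missing idea is to decouple the simulated clock from the actual one by letting the scheduler drive it: in the paper each agent sends itself $\proceed_i$ and $\proceed_{d,r}$ messages and advances $h_{i,k}$ (resp.\ participates in computing $h^i_{d,k'}$) only upon receiving one, so that $\sigma_e'$ can interleave simulated agent turns and simulated mediator turns arbitrarily. A second rigidity problem is your use of ACS with $M_1=n-t$ to choose the messages delivered in each simulated mediator turn: in the mediator game the scheduler may deliver a single message (or none) before scheduling the mediator, so the set of delivered messages must be allowed to have any size. The paper instead runs a consensus per (sender, message-index) pair to decide inclusion (and, in the message-efficient variant of Section~\ref{sec:message-bound}, an ACS with parameter $1$), followed by an agreed-upon permutation fixing the delivery order.

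Two further points. Your cotermination argument relies on $n-2t\ge 2t+1$, which holds only when $n>4t$, whereas part (b) is claimed for $n>3t$; the quantification also runs the wrong way. The correct argument starts from the \emph{hypothesis} that a set $I$ of at least $2t+1$ honest agents terminates, observes that they have all sent their shares of every mediator message, and invokes $t$-determinacy of the secret-sharing scheme: a realizable pair $(I,S)$ with $|I|\ge 2t+1$ honest agents has a unique full extension, so every remaining honest agent can reconstruct every mediator message (in particular STOP) and terminate. There is no broadcast of STOP. Finally, for the hard direction of bisimulation (cheap-talk adversary to mediator adversary), ``privacy of VSS and CC'' is not by itself enough: the adversary must simulate $\vec{\pi}'$ without knowing the honest inputs, running the simulation with input $\vec{0}_{-T}$ and using the actual mediator-game execution as an oracle for the three input-dependent quantities (how many VSS invocations each honest agent makes, and the values of $f_{k,\ell}(h_{d,k})$ and $g_{k,\ell}(h_{d,k})$), patching the simulated shares to be $j_{k,\ell}$- and $\mu_{k,\ell}$-realizable. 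This is where $t$-uniformity and $t$-determinacy of the VSS/CC implementation are actually needed, and it is absent from your sketch.
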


The construction of $\vec{\pi}'$ is given in 
Section~\ref{sec:implementation} 
and, not surprisingly, uses many of the techniques used by BCG.
And, like BKR, if we allow an $\epsilon$ probability of error we get
stronger results. We define $\epsilon$-$t$-bisimulation
just like $t$-bisimulation (Definition~\ref{def:bisimulation}), except
that, in both clauses, 
the distance between $(\vec{\pi}+\pi_d)(\vec{x},A)$
and $\vec{\pi}'(\vec{x},A')$ is less
than $\epsilon$, where the distance $d$ between probability measures
$\nu$ and $\nu'$ on some finite space $S$ is defined as $d(\nu,\nu') =
\sum_{s \in S} |\nu(s) - \nu'(s)|$. The definition of
$\epsilon$-$t$-bisimulation and 
$\epsilon$-$(t,t')$-bisimulation are analogous. A
protocol \emph{$\epsilon$-$(t,k)$-coterminates} if it
$(t,k)$-coterminates with probability $1-\epsilon$.  
\begin{theorem}\label{thm:main-eps}
For every protocol $\vec{\pi} + \pi_d$ for $n$ agents and a mediator
and all $\epsilon > 0$, there exists a protocol $\vec{\pi}'$ for $n$
agents such that $\vec{\pi}'$ 
\begin{itemize}
  \item[(a)] $\epsilon$-$(t,t')$-bisimulates $\vec{\pi}+\pi_d$ if $n > 2t+t'$
  and $t \ge t'$, 
  \item[(b)] $\epsilon$-$(t,t+1)$-coterminates if $n > 2t$
  and $\vec{\pi} + \pi_d$ is in canonical form.
\end{itemize}
\commentout{
The expected number of messages sent in  a history of
$\vec{\pi}'$ is 
polynomial in $n$ and $N$, and linear in $c$, 
where $N$ is the expected number of messages sent when running $\vec{\pi} + \pi_d$.
}
Moreover, if $\pi_d$ is responsive, $\vec{\pi}'$ can be implemented in
such a way that the expected number of messages when running
$\vec{\pi}' + \pi_d$ is polynomial in $n$ and $N$, and linear in $c$,  
where $N$ is the expected number of messages sent when running $\vec{\pi} + \pi_d$.
\end{theorem}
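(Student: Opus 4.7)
The plan is to construct $\vec{\pi}'$ exactly as in the proof of Theorem~\ref{thm:main}, but to replace every invocation of the BCG VSS and CC primitives (which require $n > 4t$ for perfect correctness) with the corresponding BKR primitives, which achieve the same guarantees with probability at least $1 - \epsilon_0$ whenever $n > 3t$ and can be instantiated with $\epsilon_0$ as small as desired. The shift from BCG's $n > 4t$ to BKR's $n > 3t$ in each primitive translates directly into a reduction of the bisimulation threshold from $n > 3t+t'$ to $n > 2t+t'$ in clause~(a), and of the cotermination threshold from $n > 3t$ to $n > 2t$ in clause~(b). The ACS, broadcast and consensus primitives already work at the $n > 3t$ threshold with no $\epsilon$ loss, so they need not be changed.

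The proof then proceeds in two main steps. First, because $\vec{\pi} + \pi_d$ is bounded and, in the responsive case, $\pi_d$ is representable by a circuit with $c$ gates, the construction of $\vec{\pi}'$ invokes at most $P(n, N, c)$ instances of VSS, CC, and the other primitives, for some polynomial $P$ (linear in $c$). Fix $\epsilon_0 := \epsilon / P(n, N, c)$ and run every BKR primitive with error tolerance $\epsilon_0$. By a union bound, the probability that any primitive fails to satisfy its specification during the execution of $\vec{\pi}'$ is at most $\epsilon$. Conditioned on the complementary ``good'' event, the argument used to prove Theorem~\ref{thm:main} applies verbatim: one builds the adversary $A'$ from $A$ (and vice versa) as in the two clauses of Definition~\ref{def:bisimulation}, and the cotermination argument goes through unchanged. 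Consequently, the total variation distance between the induced output distributions is at most $\epsilon$, which yields both $\epsilon$-$(t,t')$-bisimulation and $\epsilon$-$(t,t+1)$-cotermination.

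Second, I check that the structural arguments used in Theorem~\ref{thm:main} still go through under the weaker assumption $n > 2t+t'$ (respectively $n > 2t$). The core-set argument only requires that every quorum of size $t+1$ contains at least one honest agent, which holds since $n > 2t$; the ``Ready'' mechanism at the end of Section~\ref{ssection:circuit-computation} still guarantees that whenever an honest agent terminates a CC instance, at least $n - 2t$ other honest agents have computed their share, which suffices to drive the cotermination argument (with threshold $t+1$ rather than $2t+1$); and the random-bit generation of Section~\ref{ssection:circuit-computation} remains unbiased because any $t+1$-subset meets the honest set. The message-complexity claim is then inherited from the BKR primitives, each of which uses a number of messages polynomial in $n$ and, for CC, linear in the circuit size.

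The main obstacle is the accounting of the $\epsilon$-budget across the many primitive invocations. One must check that $P(n,N,c)$ is genuinely polynomial even after including the primitives spawned by the responsive-mediator optimization of Section~\ref{sec:message-bound}, and that the BKR error bound $\epsilon_0$ holds uniformly over histories so that conditioning on the success of earlier primitive invocations does not degrade the failure probabilities of later ones. Both facts follow from standard properties of the BKR primitives, but the union bound must be set up carefully so that the per-primitive tolerance $\epsilon_0$ does not shrink faster than polynomially and thereby inflate the expected communication cost beyond the polynomial-in-$n,N$-and-linear-in-$c$ bound claimed in the theorem.
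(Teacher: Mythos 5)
Your proposal takes essentially the same route as the paper: the paper's entire proof of this theorem consists of saying that $\vec{\pi}'$ is built exactly as in Theorem~\ref{thm:main} but with the BKR implementations of VSS and CC substituted for the BCG ones, after which the argument for part (a) is declared identical to that of Theorem~\ref{thm:main}(a) and part (b) follows because the BKR primitives $\epsilon$-$(t,t+1)$-coterminate. Your additional union-bound accounting (setting the per-primitive tolerance to $\epsilon/P(n,N,c)$) is a reasonable attempt to fill in what the paper leaves implicit, but note one wrinkle: in the general, non-responsive case the paper explicitly observes that the mediator may be scheduled an unbounded number of times, so the number of primitive invocations is not bounded by any fixed $P(n,N,c)$ and the union bound as you set it up only covers the responsive case; the paper itself does not address this either.
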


\commentout{
\subsection{Communication between adversarial
  entities}\label{sec:adv-extended} 
\commentout{
TODO: Here we explain how malicious agents and the adversary can
communicate, and give show that we can assume w.l.o.g. that the
adversary acts as a single entity. In particular that the local states
of malicious agents and the scheduler are public to all malicious
parties at all times. 
}
Although our model assumes that malicious agents and the scheduler
have no shared information, it can be shown that they can effectively
communicate.  Thus, we may assume without loss of generality that the
adversary is a single entity that controls all malicious agents and
the scheduler and knows their local states. A detailed
description of how this can be done is given in
\cite{game-theory-paper}. 
}

\commentout{
\subsection{A generalization of Theorem~\ref{thm:main}}\label{sec:med-extended}

TODO: Here we prove that sending a message from agent $i$ to $j$ is
essentially equivalent to sending a message to the mediator and make
the mediator deliver it to $j$ (this is actually a feature of
asynchronous games that might not happen in the synchronous case,
since making the mediator act as intermediary makes the delivery of
the message take two rounds instead of one.
}

\subsection{Adversaries in asynchronous systems}\label{sec:adversaries}

Even though throughout this paper we consider the scheduler and
malicious players to be separate adversarial entities, we show next
that they can coordinate. In particular, we show that malicious
players and the scheduler can communicate even though the scheduler
cannot send or receive messages. 

To see that a malicious player can send information to the scheduler,
it suffices to note that a player can encode any unary string $1^k$ by
5for instance, use the following scheme: Whenever a malicious player
sending $k$ messages to itself. Since the scheduler knows the
number of messages sent by each player and the recipient of each 
message, it can ``receive'' such messages.
patterns. The scheduler can communicate with the malicious players
by using the following scheme: Whenever a malicious player
$i$ is scheduled, it just sends a message to itself and performs a
$\Done$ action. The scheduler then delivers the message to
$i$ and schedules $i$ again. This process is repeated until $i$ is
scheduled before receiving the message it sent to itself.
Agent $i$ can interpret the number $k$ of times that this process
is repeated as the unary string $1^k$. 

Since the scheduler and the malicious players can coordinate by
communicating in this fashion, this shows that without loss of generality we
can view the adversary as a single entity that controls both the
malicious parties and the scheduler simultaneously. We will assume
such an adversary in the rest of the paper.

\subsection{Other models}\label{sec:other-models}

\commentout{
In this setting,  techniques similar to those presented in
Section~\ref{sec:adv-extended} can still be applied, 
and thus in this model we can still assume that malicious agents and
the scheduler cooperate. In particular we can assume that malicious
agents can send multiple messages during their turn. 

This remark motivates the idea of instead of considering a model in
which the scheduler may interrupt the agents, we get the same
scenario by considering protocols in which agents send at most one
message during their turns. Define a \emph{single-message protocol} as
a protocol in which, for all inputs and all runs, each agent sends at
most one message during each of its turns. 

It turns out that single-message protocols are essentially equivalent
to ordinary protocols. This equivalence can be captured in the
following proposition: }



Before proving our main results, we discuss some of the choices made
in our formal model and show that they are essentially being made
without loss of generality.
We start by considering 
our assumption that agents perform a sequence of actions
atomically when they are scheduled.
We next 
 show
that
 we would get theorems equivalent  
 to the ones that we are claiming
if we had instead assumed that agents perform just a single action
when they are scheduled.
To prove this, we first need the following notion:

\begin{definition}
A protocol $\vec{\pi}$ is \emph{$N$-message bounded} if for all inputs
and all histories, no player ever sends more than $N$ messages in a
single turn. A protocol is \emph{message bounded} if it is
$N$-message bounded for some $N$. 
\end{definition}


\begin{proposition}\label{prop:equivalence}
  There exist a function $H$ from
    message-bounded  
   protocols to single-action
  protocols such that for all profiles $\vec{\pi}$, the following
holds: 
\begin{itemize}
\item[(a)] For all schedulers (resp., relaxed schedulers) $\sigma_e$
  there exists a scheduler (resp., relaxed scheduler) $\sigma_e'$ such
  that, for all input profiles $\vec{x}$, 
\commentout{
  $O(\vec{\pi}, \sigma_e, \vec{x})$
  and $O(H(\vec{\pi}), \sigma_e', \vec{x})$
}
$\vec{\pi}(\vec{x}, \sigma_e)$ and $H(\vec{\pi})(\vec{x}, \sigma_e')$
are identically distributed,
    where we take $H(\vec{\pi}) = (H(\pi_1), \ldots, H(\pi_n))$
  and we view $\sigma_e$ and $\sigma_e'$,  respectively, as the
  adversaries (i.e., we take $T = \emptyset$).
  
\item [(b)] For all schedulers (resp., relaxed schedulers) $\sigma_e'$
  there exists a scheduler (resp., relaxed scheduler) $\sigma_e$ such
  that, for all input profiles $\vec{x}$, 
\commentout{  
  $O(\vec{\pi}, \sigma_e, \vec{x})$
and $O(H(\vec{\pi}), \sigma_e', \vec{x})$
}
$\vec{\pi}(\vec{x}, \sigma_e)$ and $H(\vec{\pi})(\vec{x}, \sigma_e')$
 are identically distributed.
\end{itemize}
\end{proposition}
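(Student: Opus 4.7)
The plan is to construct $H$ explicitly, by having $H(\pi_i)$ simulate $\pi_i$ turn-by-turn, emitting its actions one at a time across multiple single-action turns. Concretely, $H(\pi_i)$ maintains a local queue and a local ``emulated history'' representing $\pi_i$'s view. When $H(\pi_i)$ is scheduled, if the queue is nonempty, it pops and performs exactly one action (a $\Snd$ or $\Comp$) and then issues $\Done$; if the queue is empty, it appends to the emulated history all $\Rec$ events received since its previous scheduling, computes $\pi_i$'s next turn (a sequence of at most $N$ $\Snd$/$\Comp$ actions followed by $\Done(i)$), pushes them into the queue, pops and performs the first one, and issues $\Done$. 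Since $\vec{\pi}$ is $N$-message bounded, each simulated $\pi_i$-turn spans at most $N+1$ consecutive single-action turns of $H(\pi_i)$.

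For part (a), given $\sigma_e$ operating on $\vec{\pi}$, I build $\sigma_e'$ that mimics $\sigma_e$ but, every time $\sigma_e$ would schedule agent $i$ (and thus observe the block of events $\Snd,\ldots,\Snd,\Comp,\ldots,\Done(i)$), instead schedules $i$ for the $N+1$ consecutive single-action turns needed to produce that same block, delivering no messages to any agent in between. All $\Rec$ events chosen by $\sigma_e$ are issued at the matching points by $\sigma_e'$. A straightforward induction on the length of the global history shows that the emulated history inside $H(\pi_i)$ equals the local history of $i$ in the $\vec{\pi}$-run, so the two runs produce identically distributed outputs; moreover, $\sigma_e'$ delivers exactly the same messages as $\sigma_e$, so it is a regular (resp.\ relaxed) scheduler iff $\sigma_e$ is.

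For part (b), given $\sigma_e'$ operating on $H(\vec{\pi})$, I construct $\sigma_e$ by ``collapsing'' the $H(\vec{\pi})$-history: identify the maximal blocks of consecutive $\Sch(i)$ events belonging to one simulated $\pi_i$-turn (from the point at which $H(\pi_i)$'s queue refills until it is drained), and in the $\vec{\pi}$-run replay each such block as a single $\Sch(i)$. Any $\Rec$ or $\Sch(j)$ events that $\sigma_e'$ inserted in between the single actions of such a block are reordered by $\sigma_e$ to occur \emph{after} the collapsed turn, in the same relative order. The main obstacle is checking that this reordering preserves the joint distribution of outputs: I would prove by induction on the global time that, at each collapse boundary, the local histories of all agents are equal (up to the harmless reordering inside an agent's turn) in the two runs. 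The key fact enabling this is that once $H(\pi_i)$'s queue is loaded it acts purely mechanically: its subsequent emitted messages and internal computations are a function of the queue alone and do not depend on any $\Rec$ events delivered while the block is in progress. Hence those delayed $\Rec$ events only affect future turns, and can safely be postponed in $\sigma_e$ without altering outputs. Since $\sigma_e$ issues the same set of $\Rec$ events as $\sigma_e'$, it is a regular (resp.\ relaxed) scheduler iff $\sigma_e'$ is, which completes both parts.
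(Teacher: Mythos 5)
Your construction of $H$ and your part (a) are essentially the paper's own argument: the paper's $H(\pi_i)$ buffers the pending messages of a simulated turn in a queue $U_i$ and emits one per scheduling, and its $\sigma_e'$ likewise schedules $i$ repeatedly (up to $N+1$ times, using message-boundedness) until the block is emitted. The only substantive divergence is in part (b). The paper's $\sigma_e$ performs the atomic $\vec{\pi}$-turn of agent $i$ at the \emph{first} single-action scheduling of $i$'s block and treats the remaining schedulings of that block as no-ops (via a counter $mes_i$); interleaved deliveries and schedulings of other agents are left exactly where $\sigma_e'$ put them, which is legal because all of $i$'s messages for the block already exist in the $\vec{\pi}$-run by the time $\sigma_e'$ could ask to deliver any of them. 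You instead collapse at the \emph{end} of the block and postpone the interleaved $\Rec$ and $\Sch(j)$ events past the collapsed turn, which forces you to prove the extra reordering lemma that these postponements are output-invariant (your ``the queue acts purely mechanically'' observation). That lemma is correct at the level of rigor of the paper's own sketch, but the front-loading choice buys you the correspondence for free and also behaves more gracefully when $\sigma_e'$ abandons a block partway through (your end-collapse never fires in that case, so you would need to say something about incomplete blocks). Both routes establish the proposition; the paper's bookkeeping is just the one that avoids the reordering argument.
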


The converse of Proposition~\ref{prop:equivalence} is trivial, since
single-action strategies are strategies.  

It
follows from
Proposition~\ref{prop:equivalence} that Theorem~\ref{thm:main} holds
even if we restrict agents to using single-action strategies.
\commentout{
To see this, note that given a single-action protocol with a mediator
$\vec{\pi} + \pi_d$, we can view it as a standard protocol (in our
setup), and construct a protocol $\vec{\pi}'$ without a mediator that
satisfies the properties of Theorem~\ref{thm:main}. Now, consider the
single-action protocol $\vec{\pi}'' := H\left(\vec{\pi}'\right)$. 
}

\begin{proof}
%
Intuitively, $H(\pi_i)$ is identical to $\pi_i$, except that rather
than sending a sequence of messages when it is scheduled, $i$ sends
the messages one at a time.  The scheduler $\sigma_e'$ is then chosen to ensure
that $i$ is scheduled so that it sends all of its messages as if they
were sent atomically.  In addition to keeping track of the messages it
has sent and received, $i$ uses the variable $U_i$ whose value is a
sequence of mesages (intuitively, the ones that $i$ would have sent at
this point in the simulation of $\pi_i$ that it has not yet sent), initally set 
to the empty sequence, and a binary variable $\next$, originally set
to 1.
\commentout{
\begin{itemize}
\item An integer $k$ representing the turn being simulated in
 $\vec{\pi}$. It is initialized at 0.
\item A sequence $h_{i,0}, h_{i,1}, \ldots, h_{i,k}$, in which
   $h_{i,\ell}$ represents the simulated local history of agent $i$ in
\item A buffer $B_i^s$ of messages that $i$ would like to send in
  $\vec{\pi}$. It is initialized empty.
\item A buffer $B_i^r$ of messages that $i$ would be receiving in
$\vec{\pi}$. It is initialized empty. 
\end{itemize}

Whenever $i$ is scheduled, it appends all the messages received since
its last turn to the buffer $B_i^r$ in the same order they were
received and acts as follows. If the buffer $B_i^s$ of messages is
empty, $i$ updates $k$ to $k+1$ and computes $h_{i,k+1}$ by appending
all messages in $B_i^r$ to $h_{i,k}$. Then, $i$ computes which
messages $m_{i,k,1}, m_{i,k,2}, \ldots, m_{i,k,\ell}$ would $i$ send
in $\vec{\pi}$ if it had history $h_{i,k+1}$ and appends them to
$B_i^s$ and $h_{i,k+1}$. Afterwards, $i$ appends a special message
$\bot$ to $B_i^s$. Finally, $i$ pops the first message $msg$ out of
$B_i^s$. If $msg \not = \bot$, $i$ sends $msg$ to its recipient; else,
$i$ finishes its turn without sending any message. 

In the case that $B_i^s$ is not empty when $i$ is scheduled, $i$
proceeds as in the last step of the previous case: it pops the first
message $msg$ out of $B_i^s$. If $msg \not = \bot$, $i$ sends $msg$ to
its recipient; else, $i$ finishes its turn without sending any
message.}
 When $i$ is scheduled by $\sigma_e'$, $H(\pi_i)$ proceeds as follows:
 If $\next = 1$, then $i$ sets $U_i$ to the sequence of
 messages that it would send with $\pi_i$ given its current history.
 (If $\pi_i$ randomizes, then $H(\pi_i)$ does the same randomization.
If $U_i$ is the empty sequence (so $\pi_i$ would not send any
messages at that point), $i$ performs the action $\Done(i)$, and
 outputs whatever it does with $\pi$;  otherwise, $i$ sets $\next$ to
 $0$, sends the first message in  $U_i$ to its intended recipient, and
 removes this message from $U_i$.  If $\next=0$, then if $U_i$ is
 empty, $i$ sets $\next$ to 1, sends  $\Done(i)$, and outputs whatever
 it does with $\pi$; otherwise,  $i$  sends the first  
message in $U_i$ to its intended recipient and removes it from $U_i$.

\commentout{
\textbf{Part (a)}
  Given a scheduler $\sigma_e$ in $\vec{\pi}$, we construct $\sigma_e'$
given $\sigma_e$ and $\vec{\pi}$, we construct $\sigma_e'$
In more detail,
$\sigma_e'$ keeps track of which local history $h_e^s$ would
$\sigma_e$ have in $\vec{\pi}$. How $\sigma_e'$ acts according to
$h_e^s$ and how $\sigma_e'$ updates it is shown next. 

\commentout{
Say that two scheduler's histories $h_e$ and $h_e'$ are \emph{pseudo-equivalent} if the sequence of $\Snd$ and $\Rec$ events in $h_e$ and $h_e'$ are identical. Note that since the scheduler is oblivious to the content of the messages pseudo-equivalent histories may contain different messages, however they are sent in the same order. We assume that $h_e^s$ and the actual scheduler's history $h_e$ in $H(\vec{\pi})$ are pseudo-equivalent (which holds at the beginning of the game), and show that after each of the following sequence of actions performed by $\sigma_e'$ this assumption still holds.} 

At the beginning of the game, $\sigma_e'$ sets $h_e^s$ to an empty history. Every time $\sigma_e'$ has to act, it computes which action $a$ would $\sigma_e$ perform in $\vec{\pi}$ if it had local history $h_e^s$. If $a$ is an action of the form $\Rec(i, j, \ell)$, then $\sigma_e'$ proceeds to deliver $j$'s $\ell$th message to $i$
 and appends an event of the form $\Rec(i,j,\ell)$ to $h_e^s$; else, if $a$ is an action of the form $\Sch(i)$, $\sigma_e'$ repeatedly schedules agent $i$ until $i$ finishes its turn without sending any messages. If $i$ sent $\ell$ messages this way, $\sigma_e'$ appends to $h_e^s$ a $\Sch(i)$ event, followed by $\ell$ events of the form $\Snd(i,\cdot, \cdot)$ (with their respective recipients and index), and followed by a $\Done(i)$ event. If the scheduler cannot follow this procedure because of an inconsistency (for instance, this may happen if the agents are not playing $H(\vec{\pi})$), $\sigma_e'$ acts arbitrarily from there on. However, by construction, it is easy to check that if agents play $H(\vec{\pi})$ there are no inconsistencies and that the simulated local histories of agents in $H(\vec{\pi})$ is identically distributed to their actual local histories in $\vec{\pi}$, which means that $O(\vec{\pi}, \sigma_e, \vec{x})$ is identically distributed to $(H(\vec{\pi}), \sigma'_e, \vec{x})$.
}
Since $\vec{\pi}$ is message bounded, there exists an $N$ such that
$\vec{\pi}$ is $N$-message bounded. For part (a),  
given 
$\sigma_e$, we construct
$\sigma_e'$ so that it simulates $\sigma_e$, except that 
if $\sigma_e$ schedules $i$, $\sigma_e'$ schedules $i$ repeatedly
until 
either 
it observes $\Done(i)$
or until $i$ sends messages in $N+1$ consecutive turns.
Since $\vec{\pi}$ is $N$-message bounded, it
is clear that
$\vec{\pi} (\vec{x}, \sigma_e)$ and $H(\vec{\pi})(\vec{x}, \sigma_e')$
 are identically
distributed.
Note that it is necessary for $\vec{\pi}$ to be $N$-message bounded,
since if the scheduler schedules each player $i$ repeatedly until it
stops sending a message during its turn, a player that keeps sending
messages would be scheduled indefinitely, and so would prevent other
players from being scheduled.

\commentout{
\textbf{Part (b)}

Given a scheduler $\sigma_e'$ in $H(\vec{\pi})$, consider a scheduler
$\sigma_e$ in $\vec{\pi}$ that simulates which local history $h_e^s$
would $\sigma_e'$ have in $H(\vec{\pi})$ and performs analogous
actions to those that $\sigma_e'$ given $h_e^s$. To help with the
simulation, $\sigma_e$ keeps track of a message buffer $B_e^i$ for
each agent $i$, all of them initially empty. We show next exactly how
$\sigma_e$ updates these buffers and $h_e^s$. 

\commentout{
Assume that the set of $\Rec$ events are identical in $h_e^s$ and in $\sigma_e$'s actual history $h_e$, and that the set of $\Snd$ events in $h_e^s$ is a subset of that of $h_e$.} Given $h_e^s$, $\sigma_e$ computes which action $a$ would $\sigma_e'$ perform in $H(\vec{\pi})$. If $a$ is of the form $\Rec(i,j,\ell)$, $\sigma_e$ delivers $j$'s $\ell$th message to agent $i$ 
 and appends $\Rec(i,j,\ell)$ to $h_e^s$. Else, if $a$ is of the form
 $\Sch(i)$, $\sigma_e$ acts depending on the value of $|B_e^i|$. If
 $|B_e^i| = 0$, $\sigma_e'$ does the following two steps:
\begin{enumerate}
\item [Step 1:] $\sigma_e'$ schedules $i$. Then, $\sigma_e'$ appends to $B_e^i$ all messages sent by $i$ during its turn, followed by a special $\bot$ message.
\item [Step 2:] $\sigma_e'$ pops the first message out of $B_e^i$, if it is a non-$\bot$ message, it appends to $h_e^s$ a $\Sch(i)$ event, followed by $\Snd(i,\cdot,\cdot)$ event (with its respective recipient and index), and a $\Done(i)$ event. Else, it appends a $\Sch(i)$ event followed by a $\Done(i)$ event.
\end{enumerate}
If $|B_e^i| > 0$ instead, $\sigma_e'$ only performs Step 2. 
If the $\sigma_e$ cannot follow this protocol because of an inconsistency, $\sigma_e$ acts arbitrarily from there on.
Again, by construction, if the agents play $\vec{\pi}$, there are no
inconsistencies and the local histories of the agents computed in
$H(\vec{\pi})$ are equally distributed to the actual local histories
of the agents in $\vec{\pi}$, which shows that $O(H(\vec{\pi}),
\sigma_e, \vec{x})$ is identically distributed to $(\vec{\pi},
\sigma_e', \vec{x})$.  
\end{proof}
}
For part (b), given $\sigma_e'$, we construct $\sigma_e$ so that it
simulates $\sigma_e$.  There is one issue that we have to deal with.
Whereas with $\sigma_e$, an agent $i$ can send $k$ messages each time
it is scheduled, with $\sigma_e'$, it can send only one message when it is
scheduled.  The scheduler $\sigma_e'$ constructed from $\sigma_e$ in
part (a) scheduled $i$ repeatedly until it sent all the messages it
did with $\sigma_e$.  But we cannot assume that the scheduler
$\sigma_e'$ that we are given for part (b) does this.  Thus,
$\sigma_e$ must keep track of how many of the messages that each agent $i$
was supposed to send the last time it was scheduled by $\sigma_e$ have
been sent so far.  To do this,  $\sigma_e$ uses variables
$mes_i$, one for each agent $i$, initially set to 0, such that
$mes_i$  keeps track of how many of the messages that agent $i$ sent with
$\sigma_e$ still need to be sent by $\sigma_e'$. 
As we observed above, given a local history $h$ of 
the scheduler where the agents use 
$\vec{\pi}$ 
 and
the scheduler uses 
$\sigma_e$, there is a corresponding 
local history $h'$ of the scheduler where the agents use $\vec{\pi}'$
and the scheduler uses  $\sigma_e'$.   If, given $h'$, $\sigma_e'$
schedules agent $i$ with  probability $\alpha_i$, then with the same
probability $\alpha_i$, $\sigma_e$ proceeds as follows: if $mes_i =
0$ (which means that all the messages that $i$ sent the last time
it was scheduled have been delivered in $h'$), then $\sigma_e$
schedules $i$, sees how many messages $i$ delivers according $\pi_i$,
and sets $mes_i$ to this number; if $mes_i \ne 0$, then $mes_i$ is 
decremented by 1 but no agent is scheduled.
Again,
it is clear that
that 
$\vec{\pi}(\vec{x}, \sigma_e)$ and $H(\vec{\pi})(\vec{x}, \sigma_e')$ 
 are identically
distributed.
\commentout{
For part (b), given $\sigma_e'$, the construction of $\sigma_e$ is
straightforward.  As we observed above, given a local history $h$ of 
$\sigma_e$ where the agents use $\vec{\pi}$, there is a corresponding
local history $h'$ of $\sigma_e'$.  If, 
in history $h$, agent $i$ made the last move and did not perform
$\Done(i)$, then $\sigma_e$ does nothing (agent $i$ is in the midst
of an atomic move).  If the last move in $h$ has the form $\Done(i)$
for some agent $i$ or was a move by the schduler, then $\sigma_e$
had the same distribution of moves in $h$ as $\sigma_e'$ does in
$h'$.  Again, it is clear that the histories obtained with
$H(\vec{\pi})$ and $\sigma_e'$ are identically distributed to the
corresponding histories with $\vec{\pi}$ and $\sigma_e$, and hence
that $O(\vec{\pi}, \sigma_e, 
\vec{x})$ and $(H(\vec{\pi}), \sigma'_e, \vec{x})$. are identically
distributed.
}
\end{proof}

\commentout{

\commentout{
In this setup, we can consider two special kinds of schedulers: 
\begin{itemize}
\item \emph{Atomic schedulers}, who never perform a $\Done(i)$ action.
\item \emph{Non-atomic schedulers}, who perform a $\Done(i)$ action every time it can (thus, it performs a $\Done(i)$ action immediately after each agent sends a message). 
\end{itemize}
\commentout{ 
We call a scheduler \emph{non-atomic}, in contrast to the atomic
schedulers that we have been considering.
}
Note that the atomic schedulers were the kind of schedulers that we were considering until now. As we now show, for our purposes, the two settings are equivalent.
}

\begin{proposition}\label{prop:equivalence}
\commentout{
  Given a protocol $\vec{\pi}$, there exists a single-message protocol
$\vec{\pi}'$ such that $\vec{\pi}'$ $t$-bisimulates and $t$-emulates
$\vec{\pi}$ for all $t$. Moreover, if $\vec{\pi}$ is a mediator
protocol, $\vec{\pi}'$ can be chosen to be a mediator protocol as
well. 
}
There exist functions $H$ and $H'$ from strategies to strategies, such that for every protocol $\vec{\pi}$, there exists a protocol $\vec{\pi}'$ such that the following holds:
\begin{itemize}
\item[(a)] For all non-atomic schedulers $\sigma_e$, there exists an atomic scheduler $\sigma_e'$ such that $O(\vec{\pi}, \sigma_e, \vec{x})$ is identically distributed to $O(H(\vec{\pi}), \sigma_e', \vec{x})$ for all input profiles $\vec{x}$, where $H(\vec{\pi}) := (H(\pi_1), \ldots, H(\pi_n))$.
\item[(b)] For all atomic schedulers $\sigma_e'$, there exists a
  non-atomic scheduler $\sigma_e$ such that $O(\vec{\pi}, \sigma_e,
  \vec{x})$ is identically distributed to $O(H'(\vec{\pi}), \sigma_e',
  \vec{x})$ for all input profiles $\vec{x}$, where $H'(\vec{\pi}) :=
  (H'(\pi_1), \ldots, H'(\pi_n))$. 
\end{itemize} 
\end{proposition}

}
BCG put further constraints on the scheduler.  Specifically, they
assume that,
except possibly for the first time that agent $i$ is scheduled, 
$i$ is scheduled immediately after receiving a message and only then.
That is, in our terminology, BCG assume that a $\Rec(\cdot, \cdot, i)$
event must be followed by a $\Sch(i)$ event,
and all $\Sch(i)$ events except possibly the first one occur after a
$\Rec(\cdot,\cdot,i)$ event.
We call 
the schedulers that satisfy this constraint \emph{BCG schedulers}.

We now prove a result analogous to Proposition~\ref{prop:equivalence},
from which it follows that we could have obtained our results using a 
BCG scheduler.

\begin{proposition}\label{prop:equivalence2}
  There exist a function $H$ from strategies to strategies such that for
all strategies $\vec{\pi}$ the following holds: 
\begin{itemize}
\item[(a)] For all
  schedulers (resp., relaxed schedulers) $\sigma_e$
  there exists a BCG scheduler (resp., relaxed BCG scheduler)
  $\sigma_e'$ such that, for all input profiles $\vec{x}$, 
\commentout{  
  $O(\vec{\pi} +\pi_d,
  \sigma_e, \vec{x})$ and  $O(H(\vec{\pi}),\sigma_e', \vec{x})$
  }
$\vec{\pi}(\vec{x}, \sigma_e)$ and $H(\vec{\pi})(\vec{x}, \sigma_e')$
  are identically distributed.

\item [(b)] For all BCG schedulers (resp., relaxed schedulers)
  $\sigma_e'$ there exists a scheduler (resp., relaxed scheduler)
  $\sigma_e$ such that, for all input profiles $\vec{x}$, 
\commentout{  
  $O(\vec{\pi}+\pi_d,
  \sigma_e, \vec{x})$ and  $O(H(\vec{\pi}),\sigma_e', \vec{x})$
  }
  $\vec{\pi}(\vec{x}, \sigma_e)$ and $H(\vec{\pi})(\vec{x}, \sigma_e')$
  are identically distributed.
\end{itemize}
\end{proposition}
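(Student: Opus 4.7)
The plan is to mimic the structure of the proof of Proposition~\ref{prop:equivalence}, but now using self-messages rather than repeated scheduling as the mechanism that bridges the two classes of schedulers. The central observation is that in a BCG schedule a $\Sch(i)$ event after the first one can only be triggered by an immediately preceding $\Rec(\cdot,\cdot,i)$ event, so in order to give the BCG scheduler the power to schedule $i$ ``spontaneously'' we must arrange for $i$ to have a pending message to itself that can be delivered to force the scheduling.

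Concretely, I would define $H(\pi_i)$ to maintain the simulated local history of $i$ under $\pi_i$, a buffer $B_i$ of received but unprocessed messages, and a flag distinguishing ``buffering'' turns from ``simulation'' turns. On its first $\Sch(i)$ event (the BCG exception), $H(\pi_i)$ simulates $\pi_i$'s first turn, sends exactly the messages $\pi_i$ would send, and then sends one additional message to itself. On a later $\Sch(i)$ triggered by a $\Rec(\mu,j,i)$ with $j\neq i$, $H(\pi_i)$ simply appends $\mu$ to $B_i$ and performs $\Done(i)$. On a later $\Sch(i)$ triggered by a self-message $\Rec(\mu,i,i)$, $H(\pi_i)$ appends the contents of $B_i$ (in order received) to the simulated $\pi_i$-history, simulates the next $\pi_i$ turn, sends the resulting messages, clears $B_i$, and finishes by sending one new self-message.

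For part (a), given $\sigma_e$, I construct $\sigma_e'$ that simulates $\sigma_e$ step by step against a virtual copy of $\vec{\pi}$. Every $\Rec(\mu,j,i)$ event of $\sigma_e$ with $j\neq i$ is reproduced verbatim, and the BCG-forced $\Sch(i)$ it triggers causes $H(\pi_i)$ merely to buffer $\mu$. A $\Sch(i)$ event of $\sigma_e$ is translated as follows: if this is $i$'s first scheduling, $\sigma_e'$ invokes the BCG exception; otherwise, $\sigma_e'$ delivers the oldest undelivered self-message of $i$, which forces $\Sch(i)$ under the BCG rules and causes $H(\pi_i)$ to process its buffer and simulate the corresponding $\pi_i$ turn. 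Since every simulated $\pi_i$-turn produces exactly one new self-message and consumes exactly one on the following simulation turn, the supply and demand match, and the resulting local histories are coupled to those of $\sigma_e$ running $\vec{\pi}$ so that $\vec{\pi}(\vec{x},\sigma_e)$ and $H(\vec{\pi})(\vec{x},\sigma_e')$ are identically distributed. For part (b), given $\sigma_e'$, I construct $\sigma_e$ that simulates $\sigma_e'$ while suppressing the self-message mechanism: each $\Rec(\mu,j,i)$ event of $\sigma_e'$ with $j\neq i$ is mapped to the corresponding $\Rec$ event in $\vec{\pi}$ (after translating the message index, since $H(\pi_j)$ and $\pi_j$ index the messages from $j$ to $i$ differently), each self-delivery $\Rec(\mu,i,i)$ of $\sigma_e'$ is replaced by a direct $\Sch(i)$ of $\sigma_e$, and each initial $\Sch(i)$ of $\sigma_e'$ is reproduced as a $\Sch(i)$ of $\sigma_e$.

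The hardest part will be the careful bookkeeping around message indices and the liveness requirement for the non-relaxed case. Because $H(\pi_j)$ interleaves self-messages with the ``real'' messages that $\pi_j$ would send, the $\ell$-th message from $j$ to $i$ in $H(\vec{\pi})$ corresponds to a different index in $\vec{\pi}$, so both directions of the construction must carry a bijective translation of indices between the two executions. For the non-relaxed versions, I must additionally show that if $\sigma_e$ delivers every message in $\vec{\pi}$, the $\sigma_e'$ I construct can be made to deliver every message in $H(\vec{\pi})$ (any still-pending self-messages can be flushed once $\sigma_e$ has been fully simulated), and symmetrically for part (b); this is where the care about when to fire the BCG exception versus when to deliver a self-message really matters.
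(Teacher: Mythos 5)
Your proposal is correct and follows essentially the same route as the paper: the paper's $H(\pi_i)$ likewise has $i$ send itself a special $\proceed_i$ message on its first scheduling and simulate a turn of $\pi_i$ only when a $\proceed_i$ message is delivered (buffering all other received messages), with $\sigma_e'$ translating each $\Sch(i)$ of $\sigma_e$ into delivery of a pending self-message and $\sigma_e$ translating each self-delivery of $\sigma_e'$ back into a direct scheduling. Your extra bookkeeping (the explicit buffer, the index translation, and flushing leftover self-messages for the non-relaxed case) is consistent with, and somewhat more careful than, the paper's presentation.
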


\begin{proof}
\commentout{
  This proof follows the same lines as for
  Proposition~\ref{prop:equivalence}. In $H(\vec{\pi})$, each agent
  $i$ simulates what local history would $i$ have in $\vec{\pi}$. For
  this purpose, $i$ computes and updates the following variables
  through the computation, which represent analogous features to those
  of Proposition~\ref{prop:equivalence}: 
\begin{itemize}
\item An integer $k$ initialized at 0.
\item A sequence $h_{i,0}, \ldots, h_{i,k}$ of simulated local
  histories, with $h_{i,0}$ initialized as an empty history. 
\item A buffer $B_i^r$ of messages, initialized empty.
\end{itemize}

When $i$ is scheduled for the first time, it sends a $\proceed_i$
message to itself and finishes its turn. Whenever it is scheduled from
there on, it checks if it received a $\proceed_i$ message since its
last turn and does the following. If $i$ received a $\proceed_i$
message, it appends all messages it received since its last turn,
except the $\proceed_i$ one, to $B_i^r$. Then, it proceeds to update
$k$ to $k+1$ and computes $h_{i,k+1}$ by appending all messages from
$B_i^r$ to $h_{i,k}$. Then it computes which messages $m_{i,k+1,1},
\ldots, m_{i,k+1, \ell}$ would $i$ send in $\vec{\pi}$ if it had
history $h_{i,k+1}$. It sends those messages to their respective
recipients and appends them to $h_{i,k+1}$, then it sends a
$\proceed_i$ message to itself and finishes its turn. Instead, if $i$
didn't receive a proceed message when it was scheduled, it only
appends all messages received since its last turn to $B_i^r$. 

\textbf{Part (a)}

Intuitively, we want to construct $\sigma_e'$ in such a way that it acts in exactly the same way as $\sigma_e$ but substituting all $\Rec(\cdot, \cdot, i)$ events by $\Rec(\cdot, \cdot, i) + \Sch(i)$, and substituting each $\Sch(i)$ event by delivering a $\proceed_i$ message to $i$ and then scheduling $i$. More precisely, the first $n$ actions of $\sigma_e'$ are scheduling agents $1,2,\ldots, n$ respectively (which, if playing $H(\vec{\pi})$, should use their turns only to send their first $\proceed$ messages). Once the $n$ $\proceed$ messages enter the system, $\sigma_e'$ keeps track of which local history $h_e^s$ would $\sigma_e$ have in $\vec{\pi}$ and acts accordingly.

After scheduling the first $n$ agents, $\sigma_e'$ initializes $h_e^s$ to an empty history. Note that, by construction, if agents play $H(\vec{\pi})$, $\sigma_e'$ can identify exactly which messages are $\proceed$ messages; thus we define the scheduler's \emph{trimmed local history} $h_e'$, which is obtained by removing all $\Snd$ and $\Rec$ events involving $\proceed$ messages from the actual scheduler's history $h_e$. Given $h_e^s$, 
$\sigma_e'$ computes which action $a$ would $\sigma_e$ perform in
$\vec{\pi}$ if it had history $h_e^s$. If $a$ is of the form
$\Rec(i,j,\ell)$, $\sigma_e'$ delivers $j$'s $\ell$th message in
$h_e'$ (note that we are not including the $\proceed$ messages!) to
$i$ and appends $\Rec(i,j,\ell)$ to $h_e^s$ (note that this message
exists and is not still delivered by assumption). Then, it proceeds to
schedule agent $i$. Regardless of what $i$ does in its turn,
$\sigma_e'$ does not update $h_e^s$. 

Else, if $a$ is of the form $\Sch(i)$, $a$ delivers $i$'s latest $\proceed$ message and proceeds to schedule $i$. If $i$ sends $\ell$ messages during its turn (now we don't require $\ell > 0$), $\sigma_e'$ appends to $h_e^s$ a $\Sch(i)$ event, followed by $\ell-1$ $\Snd(i, \cdot, \cdot)$ events (with the respective recipients and index of the first $\ell-1$ messages sent), and a $\Done(i)$ event. 

If at any point $\sigma_e'$ cannot proceed with this strategy because
of an inconsistency (for instance, if an agent $i$ sends no messages
after being scheduled with a $\proceed$ message, $\sigma_e'$ cannot
deliver a proceed message to $i$ when desired), it acts arbitrarily
(while maintaining the constraints of being a \BCGsch scheduler) from
there on. However, if agents play $H(\vec{\pi})$, by construction it
is guaranteed that no such inconsistencies ever happen and, in fact,
the local histories of the agents in $\vec{\pi}$ are identically
distributed to their simulated local histories in $H(\vec{\pi})$,
which gives that $O(\vec{\pi}, \sigma_e, \vec{x})$ and
$O(H(\vec{\pi}), \sigma_e', \vec{x})$ are identically distributed. 

\textbf{Part (b)}

In this case, we want $\sigma_e$ to simulate what $\sigma_e'$ would be doing in $H(\vec{\pi})$. For this purpose, $\sigma_e$ keeps track of $\sigma_e'$'s simulated local history $h_e^s$, which is initialized as an empty history. Given $h_e^s$, $\sigma_e$ computes which action $a$ would $\sigma_e'$ perform in $H(\pi)$. If $a$ is scheduling a pleyer $i$ for the first time, then $\sigma_e$ does no action and appends $\Sch(i), \Snd(i,i,1), \Done(i)$ to $h_e^s$. Else, $a$ must consist of an action of the form $\Rec(i,j, \ell)$ action followed by $\Sch(i)$. If 
$j = i$ and $\ell$ is the index of the latest $i$ message in $h_e^s$ (which would be a $\proceed$ message $i$ was playing $H(\pi_i)$), $\sigma_e$ schedules $i$. If during its turn $i$ sends $\ell'$ messages, $\sigma_e$ appends to $h_e^s$ a $\Rec(i,j,\ell)$ event, followed by a $\Sch(i)$ event, followed by $\ell$ events of the form $\Snd(i,\cdot, \ell + 1), \Snd(i, \cdot, \ell+2), \ldots, \Snd(i, \cdot, \ell + \ell')$, followed by an event of the form $\Snd(i,i,\ell+\ell'+1)$ and by a $\Done(i)$ event. Else, if $j \not = i$ or $\ell$ is not $i$'s latest message, $\sigma_e'$ delivers $j$'s $\ell$th message (in $h_e^s$, we assume $\sigma_e$ keeps track of which of the actual messages in $h_e$ this is) to $i$ and appends $\Rec(i,j,\ell) + \Sch(i) + \Done(i)$ to $h_e^s$. If at any moment $\sigma_e$ detects cannot follow this protocol because of an inconsistency, $\sigma_e$ acts arbitrarily from there on.

If agents play $\vec{\pi}$, note that by construction $h_e^s$ is identically distributed to $\sigma_e'$'s local history when agents play $H(\vec{\pi})$. This means that the agents local histories $\vec{h}$ in $\vec{\pi}$ would be identically distributed to the simulated local histories in $H(\vec{pi})$, guaranteeing that the outputs in both scenarios are identically distributed as well.
}
As in Proposition~\ref{prop:equivalence}, the idea is that $\sigma_e'$
simulates $\sigma_e$, but since $\sigma_e$ can schedule an agent only
when it delivers a 
message, we have each agent $i$ send itself special messages, denoted
$\proceed_i$, to ensure that there are always enough messages in the
system.  In more detail, $H(\pi_i)$ works as follows.  When it is first
scheduled, agent $i$ sends itself a $\proceed_i$ message.  Since we
are considering BCG schedulers, agent $i$
is scheduled subsequently only when it receives a message.  If it
receives a message other than $\proceed_i$, it does nothing (although
the message is added to its history).  If it receives a $\proceed_i$
message, then it does whatever it would do with $\pi_i$ given its
current history with the $\proceed_i$ messages
and the $\Sch(i)$ events not preceeded by a $\proceed_i$ message 
 removed, and sends
itself another $\proceed_i$ message.

For part (a), given $\sigma_e$, $\sigma_e'$ first schedules each
agent once (in some arbitrary order), to ensure that that each of
them has sent a $\proceed_i$ message that is available to be
delivered.  Given a history $h'$, $\sigma_e'$ considers
what $\sigma_e$ would do in the history $h$ that results from $h'$ by
removing 
the initial $\Sch(i)$ event for each agent $i$, the last message
that each agent $i$ sends when it is scheduled if it sends a message
at all, and the receipt of these messages.  If $h'$ is a history that
results where the agents are running $H(\vec{\pi})$, then the send
and receive events removed are precisely those that involve $\proceed_i$. 
If $\sigma_e$ delivers a message with some probability, then
$\sigma_e'$ delivers the corresponding message with the same
probability; if $\sigma_e$ schedules an agent $i$ with some
probability, $\sigma_e'$ delivers the last $\proceed_i$ that $i$ sent
and schedules agent $i$ with the same probability.  If there is no
$\proceed_i$ message to deliver, then $\sigma_e'$ does nothing, but
our construction of $H(\pi_i)$ guarantees that if $h'$ is a
history that results from running $H(\vec{\pi})$, 
then there will be such a message that can be delivered.
Again, it is clear that
$\vec{\pi}(\vec{x}, \sigma_e)$ and $H(\vec{\pi})(\vec{x}, \sigma_e')$
 are identically
distributed.

For part (b), given $\sigma_e'$, the construction of $\sigma_e$ is
similar to that of Proposition~\ref{prop:equivalence}.  Again, given a
local history $h$ of $\sigma_e$ where the agents use 
$\vec{\pi}$,
there is a corresponding history $h'$ of $\sigma_e'$ where the agents
use $H(\vec{\pi})$.
If, given input $h'$, $\sigma_e'$ delivers a message with some
probability $p$ and the messages is not a $\proceed_i$ message, then
$\sigma_e$ delivers the corresponding message with  
probability $p$.
If the message is a $\proceed_i$
message, then $\sigma_e$ also schedules agent $i$.  If 
$\sigma_e'$ schedules an agent $i$ with 
probability $p$, and in $h'$ this is the first time that $i$ is
scheduled, then $\sigma_e$ schedules $i$ with probability $p$ and 
otherwise does nothing with probability $p$.  
 Yet again,
it is straightforward to show that
$\vec{\pi}(\vec{x}, \sigma_e)$ and $H(\vec{\pi})(\vec{x}, \sigma_e')$ 
  are identically
distributed.
\end{proof}

\section{The Proof of Theorems~\ref{thm:main} and~\ref{thm:main-eps}}
In this section, we prove Theorems~\ref{thm:main}
and~\ref{thm:main-eps}.  Since the proofs are 
rather complicated, we proceed in stages. 

\commentout{
\subsection{The BCG construction}\label{sec:tools}
To explain the BCG construction, we must first review the tools used
by BCG, specifically,
\emph{broadcast}, \emph{consensus}, \emph{verifiable secret
sharing (VSS)}, \emph{circuit computation},  \emph{accumulative sets},
\emph{agreement on a core set},  and \emph{random polynomial generation}. 
(Accumulative sets and agreement on a core set were introduced by BCG;
the other tools are older.)

 \subsubsection{Broadcast}
 A broadcast protocol involves a \emph{sender} who sends a message $\mu$ to
all agents in such a way that all honest
agents receive the same message.
(Although we talk about ``a broadcast protocol'', this is
really a joint protocol, that is, a protocol for each agent.
Given a joint protocol $\vec{P}$, we use $P_i$ to denote $i$'s part of
the protocol.  The
  sender's protocol is different from those of the other agents.  The
  sender has input $\mu$, the message to be shared; the other agents
  have no input.)
Moreover, if the sender is honest,
the message received by an agent $i$ must be the message $\mu$ that
the sender sent.
More precisely, a \emph{broadcast} protocol invoked by a
sender with input $\mu$ must satisfy the following properties in all
histories:  
\begin{itemize}
  \item If an honest agent terminates \emph{broadcast} with output
        $\mu'$, then all honest agents  
    eventually terminate \emph{broadcast} with output $\mu'$.
      \item If the sender is honest, then all honest agents eventually
                terminate \emph{broadcast} and output $\mu$.
\end{itemize}
Bracha \citeyear{Br84} provides a broadcast protocol that is $t$-resilient
in asynchronous systems
if $n > 3t$. 

\subsubsection{Consensus}
In a consensus protocol, each agent $i$ starts with an
initial preference $y_i \in \{0,1\}$ and must output a value $x \in
\{0,1\}$ such that the following properties are satisfied in all histories:
\begin{itemize}
\item All honest agents terminate with probability 1.
  \item If one honest agent terminates and outputs $x$, then all
        honest agents terminate and output $x$.
  \item If all honest agents have the same initial value $y$, then if
    an honest agent $i$ terminates the protocol, $i$ outputs $y$.
\end{itemize}

Abraham, Dolev and Halpern \citeyear{ADH08} provide a consensus
protocol that is $t$-resilient in asynchronous systems if $n > 3t$. 

\subsubsection{Verifiable secret sharing}\label{sec:VSS}
In a verifiable secret sharing  protocol, a sender starts out with
some secret $s$ that it wants to share.
VSS consists of a pair of
protocols $(\longvec{\VSS}^{sh},\longvec{\VSS}^{rec})$, commonly
referred to as the \emph{sharing protocol} 
and the \emph{reconstruction protocol},
and a designated agent, the \emph{sender},
such that the following properties hold:
\begin{itemize}
  \item If the sender is honest, then every honest agent $i$ will
    eventually complete $\VSS_i^{sh}$.
    \item If an honest agent  $i$ completes $\VSS_i^{sh}$, then all
  honest agents $j$  eventually complete $\VSS_j^{sh}$ and
  $\VSS_j^{rec}$. 
\item The output of $\VSS^{sh}_i$ is called $i$'s \emph{share}
  of the secret.  There is a unique value $s'$ such that if each honest
  agent $i$ runs $\VSS^{rec}_i$ with input $i$'s share of the secret,
  then all the honest agents $j$ will complete $\VSS_j^{rec}$, and 
  will output the same value $s'$, no matter what the malicious agents do.
  \item If the sender is honest, then $s' = s$ (the sender's secret).
  \item If the sender is honest and no honest agent
    $i$ has begun executing $\VSS^{rec}_i$, then the malicious agents
    cannot guess $s$ with probability $> 1/M$ (where $M$ is the
cardinality of the space of possible secrets).
\end{itemize}

With VSS, just as with the broadcast protocol, the
sender's protocol is 
different from that of the other agents; only the sender has the
secret $s$.
Whenever a recipient $i$ receives a message $\mu$ from the sender, it
invokes $\VSS_i^{sh}$ with input $\mu$
and outputs its share of the secret,
which becomes the input to $\VSS_i^{rec}$.
Even though we require each agent $i$ to output the same value $s'$ 
after runing $\VSS_i^{rec}$, a simple modification of
$\longvec{\VSS}^{rec}$ allows 
a single agent to learn the secret, without any other agent getting
any additional information:  If we want only $i$ to learn the secret,
all the agents send their shares to $i$, and $i$ simulates the
computation of $\longvec{\VSS}^{rec}$ locally.  (This depends on the
assumption that the only input to $\VSS^{rec}_j$ is $j$'s share of
the secret, and that it suffices for $i$ to learn the shares of the
honest agents in order to recover the secret.)
However, no other agents learn anything about the
secret (since all they have is their share of the secret). 

BCG provide a
VSS protocol in an asynchronous setting that is $t$ resilient as long  
as $n > 4t$.  BKR showed that if
$n > 3t$, then for all $\epsilon > 0$, there exists 
a $t$-resilient protocol that achieves the VSS properties in
asynchronous systems with probability at least $1-\epsilon$.
More precisely, their protocol has
the property that if some honest agent terminates, then all honest
agents terminate and all the properties above hold, and some honest
agent terminates with probability at least $1-\epsilon$.


\subsubsection{Accumulative sets}\label{ssection:accumulative-sets}

Suppose that we have a global clock, initialized to 0.  We do
not assume that agents have access to the global clock.
An \emph{accumulative set} is a function $U(h,m)$ from histories and global time
to sets such that $U(h,m) \supseteq U(h,m')$ if $m \ge m'$.
(Intuitively, $U(h,m)$ consists of the elements of $U$ at time $m$ in
history $h$.)

\begin{definition}
    Given  $M_1,M_2 \in \mathbb{N}$ with $M_1 \le M_2$, a tuple $(U_1,
  \ldots, U_n)$ of 
accumulative subsets of $\mathbb{N}$ (one for each agent) is
\emph{$(M_1,M_2)$-uniform} in history $h$ if, for 
every agent $i$ that is honest in $h$,
\begin{itemize}
  \item $U_i(h,m) \subseteq \{1,\ldots, M_2\}$ for all times $m \ge 0$;
      \item there exists a time $m_i^h$ such that  $|U_i(h,m_i^h)| \ge M_1$;
\item for all agents $j$ that are honest in $h$, there exists a time 
  $m_{i,j}^h$
    such that $U_i(h,m) = U_j(h,m)$ for all $m \ge m_{i,j}^h$. 
\end{itemize}
\end{definition}

To see how $(M_1,M_2)$-uniform accumulative sets are used,
suppose that each agent $i$ in a system
of $n$ agents has a secret 
$s_i$. The $n$ agents each invoke $t$-resilient VSS concurrently
in 
a system with $t$ malicious agents and $n > 3t$, with agent $i$
acting as the sender with secret $s_i$ in its invocation of VSS.  Let
$U_i(h,m)$ consist of those 
agents $j$ for which $i$ has terminated the sharing phase of the VSS
initiated by $j$ by time $m$ in history $h$.  Clearly $U_i$ is an
accumulative set. 
We claim that $(U_1, 
\ldots, U_n)$ is $(n-t,n)$-uniform.  
Clearly, $U_i(h,m) \subseteq \{1,\ldots, n\}$ for
all times $m$ by 
construction.  Since there at most $t$ malicious agents in each history
and the VSS scheme is $t$-resilient, the properties of VSS guarantee that 
each honest agent $i$ will eventually complete the VSS initiated by each 
honest agent $j$, which means $j$ is included in $U_i(h,m)$ for some $m$, 
and thus there must exist a time $m_i^h$ such that
$|U_i(h,m_i^h)| \ge n-t$. 
Since $U_i(h,m)$ is finite, there must come a time $(m_i^h)^*$ such
that $U_i(h,m')  = U_i(h,(m_i^h)^*)$ for all $m' \ge (m_i^h)^*$.
Let $m_{i,j}^h = \max((m_i^h)^*,(m_j^h)^*)$.  The properties of VSS
guarantee that $j' \in U_i(h,m_{i,j}^h)$ iff $j' \in U_j(h,m_{i,j}^h)$.

\subsubsection{Agreement on a core set}

An agreement on a core set (ACS) protocol is given as
input natural numbers $M_1$ and $M_2$.  Each agent $i$ is also
assumed to have access to an accumulative set $U_i$.  If the tuple
$(U_1, \ldots,U_n)$ is  $(M_1,M_2)$-uniform with respect to the
histories of the ACS protocol, then the following properties must hold:
\begin{itemize}
 \item All honest agents must eventually complete the ACS protocol.
  \item If an honest agent $i$ completes the protocol at time $m$, then it
    output a set $C_i \subseteq U_i(m)$ such that $|C_i| \ge M_1$.
  \item If $i$ and $j$ are honest, then $C_i = C_j$.
\end{itemize}
Thus, all honest agents running an ACS protocol must output the same
set; this set is called the \emph{core set}.  
We denote by $ACS_i(U_i, M_1, M_2)$ agent $i$'s invocation of the ACS
protocol with inputs $M_1$ and $M_2$ relative to accumulative set
$U_i$. Note that although the notation suggests that $U_i$ is the
input to $ACS_i$, the protocol may actually 
check $U_i$ several times while it is running, and $U_i$ may be
different each time it is checked, since $U_i$ may updated in parallel
with $ACS_i$.

BCG provide an ACS protocol
that is $t$-resilient in asynchronous systems if $n > 3t$.

\subsubsection{Circuit computation}\label{ssection:circuit-computation}
Another key primitive that we use is circuit computation. Let
$(\longvec{\VSS}^{sh},\longvec{\VSS}^{rec})$ 
be a VSS scheme,
and let $f: \mathbf{F}_p^N \rightarrow \mathbf{F}_p$ be a circuit with
$N$ inputs consisting only of addition and
multiplication gates. Suppose that each agent $i$ has shares $x^i_1,
x^i_2, \ldots, x^i_N$ of secrets $x_1, \ldots, x_N \in \mathbf{F}_p$
respectively (where the secrets are computed using $(\longvec{\VSS}^{sh},\longvec{\VSS}^{rec})$).
A \emph{circuit computation of $f$ (relative to $(\longvec{\VSS}^{sh},\longvec{\VSS}^{rec})$}), denoted
  $CC(f)$ (we suppress the dependence on
$(\longvec{\VSS}^{sh},\longvec{\VSS}^{rec})$ from now on) has 
  the following properties.
  We assume that there is an input $x_1, \ldots, x_N$ such that 
  each agent $i$ has shares $x_1^i, \ldots, x_N^i$ of
$x_1, \ldots, x_N$.
  Agent $i$'s component of the protocol, denoted $CC_i(f)$,
is given the inputs $x^i_1, x^i_2, \ldots, x^N_i$ and computes a
single output $y_i$, such that the following properties hold:
\begin{itemize}
\item $y_i$ is $i$'s share of $f(x_1,\ldots, x_N)$ (relative to
    $(\longvec{\VSS}^{sh},\longvec{\VSS}^{rec})$). 
  \item After running $CC_j(f)$ with inputs $x_1^j,\ldots,x_N^j$ (but
    before running the reconstruction protocol $\longvec{\VSS}^{rec}$), no malicious
    agent $j$ has any information about the shares $x^i_l$ of an honest agent
  $i$, the values $x_1, \ldots, x_N$, or $f(X_1,\ldots,x_N)$ beyond
  what it had before running $CC_j(f)$,
    even if all the malicious   agents pool their information. 
\item Even after honest agents run $\longvec{\VSS}^{rec}$, no malicious agent $j$ can guess
    the values of the shares $x^i_l$ of an honest agents $i$ or the
  the secrets $x_1, \ldots, x_N$ any better than it could before
  running $CC_j(f)$ if it were given $f(x_1, \ldots, x_N)$. 
\end{itemize}
Simply put, a circuit computation protocol $CC$ allows agents
to compute their share of the output of an arithmetic circuit given
their shares of the circuit's inputs, without revealing any information.

Since it is well known that every function $f: D^N \rightarrow D$ can
be represented by a circuit $f': \mathbf{F}_p^N \rightarrow \mathbf{F}_p$ for a prime
$p \ge |D|$ (viewing the elements of $D$ as the first $|D|$ elements
of $\mathbf{F}_p$), if we can define a protocol $CC(f)$ for all
arithmetic circuits, then we can define a protocol $CC(f)$ for all
functions $f: D^N \rightarrow D$.
This is especially important in the next section, where we use CC to
compute functions whose inputs and outputs are local histories. 

BCG provide
an implementation of $CC(f)$ for all arithmetic circuits $f$ relative to
the VSS protocol that they provide 
that is $t$-resilient in asynchronous systems 
as long as $n > 4t$; 
given $\epsilon > 0$, 
BKR provide an implementation
of $CC(f)$ for all arithmetic circuits relative to the VSS protocol
that they provide that is $t$-resilient in asynchronous systems and has at most
an $\epsilon$ probability of error (i.e., there is a probability
$\epsilon$ that agents remain in deadlock or the output of the
computation will not be the 
appropriate share of the circuit's output) as long as $n > 3t$. 

We can assume without loss of generality that
CC can handle randomized functions. 
That is, if there is a protocol
$CC(f)$ to securely compute every deterministic function $f: D^N
\rightarrow D$, then there is a protocol $CC(f)$ to securely compute
every randomized function $f: D^N \rightarrow D$.  A randomized
function $f: D^N \rightarrow D$ can be viewed as a deterministic
function once it is given sufficiently many random bits,
that is, it can be identified with a deterministic function $f: D^N \times
\{0,1\}^{N'} \rightarrow D$ for $N'$ sufficiently large.  Using 
ACS, VSS, and (deterministic) CC,
 the agents can easily compute shares for $N'$ random bits as
follows. 
\begin{enumerate}
\item Each agent $i$ chooses a random bit $b_i$ and shares it using
    VSS.
\item Using ACS, the agents agree on a common set $C$ consisting of
    at least $t+1$ agents who correctly shared a bit $b_i$ at step
  1. Set $b:= \oplus_{i \in C} b_i$ (where $\oplus$ denotes sum mod 2).
  \item Each agent $i$ computes its share of $b$ using CC. 
\end{enumerate}
If $n > 2t$, then there are at least $t+1$ honest agents, so each
honest agent will get shares from at least $t+1$ agents.  Since the
set of $t+1$ agents agreed on using ACS contains at least one honest
agent, the bit $b$ must be truly random.

We can also assume without loss of generality that whenever an honest
agent terminates a CC computation of some function $f(x_1, \ldots,
x_N)$, even in the presence of at most $t$ malicious agents, at least
$n-2t$ other honest agents $i$ have computed their share $y_i$ of
$f(x_1, \ldots, x_N)$. This can be ensured by having an honest agent
$i$ send a \emph{Ready} message to all agents when it finishes the
conputation of $y_i$, and terminating the CC
procedure when it receives $n-t$ \emph{Ready} messages. If
there are at most $t$ malicious agents, if an agent receives $n-t$
\emph{Ready} messages, at least $n-2t$ are from honest agents who
genuinely computed their own share. This property will be critical
later, since it guarantees that sufficiently many honest
agents are running the protocol at roughly the same pace. 

}
\subsection{$t$-uniform VSS and CC and determinate VSS}\label{sec:uniform-CC}
BCG's implementation of VSS satisfies some additional properties that
they do not make use of, but that we will need in our construction, so
we outline them here.

Given a sequence $I = \{i_1, \ldots, i_n\}$ of distinct honest agents,
a sequence $S = \{s_1, \ldots, s_n\}$ of values, and a secret $s$,
we say that $(I,S)$ is \emph{$s$-realizable} by a VSS (resp., CC)
implementation if, for that implementation, there exists an agent $i$
such that the  
the event that each agent $i_k$   
computes $s_k$ as the output of
$i$'s invocation of VSS with secret $s$ has nonzero probability. In other
words, $(I,S)$ is $s$-realizable if $S$ could be the output of the agents
in $I$ running VSS with secret $s$.
$(I,S)$ is realizable if it is $s$-realizable for some $s$. 

We say that $(I', S')$ is an
$s$-\emph{extension} of $(I,S)$ if $I$ is a prefix of $I'$, $S$ is a
prefix of $S'$
and $(I', S')$ is $s$-realizable. $(I', S')$ is
a \emph{full} $s$-extension if $I'$ is the set of all agents. Again, we say
that $(I',S')$ is an extension of $(I,S)$ if it is an $s$-extension of
$(I,S)$ for some
$s$; it is a \emph{full extension} if, in addition, $I'$ is the set of all
agents. 
%
We omit the $I$ term in each of these definitions if it is clear from
context which  agent computed each of the shares in $S$.

BCG's implementation of VSS and CC guarantees that if $s_i$ is
the share of an honest agent $i$ after running an invocation of
VSS or CC, then there exists a polynomial $p$ of degree $t$
(where $t$ is a bound on the number of malicious agents) such that $p(i) = s_i$
and $p(0)$ is the secret shared through VSS or computed through
CC. Moreover, this polynomial $p$ is uniformly sampled from 
the set of all polynomials $p'$ of degree $t$ with $p'(0) = s$.  With
BCG's implementation of VSS and CC, a pair $(I,S)$ is 
realizable iff there exists a polynomial $p$ of degree $t$
such that $p(i_k) = s_k$ for all $k$. 

With this notation, we can state the properties that we need for our
VSS and CC implementation.
A circuit that computes values in $\mathbb{F}_p$ can be securely
computed by $n$ agents 
if the inputs are shared using VSS, and the addition and
multiplication gates are computed using CC. At the end of the
computation, each agent has a share of the output. The first
property that we require, to simplify our proof, is that for all sets
$T$ of size at most $t$, the output of such a circuit is uniformly
distributed over $\mathbf{F}_p^{|T|}$: 
\commentout{
 First of all, to simplify our proof, we
require that for all sets $T$ of size at most $t$, the output of
an invocation of CC or VSS is 
uniformly distributed over $\mathbf{F}_p^{|T|}$:
}
\begin{definition}
An implementation of VSS 
and
 CC is \emph{$t$-uniform} if, for
\commentout{
  all pairs of protocols $P_1$ and $P_2$ consisting only of VSS
  instances and a CC instance with a single output, and all sets $I$
  of honest agents with $|I| \le t$, the output 
  of $I$ in $P_1$ is identically
distributed to the output of $I$ in $P_2$. 
}
all circuits $C$ 
with a single output
gate, 
 and all sets $I$ of honest agents with
$|I| \le t$, the output
 of $I$ after securely computing $C$   
is uniformly distributed over $\mathbf{F}_p^{|T|}$.
\end{definition}

We actually seem to need a somewhat stronger property than $t$-uniformity:
\emph{conditional $t$-uniformity} (i.e., $t$-uniformity conditional on
the outcome of earlier CC instances).  In general, a $t$-uniform
implementation may not satisfy conditional $t$-uniformity.  For
example, two empty circuits that take the shares of a
single VSS instance as inputs produce identical outputs, which are the
the shares of the VSS.  
Fortunately, it is easy to convert a circuit $C$
to a circuit $C'$ that computes the same secret, and also
satisfies conditional $t$-uniformity conditional on all other CC
instances. Suppose for simplicity that $C$ has a single output
gate. We construct 
$C'$ by having each agent $i$ invoke a $t$-uniform VSS with 0 as
the secret.  Agent $i$ then computes (using CC) the product of the
secrets whose shares it receives, and adds it share of the product to
the output of $C$.  Clearly the players will get the same value with
$C$ and $C'$ after they share their shares, no matter what the
malicious players do. Also, since $C'$ takes as inputs instances of
VSS that are not used in any other circuit, the output of a subset
$I$ with $|I| \le t$ conditional on the output of all other CC
instances is uniformly distributed over $\mathbb{F}_p^{|T|}$. Thus, if
the implementation of VSS and CC is $t$-uniform, we can assume without
loss of generality that we are working with conditionally $t$-uniform
circuits. 

We also require that the shares of a set of at least $t+1$ honest
agents uniquely determine the secret.

\begin{definition}
An implementation $P$ of VSS 
and CC 
is \emph{$t$-determinate} if
\begin{itemize}
  \item[(a)] $P$ is $t$-resilient;
\item[(b)] $P$ is $t$-uniform if there are at most $t$ malicious
  agents; and
\item[(c)] for all pairs $(I,S)$ with $|I| \ge t+1$, if $(I,S)$ is
  realizable, then there exists a unique full extension $(I', S')$ of $(I,S)$. 
\end{itemize}
\end{definition}

BCG's implementation of VSS is $t$-determinate: it is easy to check
that it satisfies clauses (b) and (c) of the definition; BCG prove
that it is  $t$-resilient.
For all of our constructions, we assume that the secret-sharing
scheme used is $t$-determinate.

\commentout{
\subsubsection{Random polynomial generation}
In a random polynomial generation protocol (RPG), each agent $i$
starts with input a natural number $g$ common to all agents, and
outputs a value $y^i \in F_p$ (the field of order $p$) such that the
following properties are satisfied in all histories:
\begin{itemize}
\item There exists a polynomial $P$ of degree $g$ such that $P(i) = y^i$ for all honest agents $i$ and $P(0) = 0$.
  \item Malicious agents cannot guess the values $y^i$ of honest
  agents  with probability better than $1/p$. 
\end{itemize}
BCG provide an
implementation of random polynomial generation of degree $t$ in
asynchronous systems that is  $t$-resilient as long
as $n > 3t$.
}


\commentout{
\subsubsection{The BCG construction of
  $\vec{\pi}_{f}$}\label{sec:construction} 

Using the primitives sketched above, BCG gave a construction of
$\vec{\pi}_f$.  At the high level, the construction proceeds as follows:
for $\vec{\pi}_{f}$:
\begin{enumerate}
  \item Each agent $i$ shares its input using VSS.
  \item Agents agree on a core set $C \subseteq [n]$ with $|C| \ge n-t$
    using an ACS procedure with parameters $M_1 = n-t$ and $M_2 = n$,
where the accumulative   
set $U_i$ of agent $i$ is the set of agents $j$ 
such that $i$ has terminated the VSS invoked by $j$ at
step 1. 
\item Each agent $i$ computes its share of $f_C(\vec{x})$ using CC, where 
$i$'s input for the $j$th input gate is $i$'s share of $x_j$ if $j \in
    C$;  otherwise it is 0. Note that using BCG's secret sharing scheme,
  $0^n$ is $0$-realizable. 
\item Each agent $i$ sends its share of $f_C(\vec{x})$ to each other
  agent $j$, then uses the shares received from other agents to
    reconstruct $f_C(\vec{x})$ using VSS.
\item Each agent $i$ outputs $(C, f_C(\vec{x}))$.
\end{enumerate}
}

\commentout{
\subsection{Security in multiple circuit computation instances}\label{sec:security-CC}  
For some of our results, 
$t$-uniformity is not enough for our purposes:
 Suppose that there are two circuit computations $C_1$ and
$C_2$ that take the shares of the same invocation of VSS as inputs. Then
$t$-uniformity guarantees that for all sets $I$ with $|I| \le t$, the
outputs of the agents in $I$ in both $C_1$ and $C_2$ is uniformly 
distributed over $\mathbf{F}_p^{|T|}$. However,
\commentout{
the output of the players in $I$ in $C_2$ of players in $I$ might not
by uniformly distributed if the output of $C_1$ is known
beforehand. 
}
the output  of the agents in $I$ in $C_2$ conditional on their output
of $C_1$ might not be uniformly distributed. 
For example, 
if both $C_1$ and $C_2$ are empty circuits that take the shares of a
single VSS as input, the output of agent $i$ in $C_1$ and in
$C_2$ is exactly the same (the share $i$ obtained as output of the
VSS invocation), regardless of the secret sharing scheme used.

\commentout{
We want that the output of $C_2$ of players in $I$ follows a uniform
distribution over $F_p^{|T|}$, even when conditioned to the output of
$C_1$ of all players.
}
We want the output of $C_2$ of agents in $I$ 
to be independent of the output of $C_1$ of agents in $I$, which
would imply that the output of $C_2$ of agents in $I$ is uniformly
distributed even when conditioned on the output of $C_1$. 
\commentout{
It is easy to check that this property is
satisfied if $C_2$ has a gate involving the output of a
an invocation of VSS that was not used for $C_1$. Thus, for our
purposes, it is 
sufficient to assume without loss of generality that all circuits
have such a gate.

For example, 
 given a circuit $C$, 
each agent $i$ shares 0 using VSS and computes its share $x_i$ of the
product $x$ of all the values shared this way. Then, instead of
computing circuit $C$, agents compute circuit $C'$, which is identical
to $C$ except that it has an addition gate with $x$ just before the
output gate. Note that the shares $x_i$ computed by honest agents are
always 0-realizable even if malicious agents share values different
than 0. Therefore, the secret defined by the shares of the players at
the end of $C$ and $C'$ is identical regardless of how malicious
players deviate. 
\commentout{
each circuit can have an addition gate just
before the output gate in which each agent's input is its output of a
secure computation of $f(x_1, \ldots, x_n) := \prod_{i = 1}^n x_i$,
for which all agents use 0 as input. Note that this way the shares
used by honest agents in the addition gate are always 0-realizable,
even if malicious agents deviate during.
Thus, in our construction of $\vec{\pi}'$, the output
of a subset $T$ of agents of size at most $t$ in an invocation CC for is
independent of their output in all other CC invocations, as desired. 
}}
 Fortunately, it is easy to convert a $t$-uniform circuit $C$ to a
 circuit $C'$ that satisfies conditional $t$-uniformity.
Suppose for simplicity that  $C$ has a single output gate.   We construct
$C'$ by having each agent $i$ invoke a $t$-uniform VSS with 0 as
the secret.  Agent $i$ then computes (using CC) the product of the
secrets whose shares it receives, and adds it share of the product to
the output of $C$.  Clearly the players will get the same value with
$C$ and $C'$ after they share their shares, no matter what the
malicious players do.  Moreover, it
is easy to check that, as long as all circuits used are constructed
this way, then they are all conditionally $t$-uniform.
Thus, we can assume without loss of generality that we are working
with conditionally $t$-uniform circuits.
}

\subsection{Constructing $\pi'$}\label{sec:implementation}

\commentout{
In this section, we outline the construction of $\pi'$.
The idea is very similar to BCG's
construction of $\vec{\pi}^f$, so we briefly sketch that construction first.
The idea is straightforward.  Each agent $i$ uses VSS to share its
input $x_i$.  The agents then 
agree using consensus (more precisely, using ACS, which
satisfies some additional properties) on which messages have been
correctly shared and 
use circuit computation to compute
(their share of) $f(x_1,\ldots, x_n)$.  Finally, they 
broadcast
the output of the circuit computation, after which each agent can compute
 $f(x_1, \ldots, x_n)$.
}

Our construction of $\vec{\pi}'$ is similar in spirit to
BCG's constrution of $\vec{\pi}_f$.
As we said earlier,
what makes our setting more 
complicated is that the agents send multiple messages to the
mediator, and the mediator sends multiple messages back.
We will need to keep track of which messages are being sent in
response to which other messages.
Moreover, to
get $t$-bisimulation, we need to be able to simulate all possible
behaviors of the scheduler, both with $\vec{\pi}+\pi_d$ and with $\vec{\pi}'$.

For ease of exposition, we begin by giving a naive
construction of $\vec{\pi}'$, which, as we later show, 
does not quite satisfy all the desired properties. However, it
gives the intuition for the actual construction (which
requires only a small modification of the naive construction).
We now sketch the naive construction, then give a detailed description,
and then explain the minor modifications needed to correct the
problems in the naive construction.
\commentout{
For ease of exposition, we present the construction of $\vec{\pi}'$ in
 steps. We begin by giving a naive construction of
$\vec{\pi}'$, which, as we later show, does not quite satisfy all the
desired properties. However, it gives the intuition for the actual
We then show that a small modification of this construction gives a
construction that satisfies parts (a) and (b) of
Theorem~\ref{thm:main}.
}
This
construction may not
satisfy the bound we claimed on the expected number of messages
when the mediator is responsive.
We show in Section~\ref{sec:message-bound}
how to modify the construction so as to satisfy that bound.

When running $\pi'_i$, each agent $i$ simulates its
counterpart running $\pi_i$ except that, rather than sending and
receiving messages from the mediator, $i$  shares messages 
it shares and reconstructs messages using VSS.  
In addition, all
agents use CC to compute the mediator's local history given the
messages shared by the agents and to compute the messages the
mediator sends to the agents according to $\pi_d$, given its local
history.
Note that, after running CC, each agent has a share of the
mediator's message. If this is a message sent by the mediator to
agent $i$, then each agent sends its share to $i$, so that $i$ can
reconstruct the message.

%
To do the simulation, each agent $i$ computes two sequences,
$\{h_{i,k}\}_{k \in \mathbb{N}}$ and $\{h_{d,k}^i\}_{k \in
  \mathbb{N}}$. Each element $h_{i,k}$ in the first sequence represents the
$i$'s local history the $k$th time that $i$
    is scheduled in the simulated interaction with the mediator, while
    each term 
    $h_{d,k}^i$ of 
  the second sequence represents $i$'s share of the mediator's local
  history  
the $k$th time that the mediator is scheduled in  the simulated interaction. 
Of course, these histories depend (in part) on how $i$ does the
simulation.
  In our naive protocol, we assume that all agents get
scheduled in the simulation at times corresponding to when they get
scheduled in the computation of $\vec{\pi}'$, after getting
corresponding messages.   
That is, whenever $i$ is scheduled in 
 $\vec{\pi}'$, it checks all the messages received from the simulated
mediator since the last time it was scheduled in $\vec{\pi}'$, then
simulates itself 
being scheduled in $\vec{\pi}+\pi_d$ after receiving exactly the same
messages in the same order as it did in $\vec{\pi}'$. 
 This means that 
$h_{i,k}$ 
  is constructed by appending to $h_{i,k-1}$ all
  messages received by $i$ and the results of all local computations
  of $i$ between
the  
   $(k-1)$st and $k$th
time that $i$ is scheduled in $\vec{\pi}'$. 
Therefore, 
in the
naive construction, $h_{i,k}$, which is $i$'s view the $k$th time
that $i$ is scheduled in the simulation of the computation of
$\vec{\pi}+\pi_d$, is also part of $i$'s view of the simulation the $k$th
time that $i$ is 
scheduled in $\vec{\pi}'$.
That is, if $i$ has been scheduled $k$ times in $\vec{\pi}'$, then it
is also scheduled exactly $k$ times in the simulation.
As we show later by example, this property prevents us from being
able to simulate all schedulers in the interaction with the mediator,
and is precisely 
why the naive construction does not quite work.  That said, for now we
continue to explain the naive construction.

Note that, in $\vec{\pi}'$,  $i$ 
does not receive the mediator's actual messages in its simulation; rather, it 
receives shares of those messages. 
Agent $i$ 
appends a message to 
$h_{i,k-1}$
only at the point that the message can be reconstructed from the shares of the
message that 
$i$ receives from the other agents.
\commentout{
In more detail,
when $i$ is scheduled for the $k$th time in $\vec{\pi}'$, 
 $i$ 
checks  
all messages received from other agents
containing shares from the mediator. These messages are 
checked 
in
the order that they were received and, if any 
of these messages 
allows $i$ to
reconstruct a message from the mediator for the first time, 
appends this message to $h_{i,k-1}$.}
After computing $h_{i,k}$, $i$ computes
which messages it sends according to $\pi_i$ (given the history it has
simulated) and, for each such message
$\mu$, $i$ shares $\mu$ using VSS.

Computing $\{h_{d,k}^i\}_{k \in \mathbb{N}}$ is more subtle.
We must ensure that all agents agree on what messages should be
appended to $h_{d,k}$ to get $h_{d,k+1}$; otherwise, agents will not
have a consistent view of the mediator's history.    Since,
at any point in the execution of $\vec{\pi}'$,
different agents may have terminated different invocations of
VSS, this requires a little care.
Let $h_{d,0}^i, h_{d,1}^i, h_{d,2}^i, \ldots, h_{d,k}^i$ be the
sequence of shares of the mediator's local history in the simulation
computed thus far by agent $i$. 
We will ensure that, for all
   $k$,  $(h_{d,k}^1, h_{d,k}^2, \ldots,    
      h_{d,k}^n)$ are shares of some local history $h_{d,k}$ of the
mediator in the computation of $\pi$ being simulated, where
      $h_{d,0}$ is the empty sequence $\langle \, \rangle$ and
$h_{d,k}$ is a prefix of $h_{d,k+1}$ (so that the mediator's history
get increasingly longer).
After computing their shares of $h_{d,k}$, agents can
perform a circuit computation to compute the messages the
mediator sends to the agents 
given local history $h_{d,k}$.

We now describe the naive construction of $\vec{\pi}'$ in more detail.
\commentout{
The local history of each agent $i$
the $k$th time it is scheduled  when running 
$\pi'_i$ has the following components:
\begin{itemize}
  \item A sequence $h_{i,0}, h_{i,1}, \ldots, h_{i,k-1}$ consisting of $i$'s
  local histories in the computation $\vec{\pi}$ being
  simulated each time $i$ was scheduled. 
\item A sequence $h_{d,0}^i, h_{d,1}^i, \ldots, h_{d,k'}^i$ of shares
    of mediator's local history in the computation of $\vec{\pi}+\pi_d$
    being simulated; and
\item $i$'s state in each of the protocols mentioned in
  Section~\ref{sec:tools} in which it is currently participating.
This state includes the history of the computation for that
invocation of the protocol, and possibly the results of some local
computation and some random coin tosses.
Note that
there might be several invocations of the same protocol that
an agent is involved in; for example an agent might invoke several VSS-share
  instances. To remove ambiguity, we assume that all invocations of a
  protocol are labeled; for example, the first VSS-share invoked by
  agent $i$ could be labeled (VSS, $i$, $1$), the second one could
  be labeled (VSS,   $i$, $2$)), and so on. 
\end{itemize}

Note that the length $k'$ of the simulated mediator's local history
sequence might be $k-1$ (in general, we will have $k' < k-1$), while, with
the naive implementation of $\vec{\pi}'$, $i$'s
view of its own local history in the mediator game does have length
$k-1$ by the time $i$ is scheduled for the $k$th time in $\vec{\pi}'$.
}
As we said, because our naive construction assumes that $i$ is
scheduled the same number of times in the simulation of $\pi_i$ as in the actual
computation of $\pi_i'$, the $k$th time $i$ is scheduled when running
$\pi_i'$, $i$'s history includes 
simulated histories $h_{i,0}, \ldots, h_{i,k-1}$ and shares of
simulated histories 
$h_{d,0}^i, \ldots, h_{d,k'}^i$ (note that $k'$ might not be equal to $k$).
These simulated histories are the output of local computations, and
thus are recorded in the $i$'s history.
In addition, $i$'s history keeps track of the status of all the invocations of
protocols like VSS and CC in which $i$ participates (including results
of random coin tosses, which we also view as the outcome of computation).
Note that there might be several invocations of the same protocol that
an agent is involved in at the same time; for example, an agent might
invoke VSS several times before any of them complete.
To remove ambiguity, we assume that all invocations of a
protocol are labeled; for example, the first invocation of VSS invoked by
agent $i$ could be labeled (VSS, $i$, $1$), the second one could
  be labeled (VSS,   $i$, $2$)), and so on. 
\commentout{ 
  In order to simplify the proofs, 
  we assume that the scheduler 
knows
  these labels.  This assumption can be made without loss of
  generality, since players 
(including honest ones)  
  can communicate with the scheduler using
  the scheme presented in Section~\ref{sec:adversaries}.
  Since all of our results involve choosing a scheduler, we can choose
  one that correctly interprets this communication scheme.
}
These labels are communicated to the scheduler using the scheme
presented in Section~\ref{sec:adversaries}. Thus, we can assume
without loss of generality that the scheduler knows the labels.

If agent $i$ is scheduled when it is in such a state, it first
processes all messages received since the last time it was scheduled.
(We assume that all messages received since the last time that $i$ was
scheduled are held in some buffer.)  ``Processing a message'' $\mu$
consists of $i$ playing its part in the protocol to which $\mu$
belongs (which we assume is indicated in  the label of $\mu$); 
if $\mu$ is a share of a simulated mediator message, $i$ checks if it
can reconstruct a new mediator message and, if so, updates $h_{i,k}$
accordingly. After processing all of its new messages, $i$ will have
constructed $h_{i,k+1}$.  Agent
$i$ checks what action(s)
$\pi_i$ takes given input
$h_{i,k+1}$.
If $\pi_i$ outputs a value $v$, then so does $\pi_i'$; if
these actions include sending one or more messages to the mediator,
then $i$ 
shares those messages using VSS instead. 
%
Finally, if possible, $i$ computes its share of the simulated mediator's
local history $h_{d,k'+1}^i$ and computes (along
with the other agents) which messages the mediator sends to the
agents.
We now explain how this is done.


Agent $i$ computes $h_{d,k}^i$ inductively.
Clearly, $h_{d,0}$, the mediator's initially local history, 
is empty (and all agents know this).
To compute $h_{d,0}^i$, 
$i$ simulates a computation of VSS initiated by agent 1
(there is nothing special about agent 1 here; any other agent would do)
with input the empty sequence under the assumption that all
agents are honest, and takes $h_{d,0}^i$ to be $i$'s share of the
output of the computation.
Since VSS is a randomized protocol, to assure consistency,
all agents must use the same random bits in this
computation of VSS; 
we can assume that these random bits
are hardcoded into $\vec{\pi}'$. Note that this is equivalent to just
hardcoding the values of $h_{d,0}^i$ in $\vec{\pi}'$, but viewing $h_{d,0}^i$
as the output of an invocation of VSS will be useful in the future. 
Assuming that $h_{d,k}^i$ has been computed, we show how to compute
$h_{d,k+1}^i$.  The idea is that the agents perform a circuit
computation with inputs $h_{d,k}^i$ and all the new messages
to be appended to 
$h_{d,k}$
(note that each agent has a share of each
of these inputs).  It is critical when running a CC invocation that the inputs
of each honest agent are consistent with the inputs of all
other honest agents participating in the same invocation. More
precisely, for all pairs of agents $i$ and $j$, if $i$'s $\ell$th
input is $i$'s share of the message being shared in some VSS
invocation (VSS, $i'$, $j'$), then 
$j$'s $\ell$th input must be $j$'s share of the same message.
It is not straightforward to ensure this, since $i$ and
$j$ might have 
completed different invocations of VSS at the time that they update
$h_{d,k}^i$ and 
$h_{d,k}^j$ 
respectively. 

Since this issue arises in a number of contexts, we formalize this
notion of consistency. 
Suppose that $\vec{\rho}$ is a joint protocol and $\vec{h}$ is a
history of $\vec{\rho}$. Let $v$ be an invocation of CC 
in $\vec{h}$ in 
which some honest agent $i$ has participated. Invocation $v$ is
\emph{well-defined} if the following holds:  
\begin{itemize}
\item [(a)] All honest agents eventually participate in $v$.
\item [(b)] Suppose that $\alpha$ has $m$ inputs. For each $\ell$ with
  $1 \le \ell \le m$, there exists an invocation $v_l$ of
  VSS or CC that occurred earlier in the computation such that
  each honest agent $i$'s share of the $\ell$th 
  input of $v$ is $i$'s share of the output of  $v_\ell$.
\end{itemize}
\commentout{
It is critical that
all the invocations of CC used to compute $h^i_{d,k+1}$ are
well-defined in this sense,
since otherwise, honest agents would be using shares of different VSS
invocations as if they were defining the same secret. 
}
All the invocations of CC to compute $h^i_{d,k+1}$ are
well-defined in this sense, since players must use the shares of the
same secret at each gate of the CC. 
\commentout{
Following the arguments given above, it is critical that all the invocations of CC used to compute $h^i_{d,k+1}$ are
well-defined in this sense.
}
To ensure this, agents first agree on which subset of messages should
be appended to $h_{d,k}$, then they agree on the order in which these
messages should be appended, and finally they append these messages to
$h_{d,k}$ and compute the messages sent by the scheduler to the
agents, which are also appended to $h_{d,k}$.
%
The protocol for extending $h^i_{d,k}$ to $h^i_{d,k+1}$ proceeds in
four phases, denoted $k1, \ldots, k4$.
\begin{enumerate}
  \item[Phase $k1$:] 
Let $N$ the maximum number of messages that an honest agent sends when
running $\vec{\pi}+\pi_d$.
Each agent $i$ participates in 
$nN$ consensus  
protocols in phase $k1$, denoted $p_{1,1,k}, \ldots, p_{n,N,k}$,
where $p_{i,j,k}$ is intended to achieve consensus on whether $i$ has
shared its $j$th message
successfully.  
More precisely, $i$'s
  input to consensus protocol $p_{j,\ell,k}$ is 1 iff $i$ has terminated
    $j$'s $\ell$th   invocation of VSS 
    by the time $i$ starts
     phase $k1$
     and $p_{j,\ell,k'}$ has output 0 for all $k' < k$.
        Agent $i$ waits until it has terminated all the $p_{j,\ell,k}$
        consensus protocols it is involved with in phase $k1$ 
     before starting phase $k2$. If the output of some consensus
     protocol      $p_{j,\ell,k}$ is   1,
     then $i$ waits until it has also completed $j$'s $\ell$th VSS invocation
     in round $k$ before starting  phase $k2$.
    \item [Phase $k2$:] 
Let $p_{j_1, \ell_1,k}, p_{j_2, \ell_2,k}, \ldots, p_{j_{m_k},
    \ell_{m_k},k}$ be the consensus protocols that were used in phase
$k1$ and had output 
1, ordered in lexicographic order (i.e., $p_{j,\ell,k}$ precedes
$p_{j', \ell', k'}$ iff $j < j'$ or [$j = j'$ and $\ell < \ell'$]).
In this phase, $i$ coordinates with the other agents on the order
that they 
should append
the messages shared in 
(VSS, $j_1, \ell_1$), \ldots, (VSS,
    $j_{m_k}, \ell_{m_k}$) to $h_{d,k}$. 
We want the agents to agree on the same permutation of $(j_1,\ell_1),
\ldots, (j_{m_k},\ell_{m_k})$.  
To do this, they use BCG's secure computation protocol.  Each agent
$i$ inputs a permutation to the protocol; the output is the
permutation that they coordinate on.
Agent $i$'s
input to the protocol is the 
 unique permutation $\sigma_i: [m_k] \longrightarrow [m_k]$ satisfying
  $\sigma_i(a) < \sigma_i(b)$ iff $i$ terminated (VSS, $j_a, \ell_a)$
 before (VSS, $j_b, \ell_b)$. Note that since $i$ completes all the 
VSS invocations that are in progress in phase $k1$ before it starts
phase $k2$, $i$ can compute this permutation.
The BCG computation returns the (unique) permutation
$\theta$ satisfying $\theta(a) < \theta(b)$ iff 
$\sum_{i \le n}
    \sigma_i(a) <  \sum_{i \le n} \sigma_i(b)$ or $\sum_{i \le n}
    \sigma_i(a) =  \sum_{i \le n} \sigma_i(b)$ and $a < b$.
Thus, roughly speaking, 
$\theta(a) < \theta(b)$ if, on average, agents 
    terminated $(VSS, j_a, \ell_a)$ before $(VSS, j_b, \ell_b)$.
    Note that because of the asynchrony of the system, we can 
    guarantee only that at most $n-t$ inputs will be available when
    computing $\theta$. For the remaining inputs $\sigma_i$ we 
take $\sigma_i$ to be the identity permutation.
\item[Phase $k3$:]  
Agent $i$ uses CC to append new messages to $h_{d,k}^i$. More precisely,
it updates $h_{d,k}^i$ 
with (VSS, $j_{\theta(1)}, \ell_{\theta(1)}$), ..., (VSS, $j_{\theta(m_k)}, \ell_{\theta(m_k)}$)
in the order determined by the permutation $\theta$ computed in phase $k2$.
Note that the properties of VSS 
and the fact that $i$ completes all outstanding VSS invocations in phase $k1$ 
guarantee that
all honest agents have a share of all these messages
when they start phase $k3$.
This procedure gives agent 
$i$ 
a share that we denote $\tilde{h}^i_{d,k+1}$ of the mediator's updated
local history  $\tilde{h}_{d,k+1}$
after appending these new messages to $h_{d,k}$ in the appropriate order.
\item [Phase $k4$:] Agent $i$ computes $h_{d,k+1}^i$ by using CC to
  append to $\tilde{h}_{d,k+1}$ the message the mediator sends
    to the according to $\pi_d$, given input 
    $\tilde{h}_{d,k+1}$.
    (This can be done because all agents know the mediator's protocol
        $\pi_d$.)
    Agent $i$'s input for this invocation of CC is
  its share $\tilde{h}^i_{d,k+1}$.
    Note that each agent invokes CC only once, using it to compute all
    the mediator's messages are computed and appended them to
  $\tilde{h}_{d,k+1}$. Agent $i$'s output of this invocation of CC is its share
  of $h_{d,k+1}^i$. 
\end{enumerate}

In phase $k4$ agents never actually compute (their
shares, if any, of) the messages sent by the mediator during its $(k+1)$st turn;
they compute only the result of appending these
message to $h_{d,k}$.  Later we will 
see how agents compute their shares of each of thesee messages
individually, using the fact that it is encoded in $h_{d,k+1}$. 

This protocol satisfies
two 
important properties
if $n > 4t$:


\begin{lemma}\label{lemma:well-defined}
    All honest agents eventually terminate phase $k1$.
    Moreover, for all adversaries of size at most $t$ and all histories, the CC
protocol invoked in phase $k3$ is well-defined. 
\end{lemma}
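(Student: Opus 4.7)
The plan is to prove each of the two claims separately, using the termination and agreement guarantees of consensus, VSS, and secure computation (all of which are $t$-resilient under $n>4t$, which implies $n>3t$).

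For the first claim, I will argue that each of the $nN$ consensus invocations $p_{j,\ell,k}$ that $i$ participates in terminates, since consensus is $t$-resilient. The only additional waiting condition in phase $k1$ is that whenever some $p_{j,\ell,k}$ outputs $1$, agent $i$ must also wait for $(\VSS, j, \ell)$ to complete. Here I invoke the validity property of consensus: if the output is $1$, then at least one honest agent had input $1$, meaning that honest agent had already terminated $(\VSS, j, \ell)$ before starting phase $k1$. By the $t$-resilience of VSS and $n>4t$, this guarantees that every honest agent will eventually terminate $(\VSS, j, \ell)$. Therefore $i$'s wait is bounded and phase $k1$ terminates.

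For the second claim (well-definedness of the phase $k3$ CC invocation), I need to check conditions (a) and (b) of the definition. For (a), I observe that phase $k2$ uses BCG's secure computation with $n>4t$ and so all honest agents terminate it; hence all honest agents enter phase $k3$, and the standard termination properties of CC ensure they all participate in the CC invocation. For (b), I enumerate the inputs of the CC: the share $h_{d,k}^i$ (whose existence and consistency across honest $i$ follows inductively from the analogous property in round $k-1$, with base case being the hardcoded round-$0$ VSS), and the shares of the messages indexed by the $(j_a,\ell_a)$ for $a\in[m_k]$. For each such index, the corresponding consensus protocol $p_{j_a,\ell_a,k}$ output $1$, and agents explicitly wait until $(\VSS,j_a,\ell_a)$ completes before entering phase $k2$; so by the start of phase $k3$, every honest agent holds its own share of the unique secret defined by that VSS. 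Finally, since all honest agents obtain the same permutation $\theta$ from the BCG computation in phase $k2$, they feed these inputs into identical positions of the circuit, so condition (b) holds.

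The main obstacle is the interplay between agreement (which consensus provides) and actual possession of shares (which VSS provides only after termination). The validity direction of consensus is what bridges the two: an output of $1$ can only arise if \emph{some} honest agent had input $1$, i.e.\ had already completed the VSS, which then triggers universal termination of that VSS via its $t$-resilience. Without this bridge, an adversary could cause honest agents to agree (via consensus) to use a share they do not yet have, violating well-definedness; the explicit waiting rule in phase $k1$, combined with the validity property, is precisely what rules this out.
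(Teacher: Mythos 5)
Your proof is correct and follows essentially the same route as the paper: consensus validity shows that an output of $1$ implies some honest agent already terminated the corresponding VSS, whence $t$-resilience of VSS gives universal termination and bounds the wait in phase $k1$, while the agreement properties of consensus and the secure computation of $\theta$ ensure all honest agents use the same VSS outputs in the same order, making the phase $k3$ CC well-defined. Your version is somewhat more explicit about checking conditions (a) and (b) of well-definedness, but the underlying argument is the one the paper gives.
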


\begin{proof}
If the output of some consensus protocol 
$p_{j,\ell,k}$
is 1, then the
properties of VSS guarantee that at least one honest agent had
    input 1. Thus, 
at least one honest agent terminated 
(VSS, $j$, $\ell$). 
The
properties of VSS guarantee that all other honest agents 
eventually terminate this VSS invocation as well.
The properties of consensus and secure computation guarantee that all
agents use the outputs of the same VSS invocations in the same order,
which means that the CC procedure of phase $k3$ is well-defined for
all runs. 
\end{proof}

\begin{lemma}\label{lemma:correctness}
If an honest agent shares a message $\mu$ using VSS, then $\mu$ will
be in $h_{d,k}$ for some $k$, and hence each honest agent $i$ will have
a share of $\mu$ in $h_{d,k}^i$.
\end{lemma}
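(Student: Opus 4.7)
The plan is to argue that once an honest agent $j$ invokes VSS to share a message $\mu$, the properties of VSS plus the design of phase $k1$ force $\mu$ into $h_{d,k}$ for some $k$. I would proceed in three steps.

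First, I would argue that rounds keep progressing. Lemma~\ref{lemma:well-defined} already tells us that all honest agents eventually terminate phase $k1$; the remaining phases $k2$, $k3$, $k4$ are instances of BCG's secure computation (consensus, a permutation-computation via secure computation, and CC), all of which terminate for honest agents when $n > 4t$. Hence, for every $k$, every honest agent eventually begins phase $(k{+}1)1$. In particular, suppose $j$ is honest and invokes $(VSS, j, \ell)$ to share $\mu$. By the VSS properties (honest sender case), every honest agent eventually completes $\VSS^{sh}$ for this instance. Combined with the progression of rounds, there exists a first round $K$ such that every honest agent has terminated $(VSS, j, \ell)$ by the time it begins phase $K1$.

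Second, I would analyze the consensus $p_{j,\ell,K}$. There are two cases. If for some $k' < K$ the consensus protocol $p_{j,\ell,k'}$ already output $1$, then, by the construction of phase $k'3$, the (VSS, $j$, $\ell$) invocation appears in the ordered list processed by CC in phase $k'3$, so $\mu$ is appended as part of $\tilde h_{d,k'+1}$ and therefore occurs in $h_{d,k'+1}$. Otherwise, every honest agent has input $1$ to $p_{j,\ell,K}$: each has terminated $(VSS, j, \ell)$ by the start of phase $K1$, and by assumption $p_{j,\ell,k'}$ output $0$ for all $k' < K$. The consensus property that agreement on inputs forces agreement on that value as output yields that $p_{j,\ell,K}$ outputs $1$. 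Hence $(j, \ell)$ appears in the list $p_{j_1,\ell_1,K}, \ldots, p_{j_{m_K},\ell_{m_K},K}$, and phase $K3$ appends $\mu$ to obtain $\tilde h_{d,K+1}$. Phase $K4$ then only further extends $\tilde h_{d,K+1}$, so $\mu$ remains in $h_{d,K+1}$.

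Third, I would transfer this to shares. Phases $k3$ and $k4$ are done via well-defined CC invocations (Lemma~\ref{lemma:well-defined}), so each honest $i$'s output $h_{d,k+1}^i$ is by construction $i$'s share of $h_{d,k+1}$. Since $\mu$ is a specific subsequence of $h_{d,k+1}$ at a publicly agreed position (determined by the permutation $\theta$ and the order of messages appended in phase $k3$), $i$ can read off the shares of the individual entries of $\mu$ from $h_{d,k+1}^i$; this is what is meant by ``$i$ has a share of $\mu$ in $h_{d,k}^i$.''

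The main obstacle is the bookkeeping that guarantees a first round $K$ in which the conditions for honest agents to input $1$ are all simultaneously met; the monotonicity clause (that the output of $p_{j,\ell,k'}$ is $0$ for all $k' < K$) is precisely what prevents honest agents from skipping an inclusion opportunity, and must be checked carefully against the use of the previous-round outputs in the definition of the consensus input.
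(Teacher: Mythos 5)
Your proof is correct and follows essentially the same route as the paper's: all honest agents eventually terminate an honest agent's VSS invocation, so the corresponding consensus protocol $p_{j,\ell,k}$ outputs $1$ for (exactly) one $k$, which forces the message into $\tilde h_{d,k+1}$ in phase $k3$. Your version is more careful than the paper's about why such a round exists (the progression of rounds, the first round $K$ at which all honest agents have terminated the VSS, and the case split on whether an earlier consensus already output $1$), but this is an elaboration of the same argument rather than a different one.
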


\begin{proof}
\commentout{
 If (VSS, $j$, $k$) is invoked by an honest agent $j$, then all honest
agents are guaranteed to eventually terminate this invocation of
VSS.  After all the honest agents 
terminated 
(VSS, $j$, $k$),
the output of 
the corresponding consensus protocol
$p_{j,k}$ (at Step 1) 
is guaranteed to be 1 since all honest agents have
input 1,
which guarantees that (VSS, $j$, $k$) is included in the CC
procedure at Step 2. 
}
 If (VSS, $j$, $\ell$) is invoked by an honest agent $j$, then all honest
agents are guaranteed to eventually terminate this invocation of
VSS. Thus, the output of consensus protocol $p_{j,\ell, k}$ (at Phase
$k1$) is 1 for exactly one value of $k$. (Note that if the output of
$p_{k, \ell, k}$ is $1$, then all honest players take 0 to be the input for
all consensus protocols $p_{j, \ell, k'}$ with $k' > k$.  This
guarantees that the output of all these protocols is 0.) This ensures
that (VSS, $j$, $\ell$) is appended to $\tilde{h}_{d,k+1}$ in Phase
$k3$, and thus it is included in $h_{d, k+1}$. 
\end{proof}

Since, by the time each agent $i$ finishes computing $h_{d,k}^i$,
all
the messages that the mediator sends to each agent are already
encoded in $h_{d,k}$,
it may seem that to compute the shares of
these messages individually,
$i$ would have to use an instance of CC
for each one.

However, this procedure is not so straightforward since $i$ does not
know beforehand how many messages the mediator sends or the order in
which the mediator sends messages the $k$th time it is scheduled
(although this is also encoded in $h_{d,k}$). To deal with this 
issue, before computing its share of each of the mediator's messages, $i$ first checks if there is a message that still needs to be sent 
and, if so, who the recipient is.

More precisely, let $f_{k, \ell}$ be the function that takes as input
a mediator's local history and returns the recipient of the $\ell$th
message sent by the mediator the $k$th time it is scheduled; similarly, let
$g_{k, \ell}$ be the 
function that computes the $\ell$th message sent by the mediator
the $k$th time it is scheduled, given the mediator's history. If the
mediator sends fewer than $\ell$ messages the $k$th time it is
scheduled, or if the input is not a well-defined 
local history, both $f_{k,\ell}$ and $g_{k,\ell}$ return 0. After
computing $h_{d,k}$, agent 
$i$ proceeds as follows for $\ell = 1,2, \ldots$: it performs a
circuit computation of $f_{k, \ell}$ 
with input $h_{d,k}^i$. Then $i$ broadcasts the output of this
computation, and uses the values it receives from other agents to
reconstruct $f_{k,\ell}(h_{d,k})$.
\commentout{
If $f_{k, \ell}(h_{d,k}) = 0$, $i$
computes $h_{d,k+1}$ and terminates.  If 
$f_{k, \ell}(h_{d,k}) = j \not= 0$, $i$ performs a circuit computation of 
$g_{k, \ell}$ with input $h_{d,k}^i$ and sends its output to $j$.
}
If $f_{k, \ell}(h_{d,k}) \not= 0$, $i$ performs a circuit
computation of $g_{k, \ell}$ with input $h_{d,k}^i$ and
computes $f_{k, \ell + 1}  (h_{d,k})$.
 If $f_{k, \ell}(h_{d,k}) = 0$, then for each $\ell' < \ell$, $i$
sends the output of its circuit computation of
$g_{k, \ell'}$ to agent 
$f_{k, \ell'}(h_{d,k})$.
\commentout{
After sending these messages, $i$ computes
$h_{d,k+1}$ and terminates. 
}

\commentout{
After agent $i$ computes $h_{d,k}^i$, $i$ participates in $Nn$ CCs for
phase $k$, where CC $(k,j,m)$ results in $i$ having a share of the
$m$th message that the mediator sends $j$ the $k$th time that the
mediator is scheduled (where we take the $m$th message to be $\bot$ if
the mediator sends $j$ fewer than $m$ messages).  Then $i$ sends its
share of $j$'s message to $j$ (so that $j$ can reconstruct the
mediator's messages). 
}

This completes the description of the naive version of $\vec{\pi}'$.
As we have been hinting, this protocol does not quite work.  The
following example makes the reasons more precise.

Consider 
a protocol $\vec{\pi} + \pi_d$ 
 in which the mediator sends 
a STOP message to
each agent the first 
time it (the mediator) is scheduled.
If $i$ was scheduled before receiving the STOP
message, it outputs 0; otherwise, it outputs 1. 
Note that any
combination of outputs is possible with $\vec{\pi}+\pi_d$, depending
on when the scheduler 
schedules the mediator and the agents.
However, this is not true 
for $\vec{\pi}'$
as we have defined it.
Suppose, for example, that all agents are honest, and $i$ is the first
agent scheduled in a history of $\vec{\pi}'$.  At this point, $i$ is
supposed to compute $h_{i,1}$.  Since it has not received any
messages, it will take $h_{i,1}$ to be empty, and thus output 0.  It
follows that no history of $\vec{\pi}'$ can end with all agents
outputting 1, which means that $\vec{\pi}'$ does not
$t$-bisimulate $\vec{\pi} + \pi_d$. 
\commentout{
More precisely, since a large portion of the agents is required to compute $h_{d,1}$, all these agents will be scheduled before the computations for $h_{d,1}$ terminate and thus, 
they will have been scheduled before receiving any message from the mediator.
 Therefore, these agents will never output 1.
}
\commentout{
More precisely, since a large portion of the agents is required to
compute $h_{d,1}$, each of these agents $i$ will be scheduled before
the computations for $h_{d,1}$ terminate and thus, all local histories
$h_{i,1}$ will contain no messages from the mediator, which implies
that each of these agents $i$ outputs 0. This contradicts that
$\vec{\pi}'$ $t$-bisimulates $\vec{\pi}$.
}

\commentout{
The main issue is that turns in the multiparty computation are not
being used in the same way as in the mediator game: Agents in the
multiparty computation simulate themselves in the mediator game,
simulate the scheduler and also have to take part into all the
parallel subroutines required for doing so. This makes producing a
corresponding schedule impossible in some cases, as the previous
example shows.
}
In our construction  of the naive version of $\vec{\pi}'$, each agent
$i$ calculates $h_{k,i}$ 
 the $k$th time that $i$ is scheduled
in $\vec{\pi}'$. 
However, since computing each of $h_{d,1}, h_{d,2}, \ldots $ takes
several turns of $i$, the mediator's history $h_{d,k'}$ being computed
by $i$ during its $k$th turn satisfies that $k' \le k$. This means
that $i$ is simulating that the mediator, at all times, has taken less
turns than $i$, which may not be true in the protocol with the
mediator. As our example shows, some scenarios cannot be simulated with
our naive construction because of this. 
\commentout{
But, as our example shows, there may be some
histories of $\vec{\pi} + \pi_d$ that cannot be simulated this way,
because not all agents will have the information they need to do this
calculation.
}
%
%
We deal with this problem by using the scheduler in the 
simulation to
determine whether an update to $h_{i,k}$ or $h_{d,k'}^i$ should occur when
$i$ is scheduled.

We proceed as follows.
When an agent $i$ is first
scheduled, $i$ sends two special messages, $\proceed_i$ and $\proceed_{d,0}$,
to itself and 
 computes $h_{i,0}$ (which is just an empty history) and $h_{d,0}^i$.
What $i$ does when it is scheduled for the $\ell$th time for $\ell >1$
depends on whether it
has received messages of the form $\proceed_i$ and $\proceed_{d,r}$ and
messages from itself since the last time it was scheduled. 
Suppose that $i$ has computed the sequences $h_{i,0}, \ldots, h_{i,k}$
and $h_{d,0}^i, \ldots, h_{d,k'}^i$ when  it is scheduled for the
$\ell$th time. 
If $i$ has not received a $\proceed_i$ message since the last time it
was scheduled,
$i$ does not  compute $h_{i,k+1}$.
If $i$ has received a $\proceed_i$ message since the last time it was
scheduled, 
then it sends itself another $\proceed_i$ message and 
computes $h_{i,k+1}$ as described above, using all the messages it
received since it was last scheduled and received a $\proceed_i$ message
Thus, $i$ computes the next history in the sequence 
$\{h_{i,k}\}_{k\in \mathbb{N}}$ if and only if $i$ receives a
$\proceed_i$ message.
Similarly, if $i$ has not received a message of the form
$\proceed_{d,r}$ since the 
last time it was scheduled, then it
does not do any of the steps needed to compute $h^i_{d,k'+1}$.  If it
has received a message of the form $\proceed_{d,r}$ message since the
last time it 
was scheduled, it sends itself a message of the form
$\proceed_{d,r+1}$.  (Thus, the second component of the subscript
serves a counter for the number of such messages that have been sent.)
If $k' \le r$, then $i$ plays its part in computing $h_{d, k'+1}^i$.
Otherwise, $i$ does not take part in any procedure involved
   in the computation of 
      $h_{d, k'+1}^i$; that is, $i$ waits until
   it receives  $\proceed_{d,r}$ before attempting to
   compute $h_{d,r}^i$. 
Thus, when it is scheduled, $i$ may take part in 
computing both $h_{i,k+1}$ and
$h_{d,k'+1}^i$, only one of them,  or neither of them.
Since the scheduler must eventually deliver all messages,
all agents receive all the $\proceed$ messages that they send
themselves, so eventually do update $h_{i,k}$
and $h_{d,k}^i$.

This completes the construction of $\vec{\pi}'$.  In the next few
subsections, we prove that $\vec{\pi}'$ has the desired properties.

%
\commentout{

\begin{proposition}\label{prop:emulation}
  $\vec{\pi}'$ $t$-emulates $\vec{\pi}$ if $t < n/4$, and
    $(t,t')$-emulates $\vec{\pi}$ if $3t + t'
< n$ and $t \ge t'$.
\end{proposition}
}

\commentout{
Note that, for a fixed $t$, our construction of $\pi'_i$ depends
only on  $\pi_i$.  Thus, we have a well-defined function
function $H_t$ from strategies to
strategies such that $H_t(\vec{\pi}_i) := \pi'_i$, as is required or
$t$-emulation.
}

%

\subsection{The proof of Theorem~\ref{thm:main}(a)}\label{sec:proof}

\commentout{
In this proof, we assume that the adversary
is a single entity that controls the malicious agents and the scheduler.
As we showed in \cite{game-theory-paper}, this assumption can be
made without loss of generality.  We further assume (again, without
loss of generality) that malicious agents do not communicate directly
among themselves, but make their decisions knowing the histories of
the other malicious agents.
}

We now prove Theorem~\ref{thm:main}(a).
For ease of exposition, we begin by proving this result for the
special case that $t' = t$, showing that $\vec{\pi}'$
$t$-bisimulates $\vec{\pi}$ if $n > 4t$. 

%
\textbf{Proof that $\vec{\pi}'$ $t$-bisimulates $\vec{\pi}$ if $n > 4t$:}
We actually prove a result slightly stronger than
Theorem~\ref{thm:main}(a): while the definition of bisimulation allows  
$\sigma_e'$ to depend on both $\sigma_e$  $\vec{\tau}'_T$ and 
$\vec{\tau}'_T$ to depend on both $\vec{\tau}_T$ on $\sigma_e$, in our
construction below, 
$\sigma_e'$ depends only on $\sigma_e$ (and not on $\vec{\tau}'_T$),
while $\vec{\tau}'_T$ depends only on $\vec{\tau}_T$ (and not on $\sigma_e$). 

We begin by showing that $\vec{\pi}'$ satisfies part (a) of the
definition of bisimulation assuming that all players are
honest. Later, we show how this proof can also be applied to the case
in which a subset $T$ of players deviate. 
Given a scheduler $\sigma_e$ in the mediator setting, we construct a
scheduler $\sigma_e'$ in the setting without the mediator as follows.
Initially, 
$\sigma_e'$ schedules each agent $i$ exactly once.  Recall
that if $i$ is honest, the first time it is scheduled it sends only
$\proceed_i$ and $\proceed_{d,0}$ messages to itself.
The point of scheduling all the agents initially is simply to
get these $\proceed$ messages into the system.
From then,  
just as the agents do with $\vec{\pi}'$, 
$\sigma_e'$ simulates which history $h_e$ the scheduler
would have  
\commentout{
when the
scheduler $\sigma_e'$ moves, it first computes (given its 
local history $h_e'$) what local history $h_e$ it would have
}
in the interaction with the mediator if the mediator and the agents
used $\vec{\pi}+\pi_d$ and the scheduler used $\sigma_e$.
At the beginning of the game, the scheduler sets $h_e$ to the empty
history. How the scheduler updates $h_e$ and what actions the
scheduler performs according to $\sigma_e'$ then depend on the form
of $a = \sigma_e(h_e)$ (i.e., the actions that $\sigma_e$ would
perform given history $h_e$), and on the actions that the players
perform afterwards:   
\commentout{
What the scheduler does according to $\sigma_e'$ then
depends on the form of  $a = \sigma_e(h_e)$,
and how the scheduler 
simulates
 $h_e$ depends on 
$a$ and the actions that the players perform afterwards:
}
\begin{itemize}
\item If $a$ has the form $\Sch(i)$, then it delivers $i$'s
  most recent $\proceed_i$ message
    if there is one to deliver,
  and then schedules $i$.
    Suppose that $i$ initiates $\ell$ VSS instances during its turn. Then
    immediately after $i$'s turn, $\sigma_e'$ appends $\Sch(i)$ and
  $\ell$ $\Snd(d, i)$ events to $h_e$, followed by a $\Done(i)$
  event. 
\item If $a$ has the form $\Sch(d)$ and it is the $k$th time that the
    mediator is scheduled according to $\sigma_e$, then the scheduler
    delivers to each agent $i$ 
    its $\proceed_{d,k}$ message
  if there is one to deliver
    and then schedules agents
    cyclically ($1,2,\ldots,n, 1, 2,\ldots$) until all agents
    $i$ finish  computing $h_{d,k}^i$ and their share of each of the
    messages sent by 
    the mediator 
during its $k$th turn.
    The scheduler also
delivers to each agent $i$ all the messages required by $i$ for the computation of $h_{d,k}^i$ and
    the shares of the mediator's messages immediately after they are
  sent.  
  Suppose that the players determine that the mediator sends messages to
  $j_1, j_2, \ldots, j_\ell$, in that order. Then $\sigma_e'$ appends
  $\Sch(d), \Snd(j_1, d), \ldots, \Snd(j_\ell, d), \Done(d)$ to
  $h_e$.
Note that the scheduler knows $j_1, \ldots,
    j_\ell$, and also knows when each player $i$ terminates the
    computation of $h_{d,k}^i$ (given our assumption that the scheduler
        knows the label of each message), 
 since  player $i$ has terminated the computation of $h_{d,k}^i$ if all
        messages related to 
  this computation have been delivered and no player sent further
  messages
when it was scheduled.
Thus, all players $i$ are guaranteed to have terminated the
computation of $h_{d,k}^i$ after
the scheduler has gone through a full cycle of scheduling the players 
without any player sending any message required for the computation of 
$h_{d,k}^i$, for $i=1, \ldots, n$.
(Recall that we are assuming for now that all players are honest.)
\item If $a$ has the form $\Rec(i,d,\ell)$,
  the scheduler delivers to $i$ all the messages that $i$
    needs to compute
    the mediator's $\ell$th message to $i$. That is, the scheduler
    delivers     the messages from other agents
    containing the shares of the $\ell$th message from the mediator to
    $i$.  (By our inductive hypothesis, these messages have been sent
    but not yet delivered.)
        Then $\sigma_e'$ appends $\Rec(i,d,\ell)$ to $\sigma_e$.
      \item If $a$ has the form $\Rec(d,i,\ell)$, the scheduler 
        schedules the agents cyclically until all the agents finish computing
    $(VSS,i,\ell)$.
        More precisely, the scheduler delivers only the messages involved in
        protocol $(VSS, i, \ell)$, and does so immediately after they
        are sent, 
all of this while scheduling  
the agents cyclically until all the agents stop sending
messages. 
Then $\sigma_e'$ appends $\Rec(d,i,\ell)$ to $\sigma_e$.

    \commentout{
      \item If $a$ has the form $\Rec(d,i,\ell)$, the scheduler 
        schedules the agents cyclically
        until all the agents finish computing $(VSS,i,\ell)$,
and delivers all the messages required for the
agents to compute $(VSS,i,\ell)$ immediately after they are sent.
}
\end{itemize}

\commentout{
Note that since we are assuming that malicious agents do not
communicate with each other, these are the only possible scheduler
actions in mediator game.}

\commentout{ 
The desired result follows from the following lemma. 
Given a scheduler $\sigma_e$ and a protocol $\vec{\pi}$ for the
agents, let 
$J_{i,k_i}^{\vec{\pi},\sigma_e}(\vec{x})$
 denote a random variable
whose possible values are $i$'s local histories at the end of $i$'s
$k_i$th move  when $\vec{\pi}$ is used with scheduler $\sigma_e$
  and input $\vec{x}$.
   When the agents use
  $\vec{\pi}'$, they simulate the computation of $\vec{\pi}$, so each 
  agent $i$ has a simulation of its history with $\vec{\pi}$.
  Let 
 $K_{i,k_i}^{\vec{\pi}',\sigma_e'}(\vec{x})$
  denote a random variable
whose values are $i$'s simulations of its local history with $\vec{\pi}$
  at the end of its $k_i$th move when $\vec{\pi}'$ is used with
  scheduler $\sigma_e'$ and input $\vec{x}$.
}
Note that $\sigma_e'$ does not depend on the protocol $\vec{\tau}_T$
used by  malicious agents.
Suppose that $\vec{\pi} + \pi_d$ and $\sigma_e$ are deterministic.
For each input $\vec{x}$, let $J_i^{k}(\vec{x})$ denote agent $i$'s
local history at the end of its $k$th turn in the unique history of 
$(\vec{\pi} + \pi_d, \sigma_e, \vec{x})$. When the agents
use $\vec{\pi}'$, they simulate the computation of $\vec{\pi} + \pi_d$.
Let  $K_{i}^{k}(\vec{x})$ denote $i$'s history at the
end of $i$'s $k$th turn in the simulation.  
Although $\vec{\pi}'$
randomizes, since $\vec{\pi} + \pi_d$ and $\sigma_e$ are
deterministic, as we now show, the value of $K_i^k(\vec{x})$ is 
independent of this randomization.

\begin{lemma}\label{lemma:easy-implication}
  For all input profiles $\vec{x}$,
  $K_i^{k}(\vec{x}) = J_i^{k}(\vec{x})$.
\end{lemma}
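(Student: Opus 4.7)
The plan is to prove the lemma by strong induction on $k$, maintaining a joint invariant that covers both agents' simulated histories and (shares of) the mediator's simulated history. Specifically, I would strengthen the inductive claim to say: for every agent $i$ and every $k$, $K_i^k(\vec{x}) = J_i^k(\vec{x})$; and, for every $k'$ such that the mediator has taken at least $k'$ turns in the history $(\vec{\pi}+\pi_d, \sigma_e, \vec{x})$, by the corresponding point in the execution of $\vec{\pi}'$ under $\sigma_e'$ the shares $(h_{d,k'}^1,\ldots,h_{d,k'}^n)$ reconstruct to the mediator's actual local history $h_{d,k'}$ after its $k'$th turn. The base case $k=0$ is immediate: $J_i^0(\vec{x})$ and $K_i^0(\vec{x})$ are both the empty history (with input $x_i$), and $h_{d,0}$ is empty by definition, with the hardcoded VSS shares $h_{d,0}^i$ reconstructing to $\langle\,\rangle$ deterministically.

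For the inductive step, I would walk through the construction of $\sigma_e'$ event by event, aligning each event of $\sigma_e$ in the simulated history $h_e$ with the block of events $\sigma_e'$ produces. When $\sigma_e$ fires $\Sch(i)$, $\sigma_e'$ delivers the pending $\proceed_i$ message and schedules $i$. By the naive-construction rules, $i$ then rebuilds $h_{i,k+1}$ by appending to $h_{i,k}$ exactly the messages it has received from the simulated mediator since the previous $\proceed_i$; by the inductive hypothesis on the mediator's shares and Lemma~\ref{lemma:correctness}, those reconstructed messages are precisely the mediator messages that $\sigma_e$ had delivered to $i$ between its $k$th and $(k+1)$st turns in $\vec{\pi}+\pi_d$. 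Determinism of $\vec{\pi}+\pi_d$ and $t$-determinacy of VSS then force $i$'s new VSS invocations in $\vec{\pi}'$ to share the very same messages that $\pi_i$ would have sent to the mediator.

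The mediator case is where the real work lies. When $\sigma_e$ fires the $k$th $\Sch(d)$, $\sigma_e'$ delivers each agent a $\proceed_{d,k}$ and cycles through scheduling all agents until the four phases $k1,\ldots,k4$ terminate. I would invoke Lemma~\ref{lemma:well-defined} to conclude that phase $k1$ terminates and the CC invocation in phase $k3$ is well defined, and Lemma~\ref{lemma:correctness} to conclude that every VSS instance initiated by an honest agent before this point has appeared in $h_{d,k'}$ for some $k' \le k$. Since all agents are honest and $\vec{\pi}+\pi_d$ and $\sigma_e$ are deterministic, the set of VSS instances not yet appended is exactly the set of messages the mediator would receive between its $(k-1)$st $\Done$ and its $k$th $\Sch$ event in $h_e$; the BCG permutation computation of phase $k2$ is a deterministic function of the shares, so any realization of the randomness in $\vec{\pi}'$ yields the same ordering $\theta$, and the CC outputs of phases $k3$ and $k4$ are, by $t$-determinacy, unique shares of the true $h_{d,k}$ (followed by the mediator's true outgoing messages). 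Finally, when $\sigma_e$ fires a $\Rec(i,d,\ell)$ event, $\sigma_e'$ delivers the pending shares of the $\ell$th mediator-to-$i$ message; by $t$-determinacy these reconstruct to the value of that message under $\pi_d(h_{d,k})$, matching the message delivered in $\vec{\pi}+\pi_d$.

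The main obstacle is bookkeeping: ensuring that at every point where $\sigma_e'$ is invoked, the required $\proceed$ or share messages are in fact in transit and can be delivered, and that the agents' local clocks on the two sequences $\{h_{i,k}\}$ and $\{h_{d,k'}^i\}$ advance exactly in step with the $\proceed_i$ and $\proceed_{d,r}$ messages. This is handled by an auxiliary sub-induction on the length of $h_e$ showing that, after $\sigma_e'$ has simulated the first $m$ events of $h_e$, every agent has a pending $\proceed_i$ iff $\sigma_e$ is about to schedule $i$ next in $h_e$, and every agent has a pending $\proceed_{d,r}$ iff the mediator has been scheduled $r$ times in $h_e$ but the simulation of its $r$th turn has not yet been carried out. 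Once this invariant is established, the equality $K_i^k(\vec{x}) = J_i^k(\vec{x})$ follows immediately from the correspondence between event blocks in $\sigma_e$ and $\sigma_e'$ and from the determinism and $t$-determinacy properties invoked above.
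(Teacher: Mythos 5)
Your proposal is correct and takes essentially the same route as the paper: the paper proves Lemma~\ref{lemma:easy-implication} via the correspondence result (Lemma~\ref{lem:J=K}) between events of the unique history of $\vec{\pi}+\pi_d$ under $\sigma_e$ and events of the histories of $\vec{\pi}'$ under $\sigma_e'$, established by induction over the blocks of the scheduler's history using the construction of $\sigma_e'$, Lemmas~\ref{lemma:well-defined} and~\ref{lemma:correctness}, and the determinacy of the secret-sharing scheme --- exactly the case analysis and bookkeeping invariant you spell out. The only nitpick is that your stated invariant (``a pending $\mathit{proceed}_i$ iff $\sigma_e$ is about to schedule $i$'') should be weakened to an ``if'' in that direction, since each agent keeps an undelivered $\mathit{proceed}_i$ in transit at all times after the initial round.
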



We prove this lemma by proving a more general result that
establishes a correspondence between histories of $\vec{\pi}$ and
histories of $\vec{\pi}'$. In $\vec{\pi}'$, agents attempt to simulate
all the events of $\vec{\pi}+\pi_d$, which include being
scheduled and sending and receiving messages.  By the construction
of $\sigma_e'$, all shares of a message sent by the mediator
are received by its recipient virtually ``at the same time''
(more precisely, they are received one immediately after the other, 
with no other action in between). This
allows us to define a correspondence between events in a history $h$
of $\vec{\pi}+\pi_d$ when used with scheduler $\sigma_e$ and 
events in a history 
$h'$ of $\vec{\pi}'$ when used with scheduler $\sigma_e'$. 
We start by defining
the correspondence between events that are in an agent $i$'s history
in $h$ and $h'$.
\begin{itemize}
\item The event that agent $i$ is scheduled for the $k$th time in
    $h$ corresponds to the event that $i$ is scheduled after receving
    its $k$th $\proceed_i$ message in $h'$.
  (Of course, $i$ may not have received $k$ $\proceed_i$ messages in
  $h'$; in this case, no event in $h'$ corresponds to the event of $i$
  being scheduled for the $k$th time in $h$.  Similar comments hold
  for all the other correspondences defined 
 below.)
  \item The event that $i$ sends its $\ell$th message in $h$
    corresponds to the event that $i$ initiates its $\ell$th 
        invocation of VSS in $h'$.
  \item The event that $i$ receives the $\ell$th message sent by the mediator
    during its $k$th turn in $h$ corresponds to 
    the event that $i$ receives a share
    of $g_{k,\ell}(h_{d,k})$ in $h'$. Note that
        $g_{k,\ell}$ encodes the $\ell$th message sent by the
    simulated mediator during its $k$th turn given its local history
    $h_{d,k}$. 
\end{itemize}

\commentout{
Events in the mediator's history in $h$ do not correspond to single
events in $h'$.  Rather, they correspond to a collection of events.
The scheduler $\sigma_e'$ guarantees that these events happen
consecutively in $h'$.
}
Since the mediator is being simulated by all agents, events in the
mediator's history in $h$ do not correspond to single 
events in $h'$. Rather, they correspond to exactly $n$ events, one for
each agent. Scheduler $\sigma_e'$ guarantees that these $n$ events
occur
consecutively in $h'$. 
\begin{itemize}
\item The event that the mediator is scheduled for the $k$th time in
  $h$ corresponds to 
    the set of events in $h'$ consisting of agent $i$ being scheduled
    after receiving a  $\proceed_{d,k}$ message, for each agent $i$. 
\commentout{
\item The mediator sends its $\ell$th message during the $k$th time it is scheduled to agent $j$.
\item The mediator receives agent $j$'s $\ell$th message. 
  }
\item The event that the mediator 
sends the
 $\ell$th message 
to agent $j$ during its $k$th turn  in $h$
    corresponds to the set of events in $h'$ consisting of each agent $i$
    computing its share of 
    $g_{k,\ell}(h_{d,k})$ (which encodes the
$\ell$th message sent by the simulated mediator during its $k$th turn)     
    and sending it
    to agent $j$.
  \item The event that the mediator receives $j$'s $\ell$th message in
        $h$ corresponds to the set of events in $h'$ consisting of each agent
    $i$ terminating the $\ell$th VSS invocation initiated by $j$.
  \end{itemize}


\commentout{
Since all agents take part in the mediator's simulation, the events in $\vec{\pi}$ involving the mediator correspond to $n$ events in $\vec{\pi}'$, one for each agent. The events relative to agent $i$ are the following:

\begin{enumerate}
\setcounter{enumi}{3}
\item Agent $i$ is scheduled after receiving a $\proceed_{d,k}$ message.
\item Agent $i$ computes the share of the mediator's $\ell$th message during its $k$th turn and sends it to agent $j$.
\item Agent $i$ terminates $j$'s $\ell$th VSS instance.
\end{enumerate}
}

Note that we have not included the $\Done(i)$ events in 
the correspondence.  Even though such events are needed to define then
end of agent$i$'s turn, they are
redundant, since they come immediately before a $\Sch(i)$ in $i$'s
local history.  

\commentout{
Two histories $h$ and $h'$ of $\vec{\pi}$ and $\vec{\pi}'$,
respectively, \emph{correspond} if, for each agent $i$,  
\begin{itemize}
\item[(a)] The sub-sequence of events of $i$'s local history in $h'$
  involving the simulation of player $i$ and the sequence of events in
  $i$'s local history in $h$ are correspondent. 
\item[(b)] The sub-sequence of events of $i$'s local history in $h'$ involving the simulation of the mediator and the sequence of events in the mediator's local history in $h$ are correspondent.
\end{itemize}
}

\commentout{
Two sequences of such events $(e_1, \ldots, e_r)$ and $(e'_1, \ldots, e'_{r'})$ correspond if $r = r'$ and $e_k$ corresponds to $e'_k$ for all $1 \le k \le r$. Two histories $h$ and $h'$ of $\vec{\pi}$ and $\vec{\pi}'$,
respectively, \emph{correspond} if, for each agent $i$,  
\begin{itemize}
\item[(a)] The sub-sequence of events of $i$'s local history in $h'$
  involving the simulation of player $i$ and the sequence of events in
  $i$'s local history in $h$ are correspond. 
\item[(b)] The sub-sequence of events of $i$'s local history in $h'$ involving the simulation of the mediator and the sequence of events in the mediator's local history in $h$ are correspond.
\end{itemize}
}
\commentout{  
Two histories $h$ and $h'$ of $\vec{\pi}$ and $\vec{\pi}'$ correspond
if all events in $h$ correspond to some event in $h'$ (or to $n$
events in the case of a mediator event) and vice versa, and if the
order in which the events regarding a particular agent or the mediator
occur is the same in $h$ and $h'$. More precisely, two histories $h$
and $h'$ of $\vec{\pi}$ and $\vec{\pi}'$ correspond if  following
holds:
}
An event in a history $h'$ of $\vec{\pi}'$ is a \emph{simulation event} if it
could correspond to some event in another history $h$ of
$\vec{\pi}$. More precisely, an event $e$ in history $h'$ is a
simulation event if there exists a 
history $h$ of $\vec{\pi}$ and an event $e$ in $h$ such that $e'$
corresponds to $e$.
Two histories $h$ and $h'$ of $\vec{\pi}$ and $\vec{\pi}'$
\emph{correspond} 
if all non-$\Done$ events in $i$'s history in $h$ correspond to some
event in $h'$, all non-$\Done$ events in the mediator's
history in $h$ correspond to $n$ events in $h$, one for each agent,
each simulation event in $h'$ corresponds to some event in $h$, and 
the order of corresponding events in each agent $i$'s history is
preserved; more precisely,   if $e_1$ and $e_2$ are two non-$\Done$
events in $i$'s (resp., the mediator's) history in $h$, and $e_1'$ and
$e_2'$ are the events that correspond to $e_1$ and $e_2$ in $i$'s
history in $h'$, then $e_1$ precedes $e_2$ in $h$ iff $e'_1$ precedes
$e_2'$ in $h'$.  
\commentout{
More precisely, two histories $h$
and $h'$ of $\vec{\pi}$ and $\vec{\pi}'$ correspond if  following
holds:
%
\begin{itemize}
\item[(a)] For each 
non-trivial 
event $e'$ in $h'$ there is a corresponding event $e$ in $h$.
\item[(b)] For each 
non-$\Done$ 
event $e$ in $i$'s local history $h_i$, there exists a corresponding event $e'$ in $h'_i$. Moreover, if an event $e_1$ comes before another event $e_2$ in $h_i$, their corresponding events $e'_1$ and $e'_2$ satisfy that $e'_1$ comes before $e'_2$ in $h'_i$.
\item[(c)] For each 
non-$\Done$ 
event $e$ in the mediator's local history $h_d$,
for each agent $i$ there exists a corresponding event $e'_i$ in
  $h'_i$. Moreover, if an event $e_1$ comes before another event $e_2$
  in $h_d$, their corresponding events 
  in $i$'s local history $e'_{i,1}$ and $e'_{i,2}$ satisfy that $e'_{i,1}$ comes before $e'_{i,2}$ in $h'_i$.
\end{itemize}
}

Lemma~\ref{lemma:easy-implication} follows from the following lemma,
which is almost immediate from the construction of $\sigma_e'$ and
$\pi'$.  Although there are a number of histories in 
$\vec{\pi}'$ with scheduler $\sigma_e'$ and input $\vec{x}$
 due to the randomization used in protocols such as VSS and
CC, all of them correspond to the unique history in 
$\vec{\pi}$ when the scheduler plays $\sigma_e$ and players have input
profile $\vec{x}$.

\begin{lemma}\label{lem:J=K}
\commentout{
 for each agent $i$
\begin{itemize}
\item[(a)] The sub-sequence of events in $i$'s history in $\vec{\pi}'$
  involving agent $i$ (of types 1,2 and 3) are the corresponding
  events of $i$'s history in $\vec{\pi}$. 
\item[(b)]  The sub-sequence of events in $i$'s history in $\vec{\pi}'$ involving the mediator (of types 4,5 and 6) are the corresponding events of the mediator's history in $\vec{\pi}$.
\end{itemize}
}
For all input profiles $\vec{x}$, the unique history $h$ of 
where the agents use $\vec{\pi}$ with scheduler $\sigma_e$
and input profile $\vec{x}$  
corresponds to all the histories $h'$ of
where the agents use $\vec{\pi}'$ with scheduler $\sigma_e'$ and
input profile $\vec{x}$.
\end{lemma}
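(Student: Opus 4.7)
The plan is to proceed by induction on the number of events in the history $h$ of $(\vec{\pi}+\pi_d,\sigma_e,\vec{x})$, maintaining a joint invariant between $h$ and an arbitrary history $h'$ of $(\vec{\pi}',\sigma_e',\vec{x})$. Concretely, after the scheduler $\sigma_e'$ has appended its $k$th batch of events to its simulated history $h_e$, I would show that (i) $h_e$ equals the prefix of $h$ containing the first $k$ of the events produced by $\sigma_e$; (ii) every non-$\Done$ event of every honest agent $i$ and of the mediator in this prefix has a matching event (or $n$ matching events, in the case of the mediator) in $h'$, in the sense of the correspondence defined just before the lemma; and (iii) no other simulation events appear in $h'$ beyond those required by this correspondence. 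The base case is trivial: initially $h_e$ is empty, and the first thing $\sigma_e'$ does is schedule each agent once, which by construction of $\vec{\pi}'$ only triggers each agent to send itself its initial $\proceed_i$ and $\proceed_{d,0}$ messages, producing no simulation events.

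For the inductive step, I would do a case analysis on $a = \sigma_e(h_e)$, following the four cases in the definition of $\sigma_e'$. When $a = \Sch(i)$ for an honest $i$, $\sigma_e'$ delivers $i$'s most recent $\proceed_i$ and schedules $i$; by the construction of $\pi_i'$, $i$ then computes $h_{i,k+1}$ from the messages it has received since its last $\proceed_i$, initiates a VSS instance for each message $\pi_i$ would have sent, and self-sends a fresh $\proceed_i$. The induction hypothesis guarantees these messages and the simulated local history in $h'$ already correspond to $i$'s history in $h$, so the new $\Sch(i)$ and $\Snd$ events $\sigma_e'$ appends to $h_e$ match the new events in $h$. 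The cases $a = \Rec(i,d,\ell)$ and $a = \Rec(d,i,\ell)$ are handled by noting that $\sigma_e'$ delivers exactly the messages of the appropriate VSS reconstruction or sharing subroutine, cycling through the agents until termination, which by the $t$-resilience of VSS and our assumption that all players are honest must terminate and produce the expected simulation event at $i$ (respectively, at each agent).

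The main obstacle is the case $a = \Sch(d)$, where I have to argue that the four-phase protocol for updating $\{h_{d,k}^i\}$ completes correctly and produces shares of the very history $h_{d,k+1}$ that $\sigma_e$ sees in $h$. Here I would invoke Lemmas~\ref{lemma:well-defined} and~\ref{lemma:correctness}: the former guarantees that the CC invocation in phase $k3$ is well-defined so that all honest agents use consistent shares of consistent VSS outputs, and the latter guarantees that every message sent by an honest agent to the mediator in $h$ eventually appears in some $h_{d,k}$. Combined with the $t$-determinacy of the VSS/CC scheme, this forces the reconstructed $h_{d,k+1}$ to equal the mediator's actual local history after its $k$th turn in $h$, and the subsequent per-message computations using $f_{k,\ell}$ and $g_{k,\ell}$ yield shares of exactly the messages the mediator would send in $h$. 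Since $\sigma_e'$ cycles the agents until all shares are computed and delivered, the $n$ simulation events required by the correspondence all appear in $h'$ consecutively, completing the inductive step. Lemma~\ref{lemma:easy-implication} then follows by projecting the correspondence onto agent $i$ and reading off the first $k$ events of $i$'s local history.
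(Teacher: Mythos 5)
Your proposal is correct and matches the paper's intended argument: the paper dismisses this lemma as ``almost immediate from the construction of $\sigma_e'$ and $\pi'$,'' and the argument it has in mind is exactly your induction on the prefix of $h$ with a case analysis following the four cases in the definition of $\sigma_e'$, using the properties of VSS/CC (and Lemmas~\ref{lemma:well-defined} and~\ref{lemma:correctness}) to handle the mediator-scheduling case. Your explicit treatment of the invariant and of why the randomization inside VSS/CC does not disturb the correspondence is a faithful fleshing-out of what the paper leaves implicit.
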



\commentout{
Before the proof begins, we introduce the following notation. 
Recall that the scheduler's local history contains the same events as
the global history, except that we replace events of the form
$\Snd(\mu,i,j)$ and $\Rec(\mu,i,j)$ by $\Snd(i,j)$ and
$\Rec(i,j,\ell)$ since the scheduler does not know the content $\mu$
of the messages being sent. 
In a mediator game with no malicious agents, all agents send
messages only to the mediator and receive messages only from the
mediator. Thus, to 
simplify the proof and to
match the notation used in the construction of
$\vec{\pi}'$, 
\commentout{
in the scheduler's local history we replace the events
of the form $\Snd(\cdot, \cdot)$ and $\Rec(\cdot, \cdot, \cdot)$ by the following: 
}
we assume that the $\Snd$ and $\Rec$ events  in the scheduler's
history are of the following form 
\begin{itemize}
\item $\Snd(i, \ell)$: Agent $i$ sends its $\ell$th message to the
  mediator. 
\item $\Rec(i, k, \ell)$: Agent $i$ receives the $\ell$th message sent by the mediator during its $k$th turn.
\item $\Snd(d,i,k, \ell)$: The mediator sends the $\ell$th message during its $k$th turn to agent $i$.
\item $\Rec(d, i, \ell)$: The mediator receives $i$'s $\ell$th message.
\end{itemize}

We also remove the $\Done$ events from the scheduler's history, since they can be deduced from the remaining events.
 Also, given the global history $\vec{h}'$ of a run of $\vec{\pi}'$ with input $\vec{x}$ and scheduler $\sigma_e'$, we define $i$'s \emph{refined} local history $ref(\vec{h}'_i)$ as the sequence of the following events in $\vec{h}'_i$:
\begin{itemize}
\item [A():] Being scheduled after receiving a $\proceed_i$ message.
\item [B($\ell$):] Initiating its $\ell$th VSS-share instance.
\item [C1($k, \ell$):] Receiving the first share from the mediator's $\ell$th message during its $k$th turn (computed using CC with function $g_{k, \ell}$).
\item [C2($k, \ell$):] Receiving the $n$th (and last) share from the mediator's $\ell$th message during its $k$th turn.
\item [D($r$):] Being scheduled after receiving a $\proceed_{d,r}$ message.
\item [E($j, \ell$):] Terminating $j$'s $\ell$th VSS instance.
\item [F($i, k, \ell$):] Computing its share of the mediator's $\ell$th message during its $k$th turn and sending it to agent $i$.
\end{itemize}

Moreover, we define the [ABC]-refined history of $\vec{h}'_i$ as the
subsequence of $ref(\vec{h}'_i)$ that contains only the events of the
form A,B,C1 and C2, and similarly we define the [DEF]-refined
history. Lemma~\ref{lemma:easy-implication} follows from the following
proposition: 

\begin{proposition}\label{prop:same-history}
Fix an input $\vec{x}$ and a scheduler $\sigma_e$. Let $h_e$ be the
scheduler's history when running $\vec{\pi} + \pi_d$ with input
profile $\vec{x}$ and scheduler $\sigma_e$. Then,  
\begin{itemize}
\item[(a)] The [ABC]-refined history of agent $i$ in $\vec{\pi}'$ is
  obtained by picking the subsequence of events in $h_e$ involving
  agent $i$ and substituting all $\Sch(i)$ events by $A()$, all
  $\Snd(i,\ell)$ events by $B(\ell)$, and all $\Rec(i,k,\ell)$ events
  by a pair of $C1(k,\ell)$ and $C2(k,\ell)$ events. 
\item[(b)] The [DEF]-refined history of agent $i$ in $\vec{\pi}'$ is obtained by picking the subsequence of events in $h_e$ involving the mediator $d$ and substituting all $\Sch(d)$ events by $D(r)$ events (where $r$ is the number of times the mediator has been scheduled), all $\Rec(d,j,\ell)$ events by $E(j,\ell)$ events, and all $\Snd(d,j,k,\ell)$ events by $F(j,k,\ell)$ events.
\end{itemize}
\end{proposition}
}

Note that Lemma~\ref{lem:J=K} implies Lemma~\ref{lemma:easy-implication},
since it states that agents simulate receiving and sending messages in
exactly the same order with $\vec{\pi}'$ as they do with $\vec{\pi}+\pi_d$.
Moreover, if the protocol is deterministic, the contents
of those messages are uniquely determined. 

\commentout{
\begin{proof}[Proof of Proposition~\ref{prop:same-history}]
    We show this by induction on the number of event blocks in $h_e$. Let
$h_e^n$ be the scheduler's history truncated after $n$ blocks of
events. We show that for each run of $\vec{\pi}'$ there exists a
prefix $\vec{h}'_n$ of the agents' histories such that properties (a)
and (b) hold with respect to $h_e^n$. 

For $n = 0$ this assumption clearly holds. Suppose this holds for
$n-1$ blocks, if the $n$th block is a scheduler block, the assumption
also holds for $h_e^n$ by construction of $\sigma_e'$. If it is agent
$i$'s block, it follows by construction of $\vec{\pi}'$ that $i$
invokes VSS as many times given $\vec{h}'_{n-1}$ as input as it would given
$h_e^{n-1}$. Finally, if it is the mediator's block, the construction
of $\vec{\pi}'$ (in particular the way agents compute $h_{d,k}$) and
the fact that $\sigma_e'$ prioritizes all messages needed for the
computation of $h_{d,k}$ and the mediator's messages guarantees that
these properties are still satisfied for $h_e^n$. 
\end{proof}
}

\commentout{
\begin{proof}
We prove this by induction in the number of event blocks in $\vec{\pi} + \pi_d$'s history. More precisely, let $\vec{h}^n$ be the history of events in $\vec{\pi} + \pi_d$ after $n$ blocks of events. We show by induction that regardless of the randomization used in VSS, CC, etc., in the agents' history in $\vec{\pi}'$ there exists a prefix $\vec{h}'$ such that the following holds:
\begin{itemize}
\item [(a)] If agent $i$ has been scheduled $k_i$ times in $\vec{h}^n$, then the local history $h_{i,k_i}$ that $i$ computes in $\vec{h}'$ is equal to $\vec{h}^n_i$.
\item [(b)] If the mediator has been scheduled $k_d$ times, each agent $i$ has computed $h_{k,k_d}^i$ in $\vec{h}'$, and $(h_{d,k_d}^1, \ldots, h_{d, k_d}^n)$ is $\vec{h}^n_d$-realizable.
\item [(c)] For each message $msg$ sent by an agent $i$ in
  $\vec{h}^n$, there exists a VSS invocation of $i$ in $\vec{h}'$ with
  input $msg$. Moreover, if $msg$ is sent before $msg'$ in $\vec{h}'$,
  then, the $msg$ VSS instance in $\vec{h}'$ is initiated before the
  $msg'$ one. 

\item [(d)] If an agent $i$ receives a message $msg$ between its $k$th and $(k+1)$th turn in $\vec{h}^n$, then $i$ receives all shares from the 
\item [(d)] If the $\ell$th message $msg$ sent by the mediator during its $k$th turn in $\vec{h}^n$ was not yet received by its recipient $i$, then in $\vec{h}'$ all agents have computed their share $s_{k, \ell}^i$ of $g_{k, \ell}(h_{d,k})$ and sent them to $i$. These shares are $msg$-realizable and are not still received by $i$ in $\vec{h}'$.
\item [(e)] For each message $msg$ sent by an agent $i$ in $\vec{h}$
  received by the mediator after the last time it was scheduled in
  $\vec{h}^n$, there exists a VSS invocation of $i$ in $\vec{\pi}'$
  with input $msg$ that all agents have completed. Moreover, if one of
  such messages $msg$ is received before another message $msg'$ in
  $\vec{h}^n$, all agents complete the $msg$ VSS instance before the
  $msg'$ one in $\vec{h}'$. 

\end{itemize}
\end{proof}
}

If $\vec{\pi} + \pi_d$ or $\sigma_e$ involve randomization, we can
assume that the agents and mediator toss all the coins they need 
at the beginning (before they are first scheduled)
and then use the outcomes of these coin tosses for their
decisions. Fixing the outcome of such coin tosses makes the protocols
deterministic, and an analogous argument to that used for
Lemma~\ref{lemma:easy-implication} for each of the possible
sequences of coin
tosses guarantees that the agents' outputs are identically distributed
in $\vec{\pi} + \pi_d$ and in $\vec{\pi}'$. Since 
 $(\vec{\pi}_{-T},\vec{\tau}_T)$ is just another protocol,
it immediately follows from Lemma~\ref{lemma:easy-implication} that
for all input profiles 
\commentout{
$\vec{x}$, $O(\vec{\pi}_{-T}, \vec{\tau}_T, \sigma_e,
\vec{x})$ and $O(\vec{\pi}'_{-T}, \vec{\tau}'_T,
\sigma_e', \vec{x})$
}
$\vec{\pi}(\vec{x}, (T, \vec{\tau}_T, \sigma_e))$ and $\vec{\pi}'(\vec{x},(T, \vec{\tau}'_T, \sigma'_e))$
 are identically distributed for all
possible protocols $\vec{\tau}_T$ for the malicious agents.
This completes the proof that part (a) of the definition of
``bisimulates'' holds.

\commentout{
\begin{proof}[Proof of Lemma~\ref{lemma:easy-implication}]
We show the lemma for the case in which $\vec{\pi}$ and $\sigma_e$ are
deterministic, although the same argument works if the
strategies include randomization. 

[TODO]: We show by induction on $n$ that all prefixes of histories of up to $n$ blocks of events are equally distributed in both scenarios.
\end{proof}
}


We now prove that part (b) of the definition of ``bisimulates'' holds.
For this proof, we assume without loss of generality that
malicious agents output their local history when running
$\vec{\tau}'_T$, since any output must be
a function of their local history.
%
For ease of exposition, we begin by giving the highlights of the
construction of $\vec{\tau}$ and $\sigma_e$, given $\vec{\pi}'$,
$\vec{\tau}'_T$, 
 and $\sigma_e'$; we later present the
construction in more detail. The idea for constructing
$\vec{\tau}_T$ 
 and $\sigma_e$ is that the adversary simulates
what would occur if honest agents use $\vec{\pi}'_{-T}$,
malicious agents use $\vec{\tau}'_T$, and the scheduler uses
$\sigma_e'$. If the adversary
in the protocol with the mediator
knew the input $\vec{x}_{-T}$ of honest
agents, 
\commentout{
this would be straightforward: it could perform the 
simulation before the game starts, and send/deliver messages in such a
way that they match what honest agents compute in their local
histories $h_{i,k}$ and $h_{d,k}$. 
}
the adversary could perform the simulation before the protocol starts, and
have malicious agents output the local history they have in the
simulation, regardless of their history in $\vec{\pi}$. 
However, the adversary does not
know the honest agents' inputs.
Thus, the adversary does the simulation assuming honest agents
have some fixed input, which we take to be $\vec{0}_{-T}$.
The following lemma makes precise the sense in which using
$\vec{0}_{-T}$ rather than $\vec{x}_{-T}$ is ``safe''.

\commentout{
\begin{definition}
  Denote by $J_T^e(\vec{\sigma}, \sigma_e, \vec{x})$ the random variable
whose values are the local history profiles of the agents in $T$ and the
environment when the agents 
 use
$\vec{\sigma}$, the scheduler plays $\sigma_e$,  and agents have input
$\vec{x}$. We say that a protocol $\vec{\sigma}$ is $t$-uniform if for
all subsets $T \subseteq [n]$ with $|T| \le t$, all strategies
$\vec{\tau}_T$ for agents in $T$, all input profiles $\vec{x}_T$ for agents
in $T$, all schedulers $\sigma_e$,  and all pairs of 
input profiles $\vec{x}_{-T}, \vec{x}_{-T}'$ for agents not in $T$, we have that
$J_T^e(\vec{\sigma}_{-T}, \vec{\tau}_T, \sigma_e, \vec{x}_{-T},
\vec{x}_T)$ and $J_T^e(\vec{\sigma}_{-T}, \vec{\tau}_T, \sigma_e,
\vec{x}'_{-T}, \vec{x}_T)$ are identically distributed.
\end{definition}
}

\begin{lemma}\label{lemma:secret}
Let $J(\vec{\pi},A,\vec{x})$ be a random a random variable whose
values are the values of the malicious agents, when the honest agents
use $\vec{\pi}$, given input profile $\vec{x}$, and adversary
$A = (T,\tau_{T}, \sigma_e)$.
Let $\vec{\sigma}_{VSS}$
and $\vec{\sigma}_{CC}$ be the implementation of VSS and CC,
respectively, in a $t$-resilient secret-sharing scheme.  
Then for all adversaries $A = (T,\vec{\tau}_T,\sigma_e)$ and input
profiles $\vec{x}$ and $\vec{x}'$, we have
\begin{itemize}
  \item $J^e((\vec{\sigma}_{VSS}), A,   \vec{x},)$ and
  $J^e((\vec{\sigma}_{VSS}), A, 
(\vec{x}_{-T}, \vec{x}'_{-T}))$ are identically distributed; 
  \item $J^e((\vec{\sigma}_{CC}), A,   \vec{x},)$ and
  $J^e((\vec{\sigma}_{CC}), A, 
(\vec{x}_{-T}, \vec{x}'_{-T}))$ are identically distributed; 
\end{itemize}
\end{lemma}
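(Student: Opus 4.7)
The plan is to prove the two bullets of Lemma \ref{lemma:secret} essentially independently, reducing each to the corresponding secrecy property of the underlying primitive (VSS and CC) stated in Section~\ref{sec:tools}. First I would unpack what $J^e$ records: the local histories of the agents in $T$ together with the scheduler's local history. The scheduler sees only the timing pattern of $\Snd$/$\Rec$/$\Sch$/$\Done$ events together with message indices, not message content. Since the agents' protocols and the scheduler's protocol are fixed and the honest agents' inputs affect only the content of the messages they send through VSS (resp.\ CC), the scheduler's view depends on $\vec{x}_{-T}$ only via the content of honest agents' messages; so it suffices to argue that the \emph{contents} observed by agents in $T$ (i.e.\ the shares they receive and any intermediate values of the sub\-protocols) have the same joint distribution under $\vec{x}_{-T}$ and $\vec{x}'_{-T}$.

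For the first bullet, I would invoke the secrecy clause of VSS: if the sender is honest and no honest agent has begun $\VSS^{rec}$, the malicious agents cannot distinguish the secret (indeed, since BCG's VSS is information\-theoretically secure and uses $t$\-uniform share distributions, the joint distribution of the shares and auxiliary messages held by any $T$ with $|T|\le t$ is uniform over the set of views consistent with some degree\-$t$ polynomial, independent of the secret). Since $\vec{\sigma}_{VSS}$ involves only the sharing phase, I would apply this to each honest sender in turn via a straightforward hybrid: swap $x_i$ to $x_i'$ one honest coordinate at a time, and at each step observe that $A$'s view of that VSS invocation is identically distributed under both secrets and is independent of the (already\-swapped) VSS invocations of the other honest agents (whose share distributions are themselves uniform and do not leak anything about the current coordinate).

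For the second bullet, I would apply the analogous secrecy guarantee of CC stated in Section~\ref{ssection:circuit-computation}: after running $CC_j(f)$ and before $\VSS^{rec}$, no $j\in T$ learns anything about honest shares, the inputs, or $f(\vec x)$ beyond what it had before. Combined with the fact that the inputs to CC are themselves shares produced by $t$\-uniform VSS instances, the joint view of $T$ during CC is distributionally the same for $\vec{x}_{-T}$ and $\vec{x}'_{-T}$. Again a coordinate\-by\-coordinate hybrid argument finishes the job, and the scheduler\'s view is handled by the timing argument above.

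The main obstacle I anticipate is bookkeeping the hybrid argument cleanly in the presence of the scheduler: the scheduler is part of the adversary and its randomization can depend on what it has observed so far, so one must check that the equality of distributions in each hybrid step is conditional on any prefix of scheduler events. This is handled by noting that along any prefix, the scheduler\'s history has been a function of message indices and turn structure only, neither of which depends on $\vec{x}_{-T}$; hence conditioning on any such prefix preserves the distributional equivalence at the next step. Once that is articulated carefully, both bullets follow directly from the $t$\-resilience and secrecy of $\vec{\sigma}_{VSS}$ and $\vec{\sigma}_{CC}$.
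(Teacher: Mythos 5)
Your proposal is correct and follows the same route as the paper, which simply asserts that the lemma ``follows easily from the definition of $t$-resilience, since otherwise the adversary could deduce information about the honest agents' inputs given its local history.'' Your coordinate-by-coordinate hybrid, the observation that the scheduler's view depends only on timing and message indices, and the care about conditioning on scheduler prefixes are all legitimate elaborations of the secrecy properties of VSS and CC that the paper invokes without detail.
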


Lemma~\ref{lemma:secret} implies that the adversary's history in an invocation of
VSS and CC is independent of the actual inputs of the honest
agents. This follows easily from the definition of $t$-resilience,
since otherwise the adversary could deduce information about the honest
agents' inputs given its local history. This means that much
of the simulation can be performed by the adversary without having to
know which inputs honest agents are using.  
\commentout{
agents can the adversary's simulation in three ways: 
%
\begin{itemize}
  \item [(a)] how many times an honest agent $i$ invokes VSS 
after receiving a $\proceed_i$ message;
\item [(b)] what values honest agents broadcast after computing their 
    share of  $f_{k,\ell}(h_{d,k})$;
\item [(c)] what values an agent $i \in T$ receives  from honest agents
  after computing $g_{k,\ell}(h_{d,k})$ (if
  $f_{k,\ell}(h_{d,k}) = i$).  
\end{itemize}
}
Indeed, 
there are only three types of actions or decisions of an honest
agent $i$ that both depend on the honest agents' inputs and can affect the
adversary's local history: 
\begin{itemize}
  \item [(a)] how many time $i$ invokes VSS 
  after receiving a $\proceed_i$ message; 
  \item [(b)] what values $i$ broadcasts after computing its
       share of $f_{k,\ell}(h_{d,k})$; 
  \item [(c)] what values $i$ sends to an agent $j \in T$
  after computing $g_{k,\ell}(h_{d,k})$ (if 
    $f_{k,\ell}(h_{d,k}) = j$). 
\end{itemize}

Clearly, the number of times that an honest agent invokes VSS
affects the adversary's simulated history. For
(b) and (c), if the adversary assumes that the honest agents have
arbitrary inputs, the values received by agents in $T$ will also be
arbitrary, as opposed to being correlated to the agents' inputs (e.g.,
the messages sent by the mediator can depend in the messages
received by honest agents, which ultimately depend on their inputs). 

\commentout{
Thus, the decisions and actions taken by honest agents
are the only parts of $\vec{\pi}'$ that the adversary cannot simulate,
since the adversary does not know the honest agents' inputs. However,
if 
}
We show next that (a), (b) and (c) are the only decisions and actions
taken by honest agents that the adversary cannot simulate. 
Suppose that 
the adversary had an oracle that could tell
the adversary the number of times each honest agent $i$ invokes VSS
each time $i$ is scheduled, and the values of
each instance of $f_{k,\ell}(h_{d,k})$ and $g_{k,\ell}(h_{d,k})$ (if
$f_{k,\ell}(h_{d,k}) \in T$). Then the adversary could perform
its simulation even without the honest agents' inputs:
it could run its simulation with arbitrary inputs for honest
agents. Whenever an honest agent $i$ is scheduled after receiving a
$\proceed_i$ message, it could ask the oracle how many times $i$
invokes VSS, and could simulate $i$ performing that many
invocations of
VSS
with 
arbitrary inputs, 
 even without knowing the actual local history of $i$.
Similarly, whenever honest agents have to
broadcast or send an agent in $T$ their share of
$f_{k,\ell}(h_{d,k})$ or $g_{k,\ell}(h_{d,k})$, the adversary could ask the
oracle what value $f_{k,\ell}(h_{d,k})$ (resp., $g_{k,\ell}(h_{d,k})$) takes
in the actual history of $\vec{\pi}'$. In its simulation, the adversary
takes the set of shares that the honest agents broadcast or sent to
agents in $T$ to be $f_{k,\ell}(h_{d,k})$-realizable or
$g_{k,\ell}(h_{d,k})$-realizable, respectively, regardless of the local
history of the honest agents in the simulation (we will
show when we present the more detailed construction how this can be done in
such a way that the adversary's local history in its simulation is
still consistent, despite the fact that in the simulation, honest
agents may send different
shares than the ones they computed). 
It follows from Lemma~\ref{lemma:secret} that the 
adversary's local histories in the simulation with this oracle and its
histories in a real
interaction where honest players play $\vec{\pi}'$ with input
$\vec{x}$ are identically distributed. 

Unfortunately, the adversary does not have access to such an
oracle. However, by the construction of $\vec{\pi}'$, the values given by
the oracle can be deduced from the history of the protocol with the
mediator  that the adversary is 
simulating, even without the benefit of an oracle.
Specifically, 
if, for all honest agent $i$, each of the simulated
histories $h_{i,k}$ in $\vec{\pi}'$ is equal to $i$'s
local history at the end of $i$'s $k$th turn in $\vec{\pi}$, and
if each of the histories $h_{d,k}$ is equal to the local history of
the mediator at the end of its $k$th turn;  
the number of times that an honest agent $i$ invokes VSS
after receiving its $k$th $\proceed_i$ message but before receiving
its $(k+1)$st  
$\proceed_i$ message in $\vec{\pi}'$ is the number of messages sent by $i$
during its $k$th turn in $\vec{\pi}$; $f_{k,\ell}(h_{d,k})$ is the
recipient of the $\ell$th message sent by the mediator during its
$k$th turn; and $g_{k,\ell}(h_{d,k})$ is the content of this message
(which is known by the adversary if its recipient is in $T$).
\commentout{
if the adversary schedules the agents and delivers messages in
$\vec{\pi}$ in the same order as that in which agents receive their
$\proceed$ messages and are able to reconstruct their messages in
$\vec{\pi}$, the actions of honest agents and the
mediator in the simulation allow it to dispense with the oracle.
}
\commentout{
if the adversary schedules the agents in $\vec{\pi}$ in the same order
as they are scheduled in its simulation in $\vec{\pi}'$
after receiving their $\proceed$ messages, and if the adversary delivers
messages in $vec{\pi}$ in the same order as in which the agents are
able to reconstruct these messages in $\vec{\pi}'$ (e.g. the scheduler
delivers the $\ell$th message from the mediator to agent $i$ whenever
$i$ is able to reconstruct $g_{k, \ell}(h_{d,k})$ in the adversary's
simulation), the actions of honest agents and the mediator in the
simulation allow the adversary to dispense with the oracle.
}
\commentout{  
However, this is only valid if the local histories $h_{i,k}$ that
honest players compute in $\vec{\pi}'$ are equal to their local
history $h_i$ at the end of their $k$th turn in $\vec{\pi}$. Thus, if
the adversary schedules each honest player $i$ in $\vec{\pi}$ whenever
they simulate being scheduled in the adversary's simulation (which
means, whenever they are scheduled after receiving a $\proceed_i$
message), and if the adversary also delivers each of the mediator's
messages to each player $i$ whenever $i$ would simulate receiving such
message in the adversary's simulation (more precisely, the scheduler 
delivers the $\ell$th message from the mediator to player $i$ whenever
$i$ is able to reconstruct $g_{k, \ell}(h_{d,k})$ in the adversary's
simulation), the actions of honest agents and the mediator in the simulation allow the adversary to dispense with the oracle.
}
Thus, if the adversary could schedule agents and deliver messages in
$\vec{\pi}$ in such a way that, for each agent $i$ and  all $k$, the local
histories $h_{i,k}$ in the adversary's simulation are the same as 
the local history of $i$ in $\vec{\pi}$ at the end of $i$'s $k$th
turn, then we could dispense with the oracle.  However, because the
adversary does not know in the honest agents' input profiles, it
cannot in general do this.

Fortunately, we do not need quite this much. 
\commentout{
By 
Lemma~\ref{lemma:secret}, the actual values that honest agents share
using VSS and compute using CC have no effect on  for the
adversary's local history 
(the local history profile of malicious players and of the adversary)
in its simulation.
}
Recall that the aim of the simulation is for the adversary to compute
what history it would have in $\vec{\pi}'$. Since 
$|T| < n/4$, Lemma~\ref{lemma:secret}
implies that the local histories of the agents in $T$ and of the
scheduler have the same
distribution, independent of which values are being shared by honest agents.
Therefore, 
to deduce the values given by the oracle, 
 it suffices 
 that the adversary schedules agents and
 delivers messages 
 in $\vec{\pi}$ in such a way that, for each agent 
 $i \not \in T$,
  the local
histories $h_{i,k}$ in the adversary's simulation are the same as 
the local history of $i$ in $\vec{\pi}$ at the end of $i$'s $k$th
turn, except possibly for $i$'s input and the content of the messages
sent and received by $i$.
This means that the local histories $h_{i,k}$ and $i$'s local
histories in $\vec{\pi}$ at the end of its $k$th turn should consist
of exactly the same $\Sch$, $\Snd$, $\Rec$, $\Comp$, and $\Done$ events, and may
differ only in the content of their $\Rec$ and $\Snd$ events. 
 A more
 detailed construction of the adversary $(T,\vec{\tau}_T, \sigma_e)$ is
 given next.  

\commentout{
BCG provide $t$-uniform protocols that implement VSS-share and CC if $t <
n/4$;
$t$-uniformity for their protocols is required to ensure that the
adversary cannot deduce information from the inputs used in
VSS or CC from its local history.  For similar reasons,
$t$-uniformity is critical for 
our results and for secure computation more  generally.

Our aim is not to show that the local histories of \emph{all} agents in
the simulation and in the corresponding history of $(\vec{\pi}'_{-T},
\vec{\tau}'_T, \sigma'_e, \vec{x})$ are identically distributed.
For our purposes, it suffices that this property holds only of
the local histories of agents in
$T$; $t$-uniformity then guarantees that using arbitrary inputs for
VSS and CC are safe for this purpose, even if the global history
profile never arises in an actual run. However, some of the
procedures used by agent $i$ in $\vec{\pi}'_i$ are not 
$t$-uniform; 
their inputs depend on $\vec{x}_{-T}$, which the 
adversary does not know. Specifically, these procedures are 
\begin{itemize}
\item[(a)] computing how many
  many VSS invocations $i$ performs after receiving a
  $\proceed_i$ message. 
\item[(b)] broadcasting the shares of $f_{k, \ell}(h_{d, k})$ and sending
the shares of $g_{k,\ell}(h_{d,k})$ during the mediator
  simulation. 
\end{itemize} 
\commentout{
This
shows that for VSS-share and CC, using $\vec{0}_{-T}$ as the
input of honest agents makes no difference when simulating the local
history of the adversary. However, some of the procedures used in
$\vec{\pi}'$ do not satisfy this property, namely the consensus and
BCG procedures ran in Phase $k1$ and $k2$ when computing $h_{d,k}$,
the process in which each honest agent $i$ decides how many VSS-share
invocations it performs after receiving a $\proceed_i$ message, and
broadcasting/sending the shares of the $f_{k,\ell}$ and $g_{k,\ell}$
functions during the mediator's simulation.
}

\commentout{
Although the consensus and BCG protocols are not $t$-uniform
(clearly the output of the protocol depends on the inputs
of the honest agents), they depend only on the order in which they
terminate the VSS invocations, which is independent of
$\vec{x}_{-T}$.
}

\commentout{
Consider an oracle $A^{\vec{x}}$, which is aware of $\vec{x}$, that can answer the following queries:
\begin{itemize}
\item[(a)] Given local history $\vec{h}$ and an agent $i$, how many VSS-share instances would $i$ 
\end{itemize}
}

For ease of exposition, we first assume that
$\vec{\pi}$ is deterministic, and then show that our construction
works even if $\vec{\pi}$ involves randomization.
 Recall that in
$\vec{\pi}'$, each honest agent $i$ keeps track of a sequence
$h_{i,0}, h_{i,1}, \ldots, h_{i, k}, \ldots$ of local histories, where
$h_{i,k}$ denotes what local history would $i$ have in
$\vec{\pi}$ after it has played $k$ times. Recall as well that these
local histories $h_{i,k}$ consist of events of the form
$\Sch(i)$, $\Rec(\mu, i,j)$, $\Snd(\mu, i,j)$, and
$\Done(i)$. Moreover, an honest agent $i$ invokes a VSS
instance with message $mes$ the $k$th time it is scheduled iff $i$ would send
message $mes$ to the mediator in $\vec{\pi}$ if it had local history
$h_{i,k}$.

Note that if all the agents and the scheduler use deterministic
strategies, then for all $i$ and $k$, the local history of 
agent $i$ the $k$th time it is scheduled is determined by the input
profile $\vec{x}$ of the agents. More generally, even if the
scheduler and a subset $T$ of malicious agents randomize, for all $i
\notin T$,
the local history of $i$ when it is scheduled for the $k$he time is
determined by the inputs  of the agents not in $T$,
the local histories of the agents in
$T$, and the local history of the scheduler
since the outcome of the coin tosses of agents in $T$ and the
scheduler is encoded in their local history. 
\commentout{
 since we can assume that
all the agents in $T$ and the scheduler toss their coins before the
game starts (and so their coin tosses are part of their local history).
Thus, the agents in $T$ and the scheduler play determinstically
given their local histories.
}
This
means that if honest agents play $\vec{\pi}'_{-T}$, by construction,
the local histories $h_{i,k}$ that they generate are determined by
agents' inputs
and the adversary's local history.
 The idea for constructing $\sigma_e$ and
$\vec{\tau}_T$ is to simulate a history of $(\vec{\pi}'_{-T},
 \vec{\tau}'_T, \sigma'_e, \vec{x})$ while acting in such a way that in
the mediator game, for all $i \in [n] \setminus T$ and $k$, the local
history of agent $i$ the $k$th time it is scheduled is equal to the
value $h_{i,k}$ determined by the local history of the adversary in
its simulation and $\vec{x}_{-T}$, even though the adversary does not
know $\vec{x}_{-T}$. The adversary can thus observe, for each $i
\in [n] \setminus T$ and $k$, how many messages $i$ sends the
$k$th time it is scheduled in $\vec{\pi}$, and then assume that $i$
shared the same number of messages the $k$th time it is scheduled in
the simulation. Analogously, the adversary can observe which messages
the mediator sends each time it is scheduled, and who the messages are
sent to.  It then can
adapt its simulation so that for all $k,\ell$, the output of each
$f_{k,\ell}(h_{k,\ell})$ is the recipient of the mediator's $\ell$th
message during the $k$th time it is scheduled in $\vec{\pi}$. Also, if
this recipient was  an agent in $T$, the adversary may further adapt
the simulation to make the output of $g_{k,\ell}(h_{k,\ell})$ the
content of that message (note that if the recipient is in $T$, the
adversary knows the content of the message).

To achieve this, the idea is that the adversary schedules the agents
and delivers the messages in the same order in $\vec{\pi}$ and in its
simulation of $\vec{\pi}'$. 
\commentout{
However, this does not work as expected
since, for example, in $\vec{\pi}$, sending a message to the mediator
takes only one action, while performing a VSS-share in $\vec{\pi}'$
might take several rounds of communication. However, as we noted in
our construction, there is a  correspondence between the following
events: 
}
However, this is not straightforward since, for instance, sending a
single message in $\vec{\pi}$ corresponds to completing
an invocation of VSS in $\vec{\pi}'$, which involves several messages from
different agents. 
\commentout{
There are other correspondences between events in
$\vec{\pi}$ and $\vec{\pi}'$ in our construction,
%
listed below:
}
These correspondences are a generalization of those discussed in the proof of Lemma~\ref{lemma:easy-implication} (note that we are not assuming anything about the scheduler in this case):
\begin{center}
  \begin{tabular}{|c|c|}
    
\hline
Events in $\vec{\pi}$ & Events in $\vec{\pi}'$\\
\hhline{|=|=|}
Honest agent $i$ is scheduled. & \shortstack{Honest agent $i$ is scheduled after \\ receiving a $\proceed_i$ message.}\\
\hline
\shortstack{The mediator is scheduled \\ for the $k$th time.} &
\shortstack{An honest agent terminates \\ phase $k2$}\\ 
\hline
\shortstack{The mediator receives the $\ell$th message \\ from agent
  $i$ before the $k$th time \\ it is scheduled.} & \shortstack{An
  honest agent terminates phase $k1$ \\ agreeing to include $(VSS, i,
  \ell)$.}\\ 
\hline
\shortstack{An honest agent $i$ receives the \\ $\ell$th message sent
    by the mediator \\ when scheduled the $k$th time.} &
\shortstack{Honest agent $i$ reconstructs the $\ell$th \\ message
  encoded in $h_{d,k}$.}\\ 
\hline
\end{tabular}
\end{center}

Thus, rather than scheduling and delivering messages in the same order,
the adversary acts so that for each honest agent $i$, corresponding
events involving $i$ occur in the same order in $\vec{\pi}$ and in the
adversary's 
simulation of $\vec{\pi}'$. If the adversary manages to guarantee
this, it can use the actions of honest agents and the mediator in
$\vec{\pi}$ as feedback for its simulation of $\vec{\pi}'$ as
described before. We show next how this can be done in the full
construction of $\vec{\tau}_T$ and $\sigma'_e$.

}

For simplicity, we assume the adversary is a single entity that
controls both the scheduler and the subset $T$ of malicious agents. Given
$\vec{\pi}, \vec{\tau}'_T, \sigma'_e$, and $\vec{x}_T$, the adversary
$(T,\vec{\tau}_T,\sigma_e)$.
starts by performing a simulation of a history of $(\vec{\pi}'_{-T},
\vec{\tau}'_T, \sigma'_e, \vec{0}/_{(T,\vec{x}_T)})$ using additional
variables $\alpha_{i,k,\ell}$ and $\beta_{i,k,\ell}$ with $i,k,\ell
\in \mathcal{N}$, all initially set to a special value
$\bot$. Whenever one of the following events occur in the
simulation, the adversary proceeds as described below.
%
\begin{itemize}
  \item[(1)] An agent $i \not\in T$ is scheduled 
 after receiving
  a $\proceed_i$ message: In this case, $\sigma_e$ schedules
   $i$ in $\vec{\pi}$. If $i$ sends $\ell$ messages when it is scheduled in
  $\vec{\pi}$, then in 
  its simulation,
   $i$ is scheduled and
    invokes VSS $\ell$ times with input $0$, 
    regardless of $i$'s local history $i$ in the simulation.
  \item[(2)] An honest agent $i \not \in T$ terminates the share phase of 
    $(VSS, j, \ell)$, with $j \in T$: 
Since $n > 4t$,  
the properties of VSS guaranteee that at least
  $2t+1$ honest agents in the simulation will also compute
  their share of $(VSS, j, \ell)$, and that these shares  
    reconstruct a unique value $mes_{j,\ell}$. The scheduler $\sigma_e$ then
schedules agent $j$ in $\vec{\pi}$, and $j$ sends message
$mes_{j,\ell}$ to the mediator, tagged with label $\ell$. 
\item[(3)] 
  An agent $i \not \in T$ is the first honest agent to compute 
  the permutation $\theta_k$ computed in phase $k2$.
\commentout{  
  (in phase $k2$ when computing
    $h_{d,k}$).
    } 
Let  $m_k$ be the 
cardinality of the domain of  
 of $\theta_k$, and let $(VSS,
  j_1, \ell_1), \ldots, (VSS, j_{m_k}, \ell_{m_k})$ be the
  invocations of VSS that are included in $h_{d,k}$ in Phase
    $k1$. The scheduler $\sigma_e$ delivers agent $j_{\theta_k(1)}$'s
  $\ell_{\theta_k(1)}$th message, ..., and $j_{\tau(n)}$'s $\ell_{\theta_k(n)}$th
  message to the mediator in $\vec{\pi}$, and then schedules the
  mediator.
  \item[(4)] An agent $i \not \in T$ terminates the 
invocation of CC instance for $f_{k, \ell}(h_{d,k})$.
    If the additional variable $\alpha_{i,k,\ell} \not = \bot$,
then the adversary continues 
its simulation
under the
assumption that 
$i$'s output of the CC invocation is $\alpha_{i,k,\ell}$ rather than the
actual output. 
 Otherwise,
since $n > 4t$, 
the assumption made at the end of
Section~\ref{ssection:circuit-computation}  
 guarantee
 that there is a set 
 $I$ 
 of at least $2t+1$ honest agents in the
 simulation that 
 have already computed 
  their shares 
   of $f_{k,\ell}(h_{d,k})$.
\commentout{
the shares of these agents in $S$ uniquely determine the output
$s_{j,k,\ell}$ that each other agent $j$ would get if they terminate
the CC procedure.
}
Let $\{s_{i,k,\ell}\}_{i \in [n]}$ be the unique full extension of the
shares of the agents in $I$, and let $j_{k,\ell}$ be either the receiver
of the mediator's $\ell$th message during its $k$th turn or 0 if the
mediator didn't send $\ell$ messages during its $k$th turn. The
adversary 
samples
uniformly at random a full $j_{k,\ell}$-extension
$\{s'_{i,k, \ell}\}_{i \in [n]}$ of $\{s_{i,k,\ell}\}_{i \in T}$ and
sets $\alpha_{i,k,\ell}$ to $s'_{i,k,\ell}$ for all $i \in [n]$.
\commentout{
The adversary sets $\alpha_{i,k,\ell}$ to
$s_{i,k,\ell}$ for each $i \in T$, then  chooses values 
for $\{\alpha_{i,k, \ell}\}_{i \not \in T}$ uniformly at random that
satisfy the constraint that
the output of 
$\longvec{\VSS}^{rec}(\alpha_{1,k,\ell}, \ldots, \alpha_{n, k, \ell})$
is
either the receiver of $\ell$th message sent by the mediator the 
$k$th time it is scheduled  
in $\vec{\pi}$, or 0 if the mediator didn't send $\ell$
messages 
the $k$th time it was scheduled. 
}
The adversary continues 
its simulation
 by assuming that 
 $i$'s output of the CC invocation is $\alpha_{i,k,\ell}$ rather than
 $s_{i,k,\ell}$. 
 \item[(5)] An agent $i \not \in T$ terminates the CC invocation for
  $g_{k, \ell}(h_{d,k})$, and $k$ and $\ell$ are such that
\commentout{
\begin{itemize}
\item [(a)] The mediator sent at least $\ell$ messages
 the $k$th time it was scheduled in $\vec{\pi}$. 
\item [(b)] The recipient $j_{k,\ell}$ of the mediator's $\ell$th
    message in $\vec{\pi}$ satisfies $j_{k,\ell}
  \in T$. 
\end{itemize}
}
the mediator sent at least $\ell$ messages the $k$th time it was
scheduled in $\vec{\pi}$, and the recipient $j_{k, \ell}$ of the
$\ell$th message sent by the mediator the $k$th time it is scheduled in
$\vec{\pi}$ is in $T$. 
If $\beta_{i,k,\ell} \not = \bot$, then the adversary continues 
its simulation
by assuming that 
$i$'s output of the CC invocation is $\beta_{i,k,\ell}$ rather than the actual output.
Otherwise, let $mes_{k,\ell}$ be the
content of the 
$\ell$th message that the mediator sends the $k$th time it is
scheduled (note that this value is known by 
$j_{k,\ell}$,
 and hence by the
adversary). 
 Since $n > 4t$,
  by the properties of CC, 
there is a set 
$I$ 
with at least $2t+1$ honest agents that 
have computed
\commentout{
their shares  of $g_{k, \ell}(h_{d,k})$ in the simulation, 
and the shares of the honest agents in $S$ uniquely determine the output
$s'_{j,k,\ell}$ that each 
agent $j$ 
that terminates the CC procedure gets.
The adversary sets $\beta_{i,k,\ell}$ to
$s'_{i,k,\ell}$ for each $i \in T$, and chooses values for
$\{\beta_{i,k, \ell}\}_{i \not \in T}$ uniformly at random 
so that the output of  
$\longvec{\VSS}^{rec}(\beta_{1,k,\ell}, \ldots, \beta_{n, k, \ell})$
is
$mes_{k,\ell}$.
}
their shares of $g_{k, \ell}(h_{d,k})$ in the simulation. Let
$\{s''_{i,k,\ell}\}_{i \in [n]}$ be the unique full extension of the
shares of agents in $I$. The adversary 
samples
uniformly at random a
full $mes_{k,\ell}$-extension of $\{s''_{i,k,\ell}\}_{i \in T}$ and
sets $\beta_{i,k,\ell}$ to $s''_{i,k,\ell}$ for each $i \in
[n]$. Then the adversary continues 
its simulation
by assuming that 
\commentout{
$i$ sends
$\beta_{i, k, \ell}$ to $j_{k, \ell}$ rather than sending
$s'_{i,k,\ell}$ to 
possibly some other agent.
}
$i$'s output of the CC invocation is $\beta_{i, k, \ell}$ rather than
$s''_{i,k,\ell}$.
\item[(6)] An agent $i \not \in T$ reconstructs
$g_{k,\ell}(h_{d,k})$ using VSS for some $k$ and $\ell$. In
        this case, the scheduler $\sigma_e$ delivers the
$\ell$th message sent by the mediator to $i$ the $k$th time the mediator was
    scheduled.
\end{itemize}


Note this construction for the adversary is well defined. The first
and second clause guarantee that if an agent terminated  $(VSS, i,
\ell)$ in the adversary's simulation, then agent $i$ 
sent a message tagged with label $\ell$ in the corresponding history
of $\vec{\pi}$. Also, whenever 
an honest agent finishes the computation of $f_{k,\ell}(h_{d,k})$ or
$g_{k,\ell}(h_{d,k})$, it must have terminated the computation of
$\theta_k$, and 
(3)
 guarantees that the mediator has been scheduled
at least $k$ times (as needed for 
 (4) and (5)). 

\commentout{
  proof of Lemma 4.31 (pp. 91) in \cite{C95}.

The proof that this construction of $\vec{\tau}'T$ and $\sigma_e$ satisfies clause (b) of the definition of $t$-bisimulation is divided into two lemmas.

For the first lemma we need some extra notation. 
}

\commentout{
Given a history $h$ of $\vec{\pi}'$, we say that $i$'s trimmed history
$h_i^{tr}$  is the sequence of sub-protocols invoked by $i$ in $h_i$
among those described in Section~\ref{sec:tools}, in addition to
elements of the form $\bot$, denoting that $i$ has been scheduled
after receiving a $\proceed_i$ message. For example, a sequence of the
form ($\bot$, VSS-share, CC) means that $i$ was scheduled after
receiving its first $\proceed_i$ message, it initiated a VSS-share
instance and then it initiated a CC instance. Our construction
satisfies the following lemma: 
 
\begin{lemma}\label{lemma:same-structure}
The trimmed history of honest agents in an actual run of
$(\vec{\pi}'_{-T}, \vec{\tau}'_T)$ with scheduler $\sigma_e'$ and
input $\vec{x}$ is identically distributed to the trimmed history of
honest agents in the adversary's simulation when agents play $(\vec{\pi}_{-T}, \vec{\tau}_T)$ with scheduler $\sigma_e$ and input $\vec{x}$.
\end{lemma}
}

The first step in proving that this construction of $\vec{\tau}_T$ and
$\sigma_e$ satisfies clause (b) of the definition of $t$-bisimulation
is to show that the adversary can simulate how many
times each honest agent invokes VSS in
$\vec{\pi}'$. Recall that honest agents may initiate a new invocation
of only after receiving a $\proceed$ message. 
Given a history $h$ in $\vec{\pi}'$, let $a_{i,k}(h)$ denote the
number of times that agent $i$ invokes VSS after receiving 
its $k$th $\proceed_{i}$ message but before receiving the $(k+1)$st
$\proceed_i$ message. 
Let $\left((a_{i,k}(h))_{i \not \in T}\right)_{k \in \mathbb{N}}$ be
the sequence of all such values,
arranged lexicographically first by their $k$ index and then by their
$i$ index.  
 The following lemma, which follows immediately
from the construction of $\vec{\tau}_T$ and $\sigma_e$, shows that
the distribution of these random variables is the same in $\vec{\pi}'$ and in
the adversary's simulation.
\begin{lemma}\label{lemma:same-VSS}
Fix $T \subseteq [n]$ with $|T| < n/4$ and an input profile
$\vec{x}$. Let $H$ be the distribution over histories when agents use
$(\vec{\pi}'_{-T}, \vec{\tau}'_T)$ with scheduler $\sigma_e'$ and
input $\vec{x}$, and let $H'$ be the distribution over histories in
the adversary's simulation when agents use $(\vec{\pi}_{-T},
\vec{\tau}_T)$ with scheduler $\sigma_e$ and input $\vec{x}$. Then
$S(H)_T$ and $S(H')_T$ are identically distributed. 
\end{lemma}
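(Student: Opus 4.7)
Proof proposal:

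The plan is to prove the lemma by a coupling argument between $H$ and $H'$, exploiting the very different operational definitions of $a_{i,k}(h)$ on the two sides. In $H$, by construction of $\pi'_i$, the value $a_{i,k}(h)$ is the number of messages that $\pi_i$ prescribes on the simulated local history $h_{i,k}$ that $i$ holds after receiving its $k$th $\proceed_i$ message. In $H'$, on the other hand, by step (1) of the construction of $\sigma_e$, the value $a_{i,k}(h)$ is literally set equal to the number of messages that $i$ sends when scheduled for the $k$th time in the underlying companion $(\vec{\pi}_{-T}, \vec{\tau}_T, \sigma_e, \vec{x})$-game that the adversary is interleaving with its simulation. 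Thus it suffices to produce a coupling under which $h_{i,k}$ in $H$ and the turn-$k$ local history of $i$ in the companion $\vec{\pi}$-game inside $H'$ agree for every honest $i$ and every $k$.

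I would proceed by induction on the global step number, building the coupling event by event using the correspondences tabulated immediately before the adversary's construction. At each step I would maintain two invariants: (i) the scheduler's local history in $\sigma_e'$ (for $H$) and its simulated scheduler-history inside $H'$ agree, and (ii) for each honest $i$ and each $k$ already reached, the simulated $h_{i,k}$ in $H$ matches the turn-$k$ local history of $i$ in the companion $\vec{\pi}$-game inside $H'$ up to the content of messages. Both invariants are preserved by matching events across the two worlds: $\proceed_i$ deliveries in $\vec{\pi}'$ correspond to scheduling events in $\vec{\pi}$; successful VSS-sharing terminations in $\vec{\pi}'$ correspond to mediator-receive events in $\vec{\pi}$ (handled by step (2)); termination of phase $k2$ corresponds to mediator-schedulings (step (3)); and reconstructions of $g_{k,\ell}(h_{d,k})$ correspond to mediator-sends (steps (4)--(6)). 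Since $a_{i,k}$ depends only on the structural occurrence of scheduling events and not on the content of the messages received, invariant (ii) implies equality of the counts.

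The crucial tool is Lemma~\ref{lemma:secret}, which certifies that the joint view of the malicious agents and the scheduler in any invocation of VSS or CC is independent of the honest agents' inputs. This is what allows the adversary to drive its simulation with $\vec{0}_{-T}$ in place of $\vec{x}_{-T}$ and still induce the same distribution over scheduler-histories, without which invariant (i) could not be maintained. In particular, the values $\alpha_{i,k,\ell}$ and $\beta_{i,k,\ell}$ sampled in steps (4) and (5) are drawn uniformly from the appropriate fibers of $\vec{\pi}'$-realizable share-vectors, so the marginals of the malicious agents' share-views in the simulation match those in the true $\vec{\pi}'$ run.

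The main obstacle lies in verifying invariant (ii) across the multi-phase mediator-turn simulation (phases $k1$--$k4$), where many VSS and CC invocations are interleaved and all honest agents must agree on the ordering in which shared messages are appended to $h_{d,k}$. One must show that the adversary's scheduling in the companion $\vec{\pi}$-game---delivering the messages tagged $\ell_1, \ldots, \ell_{m_k}$ to the mediator in the order dictated by the permutation $\theta_k$ computed in phase $k2$ of the simulation---produces a mediator local history whose distribution is identical to that of the simulated $h_{d,k}$. Once this structural correspondence is pinned down, the $\proceed_i$ messages are generated in the matching order in both worlds, the subsequent turn of each honest $i$ sees precisely the messages the simulation predicts, and the counts $a_{i,k}$ then agree automatically along the coupled paths, which gives the desired equality of distributions.
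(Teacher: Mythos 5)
The paper itself offers no proof of this lemma beyond the remark that it ``follows immediately from the construction of $\vec{\tau}_T$ and $\sigma_e$,'' so your attempt to supply an explicit inductive coupling is in the right spirit, and the overall skeleton (induction on global events, the event correspondences, and Lemma~\ref{lemma:secret} to decouple the adversary's view from $\vec{x}_{-T}$) is the argument the paper is implicitly relying on. However, there is a genuine gap in the middle step. You assert that $a_{i,k}$ ``depends only on the structural occurrence of scheduling events and not on the content of the messages received,'' and you use this to pass from your invariant (ii), which matches histories only \emph{up to the content of messages}, to equality of the counts. That assertion is false: $a_{i,k}(h)$ is the number of VSS invocations $i$ performs after its $k$th $\proceed_i$ message, which by construction of $\pi_i'$ equals the number of messages that $\pi_i$ prescribes on the simulated local history $h_{i,k}$ --- and that number in general depends on $i$'s input and on the \emph{contents} of the mediator messages $i$ has reconstructed. (Consider a $\pi_i$ under which the mediator's reply determines whether $i$ sends one or two further messages.) So invariant (ii) as stated is too weak to close the induction.

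The fix is to keep track of which comparison is being made. The ``up to content'' weakening is appropriate only for the honest agents' histories \emph{inside the adversary's internal simulation of $\vec{\pi}'$}, which is run with inputs $\vec{0}_{-T}$; there Lemma~\ref{lemma:secret} guarantees that the discrepancy in content is invisible to the adversary's local history, which is all that is needed to keep the two schedulers in lockstep (your invariant (i)). But the quantity $a_{i,k}$ in $H'$ is read off from the \emph{companion} $\vec{\pi}$-game, which is played by the genuine honest agents with their genuine inputs, and the quantity $a_{i,k}$ in $H$ is computed by $\pi_i$ on the genuine $h_{i,k}$ in the real $\vec{\pi}'$ run. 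For these you must establish agreement \emph{including} contents: both histories are generated by $\pi_i$ from the same input $x_i$ and, inductively, from the same mediator messages (which are in turn computed by the same $\pi_d$ from the same agent messages, delivered in the order dictated by the permutations $\theta_k$). Strengthening invariant (ii) to full content equality for this pair of histories --- while retaining the weaker, content-oblivious invariant for the adversary's internal $\vec{0}_{-T}$-simulation --- makes the induction go through; as currently written, the two comparisons are conflated and the conclusion does not follow. (A minor further point: mediator-receive events are orchestrated by clause (3) of the adversary's construction via $\theta_k$, not by clause (2), which only covers VSS invocations by agents in $T$.)
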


As we pointed out before, 
one of 
the only decisions made by honest agents that the
adversary cannot simulate without additional information is the
number of invocations of VSS that they perform when they are
scheduled. Thus, this lemma shows that honest agents behave exactly
the same in $\vec{\pi}'$ and in the adversary's simulation except for
the values that they share using VSS
and the values of $f_{k,\ell}(h_{d,k})$ and $g_{k, \ell}(h_{d,k})$
sent to other players. However, by 
Lemma~\ref{lemma:secret}, 
exactly which values are shared using VSS
does not affect the
adversary's local history in its simulation. Therefore, the only
events that might differ between the adversary's local history in
$\vec{\pi}'$ and its local history in its simulation in $\vec{\pi}$
are those in which agents in $T$ receive shares of
$f_{k,\ell}(h_{d,k})$ and $g_{k, \ell}(h_{d,k})$ from honest
agents. In $\vec{\pi}'$, $f_{k,\ell}(h_{d,k})$ and $g_{k,
  \ell}(h_{d,k})$ are the recipient and the content of the mediator's
$\ell$th message during its $k$th turn, which are computed using CC
with the simulated mediator's local history $h_{d,k}$ as input. Since
the adversary assumes in its simulation that the values that honest agents
share using VSS are all 0, 
\commentout{
the local history $h_{d,k}$ that
honest agents compute in the simulation are not identically
distributed to the local history they would compute if they actually
use $\vec{\pi}'$.
}
the local histories $h_{d,k}$ that honest agents compute in the adversary's simulation would not follow the same distribution as their local histories if they used $\vec{\pi}'$ with their actual input. 
 Thus, the shares of
$f_{k,\ell}(h_{d,k})$ and $g_{k, \ell}(h_{d,k})$ would also 
have
 a
different distribution.

However, much as when dealing with VSS
invocations, the adversary can use the mediator's actions in $\vec{\pi}$
as feedback for its simulation, and simulates that the shares of
$f_{k,\ell}(h_{d,k})$ and $g_{k, \ell}(h_{d,k})$ that honest agents
send 
to players in $T$ 
define secrets $j_{k, \ell}$ and $\mu_{k, \ell}$ respectively,
regardless of their local history, where $j_{k,\ell}$ and
$\mu_{k,\ell}$ are the recipient and the content of the mediator's
$\ell$th message during its $k$th turn in $\vec{\pi}$ (note that the
adversary  does this for  
 $g_{k, \ell}(h_{d,k})$ whenever
$j_{k,\ell} \in T$, since otherwise it does not know the content of
this message).
\commentout{
By construction, these shares are still
$j_{k,\ell}$-realizable and $\mu_{k,\ell}$-realizable even if they are
put together with the output of the CC procedure for
$f_{k,\ell}(h_{d,k})$ and $g_{k, \ell}(h_{d,k})$ of agents in $T$
(note that since these agents are malicious, they might not actually
compute such values, however, they can be deduced from
their local histories); otherwise, the local histories of agents in
$T$ would be inconsistent. The following lemma shows the correctness
of this construction.
}
By construction, this simulation proceeds in such a way that the shares
that honest players send are ``consistent'' with the shares of
$f_{k,\ell}(h_{d,k})$ and $g_{k, \ell}(h_{d,k})$ that players in $T$
could compute from their local history, in the sense that the shares
of $f_{k,\ell}(h_{d,k})$ and $g_{k, \ell}(h_{d,k})$ sent by honest
players together with the shares of those functions that players in
$T$ could compute are $j_{k,\ell}$-realizable and
$\mu_{k,\ell}$-realizable respectively. 
Note that if this weren't the case, the simulated local history of
players in $T$ could not occur if honest players played $\vec{\pi}'$. 
 The following lemma shows the correctness
 of this construction.

Given a history $h$ of $\vec{\pi}'$, let $R_{k, \ell}^T(h)$ denote the
subsequence of $\Rec$ events in $h_T$ of messages from agents not in
$T$ involving the computation of $f_{k,\ell}(h_{d,k})$ and $g_{k,
  \ell}(h_{d,k})$, using CC, the broadcast procedures for the shares of
  $f_{k,\ell}(h_{d,k})$, and the messages in which they send their
shares of $g_{k, \ell}(h_{d,k})$.  
\commentout{
Then, the following lemma holds, which together with Lemma~\ref{lemma:secret} and  Lemma~\ref{lemma:same-VSS} implies the desired result.
}
\begin{lemma}\label{lemma:same-CC}
    Fix $T \subseteq [n]$ with $n > 4|T|$ and an input profile
    \commentout{
      $\vec{x}$. Let $H$ be the distribution over histories when agents use
    $(\vec{\pi}'_{-T}, \vec{\tau}'_T)$ with scheduler $\sigma_e'$ and
input $\vec{x}$, and let $H'$ be the distribution over histories in
the adversary's simulation when agents use $(\vec{\pi}_{-T},
\vec{\tau}_T)$ with scheduler $\sigma_e$ and input $\vec{x}$. Then
$R_{k, \ell}^T(H)$ and $R_{k, \ell}^T(H')$ are identically distributed
for all $k, \ell$,
where $R_{k, \ell}^T(H)$ is the distribution on sequences of
$\Rec$ events defined naturally by composing $H$ and $R_{k,
  \ell}^T$). 
  }
  Let $\vec{\pi}_h(\vec{x}, A)$ be the distribution over histories
  when agents use $\vec{\pi}$ with adversary $A$ and input $\vec{x}$,
  and let $A := (T, \vec{\tau}_T, \sigma_e))$ and $A' := (T,
  \vec{\tau}'_T, \sigma'_e)$. Then $R_{k,
    \ell}^T(\vec{\pi}'_h(\vec{x}, A')$ and $R_{k,
    \ell}^T(\vec{\pi}_h(\vec{x}, A))$ are identically distributed 
for all $k, \ell$,
where $R_{k, \ell}^T(\vec{\pi}_h(\vec{x}, A))$ is the distribution on
sequences of 
$\Rec$ events defined naturally by composing $\vec{\pi}_h(\vec{x}, A)$
and $R_{k, 
  \ell}^T$. 
\end{lemma}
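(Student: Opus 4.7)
The plan is to prove the lemma by induction on the lexicographic order of pairs $(k,\ell)$, establishing at each stage that the joint distribution of $R^T_{k',\ell'}(\cdot)$ for $(k',\ell') \le (k,\ell)$ is the same under $\vec{\pi}'_h(\vec{x},A')$ as under the adversary's simulation inside $\vec{\pi}_h(\vec{x},A)$. The base case is vacuous, and the passage from $(k',\ell') < (k,\ell)$ to $(k,\ell)$ is the substantive content.

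For the inductive step, I would first apply Lemma~\ref{lemma:same-VSS} and Lemma~\ref{lemma:secret} to reduce the problem to comparing the shares that honest agents transmit to $T$ in the reconstruction/forwarding stages associated with $f_{k,\ell}(h_{d,k})$ and $g_{k,\ell}(h_{d,k})$. Lemma~\ref{lemma:same-VSS} ensures that the skeleton of which VSS instances are invoked, in what order, and by whom is identically distributed in the two experiments, so the only events in $R^T_{k,\ell}$ whose distribution is not already fixed by the inductive hypothesis are the contents of those shares. Lemma~\ref{lemma:secret}, combined with the conditional $t$-uniformity of the CC scheme developed in Section~\ref{sec:uniform-CC}, then reduces the problem further to showing that, conditional on the shares of the malicious agents in the two CC instances and on all previously revealed information, the collection of shares sent by the honest agents to $T$ is identically distributed in the two experiments.

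The key observation is that $t$-determinacy forces the honest shares, conditional on the malicious shares and the underlying secret, to be uniformly distributed over the set of full extensions of the malicious shares compatible with that secret. In the real run of $\vec{\pi}'$, the secrets in question are precisely the recipient $j_{k,\ell}$ and content $\mu_{k,\ell}$ of the mediator's $\ell$th message during its $k$th turn of the simulated run, and a straightforward lift of Lemma~\ref{lemma:easy-implication} to the setting with malicious agents (which follows from the same correspondence argument once Lemma~\ref{lemma:same-VSS} and the inductive hypothesis are in hand) guarantees that these coincide with the corresponding observables in the actual mediator protocol $\vec{\pi}$ that the adversary is playing. In the adversary's simulation, steps (4) and (5) of the construction explicitly sample uniformly from the full $j_{k,\ell}$- and $\mu_{k,\ell}$-extensions of the malicious shares, using exactly the observables drawn from the $\vec{\pi}$ side; the two experiments therefore coincide in distribution.

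The main obstacle I anticipate is the bookkeeping of conditioning: because multiple CC instances share inputs through the dependence of $h_{d,k}$ on earlier VSS outputs, one needs to ensure that the distribution of honest shares in the current stage really is uniform over the relevant full extensions even after conditioning on all previously revealed data. This is precisely the reason for insisting on conditional $t$-uniformity in Section~\ref{sec:uniform-CC}, and I would invoke that property directly to avoid chaining independence arguments by hand. Once conditional uniformity is cleanly in place and the observable secrets $j_{k,\ell}$ and $\mu_{k,\ell}$ are matched across the two experiments (the latter is available to the adversary because its recipient lies in $T$), the inductive step closes.
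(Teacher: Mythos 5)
Your argument is correct and is essentially the argument the paper has in mind: the paper's own proof of Lemma~\ref{lemma:same-CC} is a one-line deferral to Canetti's Lemma 4.31, and what you have written out---reducing via Lemmas~\ref{lemma:secret} and~\ref{lemma:same-VSS} to the shares honest agents send to $T$, then using $t$-determinacy to see that those shares are uniform over the full extensions compatible with the malicious agents' shares and the secrets $j_{k,\ell}$ and $\mu_{k,\ell}$, which is exactly the distribution that steps (4) and (5) of the adversary's construction sample---is a self-contained rendering of that same simulation argument. The only organizational difference is that your induction on $(k,\ell)$ folds in the conditional $t$-uniformity needed to control the joint distribution over all instances, which the paper defers to the remark immediately following the lemma rather than placing inside the proof.
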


\begin{proof}
The proof of this lemma is analogous to that given by Canetti in his 
  proof of Lemma 4.31 \cite[p. 91]{canetti96studies}.
\end{proof}

\commentout{
Note that Lemma~\ref{lemma:same-CC} applies only when taking a single
instance of CC in isolation (the instance used to compute $f_{k,\ell}$
and $g_{k,\ell}$). However, 
using the notation of the lemma, it may happen that even though
$R_{k,\ell}(H)$ and $R_{k,\ell}(H')$ are identically distributed, and
$R_{k', \ell'}(H)$ and $R_{k', \ell'}(H')$ are identically distributed
as well, $(R_{k,\ell}(H), R_{k', \ell'}(H))$ and $(R_{k,\ell}(H'),
R_{k', \ell'}(H'))$ are not, as pointed out in
Section~\ref{sec:security-CC}. 
\commentout{
there are subtleties when considering all
CC instances in $\vec{\pi}'$ at the same time. 
Nevertheless, this
lemma can be generalized to consider all such instances together,
following the discussion in Section~\ref{sec:security-CC}.  
}
Nevertheless, following the discussion of
Section~\ref{sec:security-CC}, this lemma can be easily generalized to
show that $(R_{k,\ell}(H))_{k,\ell \in \mathbb{N}}$ and
$(R_{k,\ell}(H'))_{k,\ell \in \mathbb{N}}$ are identically
distributed, as desired. 
}

As we noted in Section~\ref{sec:uniform-CC}, we implement CC in such
a way that
there is no correlation between the shares of different circuit
computations. Thus,
 Lemma~\ref{lemma:same-CC}
can be easily generalized to 
show that $(R_{k,\ell}(H))_{k,\ell \in \mathbb{N}}$ and
$(R_{k,\ell}(H'))_{k,\ell \in \mathbb{N}}$ are identically
distributed. This, together with Lemma~\ref{lemma:secret} and
Lemma~\ref{lemma:same-VSS} implies Theorem~\ref{thm:main}(a) in the
  case that $t=t'$.

\commentout{
An argument analogous to that given by Canetti in his 
proof of Lemma 4.31 \cite[p. 91]{C95} shows that messages sent by honest players involving the computation of $f_{k,\ell}(h_{d,k})$ and $g_{k, \ell}(h_{d,k})$ using CC, along with the shares of $f_{k,\ell}(h_{d,k})$ and $g_{k, \ell}(h_{d,k})$ that they send in $\vec{\pi}'$ are identically distributed to the messages they send  in the adversary's simulation in $\vec{\pi}$. Altogether, this shows that $\vec{\pi}'$ $t$-bisimulates $\vec{\pi}$ if $t < n/4$. 
}

\commentout{

The first step in proving that this construction of $\vec{\tau}'_T$ and
$\sigma_e$ satisfies clause (b) of the definition of $t$-bisimulation
is to show that honest agents invoke the same procedures in the
same order in an actual run of $\vec{\pi}'$ and in the adversary's
simulation in $\vec{\pi}$. For this purpose, given a history $h$ of
\commentout{
$\vec{\pi}'$, define the \emph{invocation history} of $i$ as the
sequence of subprotocols invoked by $i$ in $h_i$ 
among VSS, CC, Consensus, Broadcast and Secure Computation, in addition to
elements of the form $\bot$ and $\bot_d$, denoting that $i$ has been scheduled
after receiving a $\proceed_i$  and a $\proceed_{d,r}$ message respectively. For example, a sequence of the
form ($\bot$, VSS-share, CC) means that $i$ was scheduled after
receiving its first $\proceed_i$ message, it initiated a VSS-share
instance and then it initiated a CC instance.
}
$\vec{\pi}'$, define the \emph{invocation sequence} of $i$ in history
$h$ as the sequence of \emph{basic proocols}, that is, VSS, CC,
Consensus, Broadcast and Secure Computation, invoked by $i$ in $h_i$,
in the order in which they are invoked,
in addition to  
how many $\proceed_i$ and $\proceed_{d,r}$ messages has agents $i$
received before each of these invocations. We can view each element of
the sequence as an object with the name of the protocol, the number of
$\proceed_i$ messages received beforehand, the number of $\proceed_d$
messages received beforehand, and the protocol's description (e.g. the
circuit being computed in the case of CC) as fields. 

\begin{lemma}\label{lemma:same-structure}
\commentout{  
  The invocation history of honest agents in an actual run of
$(\vec{\pi}'_{-T}, \vec{\tau}'_T)$ with scheduler $\sigma_e'$ and
input $\vec{x}$ is identically distributed to the invocation history of
honest agents in the adversary's simulation when agents play
$(\vec{\pi}_{-T}, \vec{\tau}_T)$ with scheduler $\sigma_e$ and input
$\vec{x}$. 
}
The distribution of invocation sequences of honest agents in in the
histories of $(\vec{\pi}'_{-T}, \vec{\tau}'_T)$ with scheduler
$\sigma_e'$ and 
input $\vec{x}$ and the distribution  of invocation sequences of
honest agents in the histories in the adversary's simulation when
agents use $(\vec{\pi}_{-T}, \vec{\tau}_T)$ with scheduler $\sigma_e$
and input $\vec{x}$ are identical. 
\end{lemma}

This is a straightforward consequence of our construction, since the
only place in $\vec{\pi}'$ where the decision to invoke one of these
protocols depends on the input of honest agents is when
they have to decide how many times to invoke VSS.
 However, the
 adversary this, since it is exactly the nmber of messages they send
 to the mediator in $\vec{\pi}$. 

This lemma shows that honest agents behave roughly in the same way in
the adversary's simulation as in an actual history of $\vec{\pi}'$. 
\commentout{
In
addition to the fact that VSS-share and CC are $t$-secret, and that
the input of most of the sub-protocols does not depend in $\vec{x}$
(for example, the input of some consensus protocols is either 1 or 0
depending if the agent terminated some VSS instance or not, which does
not depend in $\vec{x}$),
}
In addition to Lemma~\ref{lemma:secret}, a careful examination of the
construction of $\vec{\pi}'$ and of the inputs used for each of its
sub-protocols shows that   
 that the only differences between the
history of the adversary in its simulation and the history of the
adversary in an actual history of $\vec{\pi}'$ 
with the same input profile 
might be the values that
malicious agents receive after honest agents compute their shares of
$f_{k,\ell}(h_{d,k})$ and $g_{k,\ell}(h_{d,k})$. Recall that the
the values shared by the honest agents in the adversary's simulation
are such that it would seem to the malicious agents that 
$f_{k,\ell}(h_{d,k})$ and $g_{k,\ell}(h_{d,k})$ are precisely the
recipient $j_{k,\ell}$ and the message $\mu_{k,\ell}$, respectively of
the $\ell$th message by the mediator during its $k$th turn.
\commentout{
Note that since
the secret scheme used is $t$-determinate, the part of the adversary's
local history in its simulation involving the computation of
$f_{k,\ell}(h_{d,k})$ and $g_{k,\ell}(h_{d,k})$ is not only consistent
but identically distributed to the one it would have if it had
simulated that honest agents used inputs such that the values
$j_{k,\ell}$ and $\mu_{k,\ell}$ came naturally. This shows that our
construction of $\vec{\tau}'_T$ and $\sigma_e$ satisfies clause (b) of
the definition of $t$-bisimulation. 
}
An argument analogous to that given by Canetti in his 
proof of Lemma 4.31 \cite[p. 91]){canetti96studies} shows that, for all input profiles
$\vec{x}$, even when
the adversary simulates that honest agents send and broadcast
shares  of $f_{k,\ell}(h_{d,k})$ and $g_{k,\ell}(h_{d,k})$ other 
than the ones they computed, the distribution of the adversary's local
histories obtained in its simulation when playing $(\vec{\pi}_{-T},
\vec{\tau}_T)$ with scheduler $\sigma_e$ and input $\vec{x}$, and the
distribution of the adversary's local history when running
$(\vec{\pi}'_{-T}, \vec{\tau}'_T)$ with scheduler $\sigma'_e$ and
input $\vec{x}$ are identical. Although Canetti's proof assumes that
agents are using BCG's implementation of VSS and CC, it can be easily
generalized for any $t$-determinate secret sharing scheme. 

\commentout{
A more detailed proof for this last part can be obtained by using an
analogous argument to the proof of Lemma 4.31 (pp. 91) in \cite{C95}. 
}


\commentout{
The remaining steps deal with the fact that the behavior of the agents and the mediator in $\vec{\pi}'$ is dependent on $x_{-T}$, and thus the adversary cannot simulate it. However, the adversary can use their behavior in $\vec{\pi}$ as feedback for the simulation in the following way. If an honest agent $i$ would be scheduled in $\vec{\pi}'$ after receiving a $\proceed_i$ message, the adversary schedules $i$ in $\vec{\pi}$ and checks how many messages $i$ sends to the mediator. If $i$ sends $\ell$ messages this way, the adversary simulates that $i$ initiates $\ell$ VSS-share messages with input 0, regardless of $i$'s local history in the adversary's simulation (note that the input is irrelevant for the adversary's local history because of $t$-uniformity).

A similar approach is used for the simulation of (((the part in which
agents compute $f_{k, \ell}$ and $g_{k,\ell})$))). When the adversary
simulates the computation of $f_{k,\ell}(h_{d,k})$ and honest agents
broadcast their shares, the value reconstructed from those shares
might be different from what it would have been if the adversary was
simulating $\vec{\pi}'$ with the honest agents actual inputs
$\vec{x}_{-T}$ instead of $\vec{0}_{-T}$; a similar issue arises
when the adversary simulates the agents sending their shares of
$g_{k,\ell}(h_{d,k})$.
Again, in this case the adversary uses the
behavior of the mediator in $\vec{\pi}$ as feedback for its simulation
of $\vec{\pi}$'. Whenever the adversary simulates that an honest
agent terminated the circuit computation of $f_{k,
  \ell}(h_{k,\ell})$, 
it checks what the mediator did in its $k$th turn.
If the mediator
sent the $\ell$th message to agent $j_\ell$, the adversary simulates
that honest agents broadcast shares that together reconstruct the
value $j_\ell$, regardless of their output of the CC procedure for
$f_{k, \ell}$. Else, if the mediator didn't send an $\ell$th message,
the adversary simulates that honest agents broadcast shares that
together reconstruct 0. Analogously, if $j_\ell \in T$, the adversary
simulates that the shares of $g_{k, \ell}$ sent by honest agents to
$j_\ell$ reconstruct the content of the message $j_\ell$ received in
$\vec{\pi}$ (note that the adversary is aware of this content since
$j_\ell \in T$. 

Intuitively, because of the $t$-uniformity of VSS and CC it is not
important that the local history of honest agents in $\vec{\pi}'$ is
simulated accurately as long as the adversary can use $\vec{\pi}$ as
oracle. However, in order for the adversary to use the its history in
$\vec{\pi}$ as oracle, it is critical that its behavior in $\vec{\pi}$
corresponds in some way to its behavior in its simulation, in the
sense that malicious agents should send messages in $\vec{\pi}$
whenever they would start a VSS-share invocation in their simulation,
and the scheduler must schedule agents in $\vec{\pi}$ in the same way
it would schedule agents in its simulation. 

Altogether, the adversary does the following:

}

}
We now prove that $\vec{\pi}'$ $(t,t')$-bisimulates $\vec{\pi}$ in the
general setting, where $3t+t' < n$ and $t \ge t'$. 
Let $t'' = t - t'$.
Given a protocol $\vec{\pi}$ for $n$ agents,
consider the protocol $(\vec{\pi},\eta)$ for  
$n+t''$ 
agents, where the
first $n$ agents use $\vec{\pi}$, while the last 
$t''$
agents use the null protocol, that is, they never send any
messages. Given an adversary $A'=  
(T,\vec{\tau}'_T, \sigma_e')$ in the setting with $n$ agents (and no
mediator), consider  
an adversary $A'' = (T, \vec{\tau}'_T, \sigma''_e)$ in the setting with 
$n+t''$
agents in which
$\sigma''_e$ is a relaxed scheduler that acts just like $\sigma'_e$,
except that it might schedule agents in 
$\{n+1, \ldots, n+t''\}$,
although it never delivers their messages 
(note that since $\sigma''_e$ does not deliver the messages sent by
the last 
$t''$
 agents, $\tau'$ is well defined even in the setting with
$n+t''$
 agents). 
Since $4t < n + t''$, following 
  the same
  construction as in the case that $t=t'$,
  there exists an adversary 
$A = (T, \vec{\tau}_T, \sigma_e)$
   such that  
$$(\vec{\pi} + \eta)(\vec{x}, A'') = (\vec{\pi} + \eta)(\vec{x}, A)$$
for all input profiles $\vec{x}$. 
\commentout{
However, $\sigma_e$ is a relaxed
scheduler, since some of the VSS-share and CC procedures in
$\vec{\pi}'_{VSS}$ are not guaranteed to terminate if $\sigma''_e$ is
relaxed.}
Note that the scheduler $\sigma_e$ resulting from this construction is
in fact a relaxed scheduler: since some of the VSS-share and CC
instances in $(\vec{\pi} + \eta)'$ are not guaranteed to terminate if
$\sigma''_e$ is relaxed, some messages might not be delivered by
$\sigma_e$. 

Consider a scheduler $\sigma_e'''$ in
$\vec{\pi}$
 that acts like
$\sigma_e$ except that it does not schedule agents $n+1, \ldots, n+t$.
By construction, 
\commentout{
$$O(\vec{\pi}_{-T}, \vec{\tau}_T,
\sigma_e''', \vec{x}) = O(\vec{\pi}_{-T}, \eta, \vec{\tau}_T, \sigma_e,
\vec{x})$$ 
and $$O(\vec{\pi}'_{-T}, \vec{\tau}'_T, \sigma'_e, \vec{x}) = O((\vec{\pi} + \eta)'_{-T}, \vec{\tau}''_T, \sigma''_e, \vec{x}),$$
so $$O(\vec{\pi}_{-T}, \vec{\tau}_T, \sigma_e''',
\vec{x}) = O(\vec{\pi}'_{-T}, \vec{\tau}'_T, \sigma'_e, \vec{x}),$$
}
$$\vec{\pi}(\vec{x}, (T, \vec{\tau}_T, \sigma_e''')) = (\vec{\pi} + \eta)(\vec{x}, A)$$
and
$$\vec{\pi}'(\vec{x}, A') = (\vec{\pi} + \eta)(\vec{x}, A'')$$
so
$$\vec{\pi}(\vec{x}, (T, \vec{\tau}_T, \sigma_e''')) = \vec{\pi}'(\vec{x}, A')$$
 as
desired. 

Finally, it remains to show that if $|T| \le t'$, $\sigma'''_e$ is not relaxed. 
Given the adversaries $A'$ and $A''$ defined above, consider an
adversary $A^* := (\vec{\tau}^*_{T'}, \sigma^*_e)$ for $(\vec{\pi} +
\eta)'$, such that $T' = T \cup \{n+1, \ldots, n+t''\}$, agents $i \in
T$ use $\vec{\tau}'_i$, agents in $\{n+1, \ldots, n+t''\}$ send no
messages, and $\sigma_e^*$ acts just like $\sigma''_e$. By construction, 
$$(\vec{\pi} + \eta)'(\vec{x}, A^*) = (\vec{\pi} + \eta)(\vec{x}, A'')$$

In this case, $\sigma^*_e$ is not a relaxed scheduler, since agents in
$\{n+1, \ldots, n+t''\}$ never send any messages. Moreover, since
$|T'| = |T| + t'' < t$, it follows that $4|T'| < n$, and
reasoning analogous to the previous case shows that there exists an
adversary $A
= (T,\vec{\tau}, \sigma'''_e)$ such that 
$$\vec{\pi}(\vec{x}, A) = \vec{\pi}'(\vec{x}, A').$$ 
%
However, in this case, $\sigma'''_e$ is not relaxed, since $\sigma''_e$ was not.

\subsection{Bounding the number of messages}\label{sec:message-bound}

As mentioned in Section~\ref{sec:construction}, our construction of
$\vec{\pi}'$ does not bound the number of messages sent. To see
this, note that players compute $h_{d,k}$ each time that the mediator 
is scheduled in the simulation. Since the number of times that the mediator
can be scheduled is unbounded, the number of messages sent in
$\vec{\pi}'$ can be unbounded as well.

If the mediator $\pi_d$ is responsive, we show how we can modify the construction of
Section~\ref{sec:construction} so as to bound the number of
messages. The idea is that, since $\pi_d$ is responsive, agents
don't need to simulate all the mediator's histories; it suffices to
simulate only the histories in which that the mediator receives at
least one message at every turn
except possibly the first one.
Note that this bounds the number of mediator turns
that the players simulate by $N$, and thus guarantees that the
expected number of messages in $\vec{\pi}'$ is polynomial in $n$ and
$N$ since all primitives satisfy this property. To do this, agents run
Section~\ref{sec:proof} with a simple modification in the
computation of $h_{d,k}$
for $k > 1$. 
%
Instead of running a consensus protocol $p_{j, \ell, k}$ for each VSS
invocation (VSS, $j$, $\ell$), players use an ACS computation
$C_k$ with parameter $m = 1$, in which the accumulative set $U_i$ of
player $i$ consists of the pairs $(j, \ell)$ such that $i$ has
terminated (VSS, $j$, $\ell$) but $(j, \ell)$ was not
in any core set $C_{k'}$ with $k' < k$. Players then continue
Phases $k2$ to $k4$ as usual, but take $p_{j, \ell, k} =
1$ iff $(j, \ell) \in C_k$.  Since the players take the parameter $m$
to be 1, $|C_k| \ge 1$; thus, it is guaranteed that in the simulation,
the mediator has received at least one message
at its $k$th turn.

The proof of correctness of this modified construction is identical to
that given in Section~\ref{sec:proof} for the original construction.

\subsection{The proof of Theorem~\ref{thm:main}(b)}\label{sec:cotermination-proof}

\commentout{
Sketch of the proof: We prove that each of the sub-protocols used satisfies this property.
}
If $\vec{\pi} + \pi_d$ is in canonical form, the construction of
$\vec{\pi}'$ is as in Section~\ref{sec:construction}, except that if
an honest player reconstructs a message containing ``STOP'', it
terminates. 

Suppose that a set $I$ of at least $2t+1$ honest agents
terminate. This means that all agents in $I$ have computed their
share of each of the mediator's messages.
Thus, for each 
message $\mu$ sent by the mediator, each honest agent $i$ will eventually
receive a subset 
$S_{\mu}^I$ of shares such that $(I, S_{\mu}^I)$ is
$\mu$-realizable. Recall that we assumed (at the end of
Section~\ref{sec:tools}) that the secret-sharing scheme used in $\pi'$
is $t$-determinate.  Thus,
this subset of shares suffices for each honest agent to
uniquely reconstruct $\mu$, even with an adversary of size $t$: if a
pair $(I,S)$ with $|I| \le 2t+1$ is $\mu$-realizable, at least $t+1$
agents from $I$ are honest, and their shares uniquely define $\mu$
(and each of the other agents' shares). Thus, receiving a realizable
set of at least $2t+1$ shares uniquely determines the secret being
shared. 
\commentout{
\begin{proposition}

\end{proposition}
}

\commentout{
\subsection{Bounding the number of messages}\label{sec:message-bound}

As mentioned in Section~\ref{sec:construction}, our construction of
$\vec{\pi}'$ does not bound the number of messages sent. To see
this, note that players compute $h_{d,k}$ each time that the mediator
is scheduled in the simulation. Since the number of times that the mediator
can be scheduled is unbounded, the number of messages sent in
$\vec{\pi}'$ can be unbounded as well.

We now show how we can modify the construction of
Section~\ref{sec:construction} so as to bound the number of
messages.  The only change is in how we compute the mediator's history.
The idea is to simulate only the
relevant turns: those in which the mediator receives a message or
those in which the mediator sends a message (note that there are 
$O(N)$ such turns). We split the simulation of the
%

This computation is identical to the one performed during Phases $k1$
to $k4$ in Section~\ref{sec:construction}. However, instead of
performing a consensus protocol $p_{j, \ell, k}$ for each VSS
invocation (VSS, $j$, $\ell$), players perform an ACS computation
$C_k$ with parameter $m = 1$, in which the accumulative set $U_i$ of
player $i$ consists of the pairs $(j, \ell)$ such that $i$ has
terminated (VSS, $j$, $\ell$) and such that $(j, \ell)$ was not
included in any core-set $C_{k'}$ with $k' < k$. Players then continue
Phases $k2$ to $k4$ as usual, but considering that $p_{j, \ell, k} =
1$ iff $(j, \ell) \in C_k$. Note that since players use parameter $m =
1$, then $|C_k| \ge 1$ and thus it is guaranteed that they simulate
that the mediator received at least one message. 

\textbf{Computing consecutive turns without receiving a message} 

It is critical that the turns in which the mediator does not send or receive messages are computed simultaneously, in such a way that the number of messages does not depend in the amount of such turns. Let $u_k(h)$ be the function that given the mediator's history $h$, computes which history would the mediator have right after the turn in which it sends its first message, or after $k$ turns if no message was sent.

The intuition is that players can compute $u_k(h)$ using secure computation, and use the same techniques as in Section~\ref{sec:construction} (more precisely, functions $f_{k, \ell}$ and $g_{k, \ell}$) to deduce, given $u_k(h)$, how many times the mediator was scheduled without receiving or sending any message, which messages did it send and to who (if any). To agree on which $k$ to use, each player $i$ sends an echo message to itself, and compute how many turns $k_i$ it takes to receive it back. Then, they pick $k$ as the output of the consensus protocol in which each player $i$ uses $k_i-1$ as input. Note that, this way, the value $k$ that players choose depends on the scheduler (in fact, we can view $k$ as the number of times that the scheduler plans to schedule the mediator without delivering any message to it, and it can be $0$).

The way in which players alternate between these computations goes as follows: First players simulate that the scheduler is scheduled a number of times without receiving any message. If the mediator sends a message this way in the simulation, they repeat this procedure. Else, if the mediator sends no messages, they perform a simulation that the mediator received at least one message.
By simulating the mediator this way, it is guaranteed that players compute at most $O(N)$ terms of the form $h_{d,k}$ (note that now $k$ is not the number of times the mediator has been scheduled), since the mediator receives at most $N$ messages and sends at most $N$. Since each of the primitives used in this paper uses $O(nc)$ messages in expectation, our claim follows.

Finally, note that since honest players are sending the tags of each of their messages to the scheduled, this might incur in an overhead in the total number of messages. However, a closer look at the proof of Theorem~\ref{thm:main} shows that these tags are only useful for the scheduler to know when each player terminates the computation of each $h_{d,k}$ and each (VSS, $j$, $\ell$). Thus, players instead of revealing the tag of each message can just signal the scheduler whenever they finish such procedures. Since there is at most $O(N)$ of such procedures, the total number of such messages is $O(N)$ as well.
}

\subsection{The proof of Theorem~\ref{thm:main-eps}}
The protocol $\vec{\pi}'$ for Theorem~\ref{thm:main-eps} is analogous
to that for 
Theorem~\ref{thm:main}, except that we use the VSS and CC 
implementations of BKR instead of those of BCG.
The proof of Theorem~\ref{thm:main-eps}(a) is then identical to 
that of Theorem~\ref{thm:main}(a).  Since it can be easily shown
that the VSS and CC implementations constructed by BKR
$\epsilon$-$(t,t+1)$-coterminate, Theorem~\ref{thm:main-eps}(b)
follows. 
\commentout{
It can
be easily shown that both of the schemes constructed by BKR
$\epsilon$-$(t,t+1)$-coterminate.
} 


\section{Conclusion}\label{sec:conclusion}

We have shown how to simulate arbitrary protocols securely in an asynchronous
setting in a ``bidirectional'' way (as formalized by our notion of
bisimulation).  This bidirectionality plays a key role our application
of these results in a companion paper; we believe that it might turn
out to be useful in other settings as well.  While this property
holds for the BCG function simulation, proving that we can simulate
arbitrary protocols so that it holds seems to be nontrivial.

\commentout{
Our construction may not be message-efficient.  Indeed, 
the number of messages required to simulate a protocol
$\vec{\pi} + \pi_d$ using our construction may be unbounded. To see 
this, note that players compute $h_{d,k}$ each time that the mediator
is scheduled in the simulation. Since the number of times that the mediator
can be scheduled is unbounded, the number of messages sent in
$\vec{\pi}'$ can be unbounded as well.
}
Our construction may not be message-efficient in the general
case. However, for responsive mediators, a small modification allows
us to bound the expected number of messages by a function that is
polynomial in the 
number of players $n$ and the maximum number of messages $N$ sent in
the setting with the mediator,
and linear in $c$, the number of gates in a circuit that implements
the mediator's protocol.
It is still an open problem whether all protocols $\vec{\pi} + \pi_d$
can be implemented in a way that the  
expected number of messages sent by honest agents is bounded by some
function of $n$, $N$, and $c$.

\commentout{

}
\bibliographystyle{chicagor}
\bibliography{joe,game1}

\end{document}